\pgfplotsset{compat=1.18}
\tikzset{every picture/.style={line width=0.75pt}}
\tikzset{cross/.style={cross out, draw=black, fill=none, minimum size=2*(#1-\pgflinewidth), inner sep=0pt, outer sep=0pt}, cross/.default={2pt}}
\def\BibTeX{{\rm B\kern-.05em{\sc i\kern-.025em b}\kern-.08em
		T\kern-.1667em\lower.7ex\hbox{E}\kern-.125emX}}
\providecommand{\DIFdel}[1]{} %Don't show deleted text
\renewcommand{\DIFdel}[1]{\color{red}\sout{#1}\color{black}} %Don't show deleted text
\providecommand{\DIFadd}[1]{} %Don't show deleted text
\renewcommand{\DIFadd}[1]{\color{blue}#1\color{black}}
\newcommand{\tagg}[1]{%
	\colorbox{green}{#1}
}
\theoremstyle{definition}
\newtheorem{definition}{Definition}
\newtheorem{theorem}{Theorem}
\newtheorem{lemma}{Lemma}
\newtheorem{proposition}{Proposition}
\newcommand{\ie}{\textit{i.e.,~}}
\newcommand{\eg}{\textit{e.g.,~}}
\newcommand{\resp}{resp., }
\def\proofEndSysModel{Each result in this paper is associated with an intuition of the proof, and the formal proofs are in Appendix~\ref{appendix:proofs}}
\def\proofSpring{Appendix~\ref{proof:spring}}
\def\proofPropPfrScModel{Appendix~\ref{proof:fifo:pfr-sff-explain}}
\def\proofLemmaMinimumScExplain{Appendix~\ref{proof:lem:fifo-sff:no-explain:minimum-sc-sff}}
\def\proofIrNoScExplain{Appendix~\ref{proof:thm:fifo-sff:no-explain:ir-has-no-explaination}}
\def\proofIrStrictSc{Appendix~\ref{proof:thm:ir-service-curves:strict-sc}}
\def\proofPropLimitIrStrictSc{Appendix~\ref{proof:prop:ir-service-curves:strict-sc:limit}}
\def\proofThmLimitIndiv{Appendix~\ref{proof:thm:ir-service-curves:limit-ir-residual}}
\def\proofThmLimitAggreg{Appendix~\ref{proof:thm:ir-service-curves:limit-sc}}
\def\BibTeX{{\rm B\kern-.05em{\sc i\kern-.025em b}\kern-.08em
    T\kern-.1667em\lower.7ex\hbox{E}\kern-.125emX}}
\begin{document}
\begin{acronym}[CP-OFDMX]
	\acro{XXXXX}{\tagg{List of acronyms for Holly}\tagg{Will be removed in final version}}
	\acro{ACP}{aggregate computation pipeline}
	\acro{AFDX}{Avionics Full-dupleX switched Ethernet}
	\acro{ATS}{asynchronous traffic shaping}
	\acro{AVB}{Audio Video Bridging}
	\acro{CAN}{Controller Area Network}
	\acro{CBQS}{class-based queuing subsystem}
	\acro{CBS}{credit-based scheduler}
	\acro{CDT}{control-data traffic}
	\acro{CEV}{crew exploration vehicle}
	\acro{COTS}{commercial off the shelf}
	\acro{DAG}{directed acyclic graph}
	\acro{DetNet}{deterministic networking}
	\acro{DNC}{deterministic network calculus}
	\acro{ETE}{end-to-end}
	\acro{EP}{elimination-pending}
	\acro{FIFO}{first in, first out}
	\acro{FP}{fixed-priority}
	\acro{FRER}{frame replication and elimination for redundancy}
	%\acro{GVBR}{generalized VBR}
	\acro{HOL}{head-of-line}
	\acro{HSR}{High-availability Seamless Redundancy}
	\acro{HTTP}{Hypertext Transfer Protocol}
	\acro{IEC}{International Electrotechnical Committee}
	\acro{IEEE}{Institute of Electrical and Electronics Engineers}
	\acro{IETF}{Internet Engineering Task Force}
	\acro{IMA}{Integrated Modular Avionics}
	\acro{iPRP}{IP parallel redundancy protocol}
	\acro{IR}{interleaved regulator}
	\acro{LCAN}{low-cost acyclic network}
	\acro{MCU}{micro-controller unit}
	\acro{MFAS}{minimum feedback arc set}
	\acro{MFVS}{minimum feedback vertex set}
	\acro{MOST}{Media Oriented Systems Transport}
	\acro{NC}{network calculus}
	\acro{NoC}{networks on chip}
	\acro{OSI}{Open Systems Interconnection}
	\acro{PBOO}{pay burst only once}
	\acro{PFR}{per-flow regulator}
	\acro{PMOC}{pay multiplexing only at convergence points}
	\acro{PMOO}{pay multiplexing only once}
	\acro{PEF}{packet-elimination function}
	\acro{POF}{packet-ordering function}
	\acro{PRF}{packet-replication function}
	\acro{PREF}[PREFs]{packet replication and elimination functions}
	\acro{PREOF}[PREOFs]{packet replication, elimination and ordering functions}
	\acro{PRP}{Parallel Redundancy Protocol}
	\acro{QoS}{quality of service}
	\acro{RAMS}{Reliability, Availability, Maintainability, and Safety}
	\acro{RBO}{reordering byte offset}
	\acro{REG}{regulator}
	\acro{RSTP}{Rapid Spanning Tree Protocol}
	\acro{RTE}{Real-Time Ethernet}
	\acro{RTO}{reordering late time offset}
	\acro{SFA}{single-flow analysis}
	\acro{SNC}{stochastic network calculus}
	\acro{TAS}{Time-Aware Shaping}
	\acro{TCP}{Transmission Control Protocol}
	\acro{TFA}{total-flow analysis}
	\acro{TP}{turn prohibition}
	\acro{TSN}{time-sensitive networking}
	\acro{VBR}{variable-bit-rate}
	\acro{VIU}{vehicle interface unit}
\end{acronym}

\title{Network-Calculus Service Curves of the\\Interleaved Regulator}

\author{\IEEEauthorblockN{Ludovic Thomas}
\IEEEauthorblockA{%
%\textit{I\&C} \\
École Polytechnique Fédérale de Lausanne (EPFL)
%\\
%Lausanne, Switzerland \\
%ludovic.thomas@woolab.fr
}
\and
\IEEEauthorblockN{Jean-Yves Le Boudec}
\IEEEauthorblockA{%
%\textit{I\&C}  \\
École Polytechnique Fédérale de Lausanne (EPFL)
%\\
%Lausanne, Switzerland \\
%jean-yves.leboudec@epfl.ch
}
}

\maketitle

\begin{abstract}
    The interleaved regulator (implemented by IEEE TSN Asynchronous Traffic Shaping) is used in time-sensitive networks for reshaping the flows with per-flow contracts.
When applied to an aggregate of flows that come from a FIFO system, an interleaved regulator
that reshapes the flows with their initial contracts
does not increase the worst-case delay of the aggregate.
This shaping-for-free property supports the computation of end-to-end latency bounds 
and the validation of the network's timing requirements.
A common method to establish the properties of a network element is to obtain a network-calculus service-curve model.
The existence of such a model for the interleaved regulator remains an open question.
If a service-curve model were found for the interleaved regulator, then the analysis of this mechanism would no longer be limited to the situations where the shaping-for-free holds, which would widen its use in time-sensitive networks.
In this paper, we investigate if network-calculus service curves can capture the behavior of the interleaved regulator.
We find that an interleaved regulator placed outside of the shaping-for-free requirements (after a non-FIFO system) can yield unbounded latencies.
Consequently, we prove that no network-calculus service curve exists to explain the interleaved regulator's behavior.
It is still possible to find non-trivial service curves for the interleaved regulator.
However, their long-term rate cannot be large enough to provide any guarantee (specifically, we prove that for the regulators that process at least four flows with the same contract, the long-term rate of any service curve is upper bounded by three times the rate of the per-flow contract).

\end{abstract}
% !TeX spellcheck = en_US
% !TeX root = main
\begin{IEEEkeywords}
Network Calculus, Service Curve, Interleaved Regulator (IR), Time-Sensitive Networking (TSN), Asynchronous Traffic Shaping (ATS)
\end{IEEEkeywords}

\section{Introduction}
\label{sec:intro}

Time-sensitive networks, as specified by the \ac{TSN} task group of the \ac{IEEE}, support safety-critical applications in the aerospace, automation, and automotive domains \cite{farkasTSNBasicConcepts2018}.
To do so, time-sensitive networks provide a deterministic service with guaranteed bounded latencies.

These guarantees must be validated through a deterministic worst-case timing analysis that can be performed with network calculus.
This mathematical framework obtains worst-case performance bounds by modeling the flows with the concept of arrival curves and the network elements with the concept of service curves.
Service curves constrain the minimum amount of service that network elements provide to a flow or aggregate of flows.

Time-sensitive networks can also rely on a set of mechanisms that improve the traditional forwarding process of an output port.
The traffic regulators are such hardware elements that support higher scalability and efficiency of time-sensitive networks.
Placed after a multiplexing stage, they reshape the flows with per-flow shaping curves (by delaying packets if required) and remove the increase of the flows' burstiness due to their interference with other flows.

Traffic regulators come in two flavors: \acfp{PFR} and \acfp{IR}. 
A \ac{PFR} processes a unique flow.
It stores the packets of the flow in a \ac{FIFO} queue and releases the head-of-line packet as soon as doing so does not violate the configured shaping curve for the flow.
In contrast, the \ac{IR} processes an aggregate of flows with a unique \ac{FIFO} queue.
Each flow has its own configured shaping curve, but the \ac{IR} analyses only the head-of-line packet and releases it as soon as doing so does not violate the shaping curve of the associated flow. 
The packets in the \ac{IR} queue are blocked by the head-of-line even if they belong to other flows.
This second flavor is implemented within \ac{IEEE} \ac{TSN} under the name \acf{ATS} \cite{ieee8021Qcr}.

In time-sensitive networks that contain traffic regulators, end-to-end latency bounds are obtained from the knowledge of the shaping curves enforced by the regulators and from the essential ``shaping-for-free'' property.
It states that the traffic regulators do not increase the worst-case latency of the flow (or of the flow aggregate) under certain conditions that depend on the type of the regulator (\ac{PFR} or \ac{IR}).
Most analyses of traffic regulators rely on this property.

For the \ac{PFR}, the shaping-for-free property is well understood because 
a \ac{PFR} with a concave shaping curve 
%and a packetized input 
can be modeled with a context-agnostic service curve, \ie a service curve that only depends on the configuration of the \ac{PFR} but not on the context in which the \ac{PFR} is placed.
This 
%context-agnostic 
service curve proves the shaping-for-free property when the \ac{PFR} is placed in the appropriate context.
When the \ac{PFR}'s context deviates from the shaping-for-free requirements, the context-agnostic service curve still provides performance bounds for the \ac{PFR}, and slight deviations of the context lead to bounded delay penalties.
On the contrary, no known non-trivial context-agnostic service curve exists for the \ac{IR}.
The non-trivial service-curve models published in the literature \cite{mohammadpourLatencyBacklogBounds2018,zhaoQuantitativePerformanceComparison2022} are context dependent and always assume that the shaping-for-free property holds. 
Without a context-agnostic service-curve model, performance bounds cannot be obtained for an \ac{IR} placed outside the shaping-for-free requirements, which restrains its use in time-sensitive networks.

In this paper, we investigate if the behavior of the \ac{IR} can be modeled by a context-agnostic network-calculus service curve.
Our contributions are:

\noindent $\bullet$ As opposed to the shaping-for-free property when the \ac{IR} is placed after a \ac{FIFO} system, we prove that the \ac{IR} can yield unbounded latencies when placed after a non-\ac{FIFO} system, even if the latter is \ac{FIFO}-per-flow and lossless. 
%We develop Spring, an adversarial packet sequence that generates unbounded latencies to prove this result. 

\noindent $\bullet$ We prove that the shaping-for-free property of the \ac{IR} cannot be explained by any network-calculus service-curve model.

\noindent $\bullet$ For any \ac{IR} that processes at least four flows, we prove that any context-agnostic service curve for an individual flow is upper bounded by a constant.

\noindent $\bullet$ We exhibit a strict service curve of the \ac{IR} and a function that upper bounds any other context-agnostic strict service curve.

\noindent $\bullet$ For any \ac{IR} that processes at least four flows, we show that the long-term rate of any context-agnostic service curve is upper bounded by three times the rate of the per-flow contract.

The paper is organized as follows.
We provide the background on network-calculus service curves and regulators in Section~\ref{sec:background}.
We discuss the related work in Section~\ref{sec:related-work} and provide the system model in Section~\ref{sec:sysmodel}.
We then analyze the role of the \ac{FIFO} assumption in the shaping-for-free property of the \ac{IR} in Section~\ref{sec:fifo-sff}.
Afterward, we discuss the context-agnostic service curves of the \ac{IR} in Section~\ref{sec:ir-service-curves}.
We provide our conclusive remarks in Section~\ref{sec:conclusion}.

\section{Background}
\label{sec:background}

In time-sensitive networks, performance metrics such as flows' end-to-end latencies have to be bounded in the worst case, not in average.
The network-calculus framework \cite{leboudecNetworkCalculus2001,Chang2000,bouillardDeterministicNetworkCalculus2018} can provide such performance bounds. 
It describes the data traffic with cumulative functions, such as $R^\mathtt{A}$, where $R^\mathtt{A}(t)$ is the amount of data that cross the observation point $\mathtt{A}$ between an arbitrary time reference $0$ and $t$. 
Cumulative functions belong to $\mathfrak{F}_0=\{\mathfrak{f}:\mathbb{R}^+\rightarrow\mathbb{R}^+|\mathfrak{f}(0)=0\}$.

\subsection{Network-Calculus Service Curves}
\label{sec:background:nc-sc}

A causal network system $S$ offers a service curve $\beta$ if (a) $\beta$ is wide-sense increasing and (b) for any input cumulative function $R^\mathtt{A}(t)$, the resulting output traffic $R^\mathtt{B}(t)$ verifies
\begin{equation}\label{eq:background:sc-def}
    \forall t\ge 0, \quad R^\mathtt{B}(t) \ge (R^{\mathtt{A}}\otimes\beta)(t)
\end{equation}
where $\otimes$ describes the min-plus convolution\footnote{Defined by $(R^{\mathtt{A}}\otimes\beta):t\mapsto \inf_{s\ge0}(R^{\mathtt{A}}(s)+\beta(t-s))$}.
%We use the original definition of \cite[Def 1.3.1]{leboudecNetworkCalculus2001} that requires $\beta$ to be wide-sense increasing. 
%In \cite[Def 5.3]{bouillardDeterministicNetworkCalculus2018}, the authors use a different definition that allows for general functions.
Common service curves are of the form \emph{rate latency} $\beta_{R,T}:t\mapsto R\cdot|t-T|^+$ with rate $R$ and latency $T$, where $|\cdot|^+ = \max(\cdot,0)$.
% In \cite[\S 5.2.1]{bouillardDeterministicNetworkCalculus2018}, they discuss an interesting equivalence that will be useful in this paper:
% \begin{proposition}[Wide-Sense Increasing Replacement]
%     If $\beta\in\mathfrak{F}_0$ is not wide-sense increasing but verifies Equation~\eqref{eq:background:sc-def} for any cumulative funtion $R^\mathtt{A}(t)$, then $S$ offers the service-curve $\beta' \triangleq \beta \overline{\oslash} \boldsymbol{0}$, where $\boldsymbol{0}:t\mapsto 0$ is the zero function and $\overline{\oslash}$ is the max-plus deconvolution\footnote{Defined bys $\mathfrak{f}\overline{\oslash}\mathfrak{g}:t\mapsto \inf_{u\ge0} \mathfrak{f}(t+u)-\mathfrak{g}(u)$. Here, $\beta \overline{\oslash} \boldsymbol{0}:t\mapsto\inf_{u\ge 0}\beta(t+u)$}.
% \end{proposition}
% %Still in \cite[Def 5.3]{bouillardDeterministicNetworkCalculus2018}, $\beta$ 

Some network systems provide stronger guarantees through a \emph{strict} service curve. 
A causal network system $S$ offers a strict service curve $\beta^{\text{strict}}$ if (a) $\beta^{\text{strict}}$ is wide-sense increasing and (b) during any interval $]s,t]$ in which the system is never empty (a so-called \emph{backlogged period}), the output $R^\mathtt{B}$ verifies
\begin{equation}\label{eq:background:strict-sc-def}
    \forall t\ge s \ge 0, \quad R^\mathtt{B}(t) - R^\mathtt{B}(s) \ge \beta^{\text{strict}}(t-s)
\end{equation}

Such a system then also offers $\beta^{\text{strict}}$ as a (simple) service curve \cite[Prop. 1.3.5]{leboudecNetworkCalculus2001}.
We say that a regulator offers a \emph{context-agnostic} service curve [\resp strict service curve] if \eqref{eq:background:sc-def} [\resp \eqref{eq:background:strict-sc-def}] holds for any packetized input $R^\mathtt{A}(t)$, without any other assumption on the upstream systems.

Reciprocally, %traffic flows are modeled by arrival curves: 
a wide-sense increasing function $\alpha\in\mathfrak{F}_0$ is an arrival curve for the traffic at $\mathtt{A}$ if
% the cumulative function $R^\mathtt{A}$ verifies 

\begin{equation}\label{eq:background:strict-ac-def}
    \forall\ t \ge s \ge 0, \quad R^\mathtt{A}(t) - R^\mathtt{A}(s) \le \alpha(t-s)
\end{equation}
We also note $R^\mathtt{A}\sim\alpha$ and say that the traffic is $\alpha$-constrained.
Common arrival curves are of the form \emph{leaky bucket} $\gamma_{r,b}$ with a rate $r$ and a burst $b$: $\forall t>0, \gamma_{r,b}(t)=rt+b$.

Given some arrival-curve and service-curve constraints, network-calculus results provide delay and backlog bounds at a network element.
A common approach for computing end-to-end performance bounds in time-sensitive networks consists in obtaining an arrival-curve model for each flow and a service-curve model for each network element.
Service-curve models for most \ac{IEEE} \ac{TSN} mechanisms can be found in \cite{maileNetworkCalculusResults2020, zhaoLatencyAnalysisMultiple2020}.

% The network-calculus framework provide performance upper-bounds on the delay through, on the backlog in an on the output arrival curve at any network system, provided the input arrival curve $\alpha$ and the system service curve $\beta$ of the network system.

% Consider a system $S$ ath the input of the 

\begin{figure}
    \resizebox{\linewidth}{!}{\begin{tikzpicture}
    \tikzstyle{n} = [draw]
    \node[n] at (0,0) (pfr) {$Z$};
    \node[n] at (3,0) (gs) {$Z'$};
    \node[n, anchor=west] at ($(gs.east)+(0.75,0)$) (pl) {Packetizer};

    \draw[->] ($(pfr.west)+(-0.5,0)$) -- (pfr.west) node[pos=0,anchor=center](pfrIn){};
    \draw[->] (pfr.east) -- ($(pfr.east)+(0.5,0)$) node[pos=1,anchor=center](pfrOut){};
    \draw[->] ($(gs.west)+(-0.5,0)$) -- (gs.west) node[pos=0,anchor=center](modelIn){};
    \draw[->] (gs.east) -- (pl.west);
    \draw[->] (pl.east) -- ($(pl.east)+(0.5,0)$) node[pos=1,anchor=center](modelOut){};

    %\node[anchor=south] at (gs.north) (gst1) {Greedy};
    \node[anchor=west, xshift=-0.1cm] at (gs.south east) (gst2) {$\beta$};

    \node[draw, dashed, fit={(pfrIn) (pfr) (pfrOut)}] (pfrBox) {};
    \node[draw, dashed, fit={(gs) (pl) (modelIn) (modelOut) ([yshift=0.2cm] gst2.south)}] (modelBox) {};

    \path (pfr) -- (gs) node[pos=0.5, anchor=center]{$\Leftrightarrow$};
\end{tikzpicture}}
    \caption{\label{fig:sys-model:fluid-service-curve} Notations of Definition~\ref{def:fluid-service-curve}. $Z$ offers the \emph{fluid service curve} $\beta$ if it can be modelled as the concatenation of $Z'$ followed by a packetizer, where $Z'$ offers the service curve $\beta$.}
\end{figure}
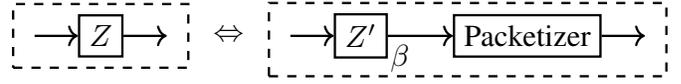

% \begin{itemize}
%     \item Definition, min-plus vs strict
%     \item Definition of context-agnostic ? Or we need another section like system-model or smth.
% \end{itemize}
\subsection{The Packetizer and Fluid Service Curves}
\label{sec:background:fluid-sc}

In packet-switching time-sensitive networks, the stream of data at an observation point $\mathtt{A}$ can either be fluid (bit-by-bit, \eg on the transmission links) or packetized (packet-by-packet, \eg within the switches).
A packetizer 
%is a causal, lossless and \ac{FIFO} system that 
transforms a fluid stream into a packetized stream by releasing the packet's bits only when the last bit is received.
It does 
%not offer any context-agnostic service curve, but it does 
not increase the end-to-end latency bounds \cite[Thm. 1.7.5]{leboudecNetworkCalculus2001}.

When a system $Z$ with packetized input and output can be split into a fluid service-curve element followed by a packetizer, we say that $Z$ offers a \emph{fluid} service curve.

\begin{definition}[Fluid service curve\label{def:fluid-service-curve}]
    Consider a function $\beta\in\mathfrak{F}_0$ and a system $Z$ with packetized input and output.
    We say that $Z$ offers $\beta$ as a \emph{fluid service curve} if there exists a system $Z'$ that offers the service curve $\beta$ such that $Z$ can be realized
    %\footnote{\ie for any input packet sequence, the resulting packet sequence at the output of $Z$ equals the resulting packet sequence at the output of the concatenation of $Z'$ and the packetizer.} 
    as the concatenation of $Z'$, followed by a packetizer (Figure~\ref{fig:sys-model:fluid-service-curve}).    
\end{definition}

%For a system $Z$ with packetized input and output, the service-curve property is stronger than the fluid service-curve property: 
%If $Z$ with packetized input and output offers the service curve $\beta$, then it also offers the fluid service curve $\beta$.
%If $\beta$ is a service curve of $Z$ with packetized input and output, then it is also a fluid service curve of $Z$.
% ($Z$ can be realized as the concatenation of itself and a packetizer).

%Reciprocally, if $Z$ offers a fluid service curve $\beta_{\text{fluid}}$, then it also offers a normal but weaker service curve $\beta:t\mapsto |\beta_{\text{fluid}}(t)-l_{\max}|^+$, where $l_{\max}$ is the maximum packet size of the flow aggregate \cite[Thm. 1.7.1]{leboudecNetworkCalculus2001}.
% and $|x|^+\triangleq \max(0,x)$.
\subsection{Individual Service Curve for a Flow}
\label{sec:background:indiv-sc}

In time-sensitive networks, the service modeled by the service curves of Sections~\ref{sec:background:nc-sc} and \ref{sec:background:fluid-sc} is shared between the flows of the aggregate $\mathcal{F}$.
In \eqref{eq:background:sc-def}, the cumulative arrival function $R^{\mathtt{A}}$ of the aggregate at $\mathtt{A}$ is the sum of the individual arrival functions $\{R_f^{\mathtt{A}}\}_{f\in\mathcal{F}}$ for each flow $f$ in the aggregate $\mathcal{F}$:
$\forall t\in\mathbb{R}^+, R^{\mathtt{A}}(t)=\sum_{f\in\mathcal{F}}R_f^{\mathtt{A}}(t)$.

We say that a system $S$ offers to flow $g$ the \emph{individual} arrival curve $\beta_g$ if 
(a) $\beta_g$ is wide-sense increasing, and 
(b) for any cumulative function $R_g^{\mathtt{A}}$ of the flow $g$ at the input $\mathtt{A}$ of $S$, the cumulative function $R_g^{\mathtt{B}}$ of $g$ at its output $\mathtt{B}$ verifies
\begin{equation}\label{eq:background:indiv-sc:def}
    \forall t\ge 0, R_g^{\mathtt{B}}(t)\ge (R_g^{\mathtt{A}} \otimes \beta_g)(t)
\end{equation}

\subsection{Traffic Regulators and their Shaping-For-Free Properties}
\label{sec:related-work:traffic-reg-sff}

Traffic regulators are hardware elements placed before a multiplexing stage to remove the increased burstiness due to interference with other flows in previous hops. 
They enable the computation of guaranteed latency bounds in networks with cyclic dependencies \cite{mohammadpourLatencyBacklogBounds2018,spechtUrgencyBasedSchedulerTimeSensitive2016, thomasCyclicDependenciesRegulators2019}. They come in two flavors.

A \acf{PFR} is a causal, lossless, \ac{FIFO} system configured for a unique flow $f$ with a shaping curve $\sigma_f$ (Figure~\ref{fig:background:pfr-ir}a).
It stores the packets of $f$ in order of arrival and releases the head-of-line (\acs{HOL}) packet at the earliest time such that the resulting output has $\sigma_f$ as an arrival curve.
%as soon as doing so does not violate the shaping constraint $\sigma_f$.
In a network with multiple flows, there is one \ac{PFR} per flow.

The \acf{IR} is a causal, lossless, and \ac{FIFO} system that processes an aggregate $\mathcal{F}=\{f_1,f_2,\dots\}$ of several flows, each one with its own shaping curve ($\sigma_{f_1},\sigma_{f_2},\dots$, see Figure~\ref{fig:background:pfr-ir}b).
It stores all the packets of the aggregate $\mathcal{F}$ in order of arrival into a single \ac{FIFO} queue and only looks at the \acf{HOL} packet.
The \ac{HOL} packet $p$ is released as soon as doing so does not violate the configured shaping curve for the associated flow $f_i$:
Packet $p$ can either be immediately released (if the resulting traffic for $f_i$ at the \ac{IR}'s output is $\sigma_{f_i}$-constrained) or delayed to the earliest date that ensures that $f_i$ is $\sigma_{f_i}$-constrained at the \ac{IR}'s output.
This delay depends on the shaping curve $\sigma_{f_i}$ for the associated flow $f_i$ and the history of departure dates for previous packets of the same flow.
During this delay, any other packet $p'$ in the queue is blocked by the \ac{HOL} packet $p$, even if $p'$ belongs to another flow $f_j$ and even if $p'$ could be immediately released without violating the shaping curve $\sigma_{f_j}$ for its flow $f_j$.

\begin{figure}
    \resizebox{\linewidth}{!}{\begin{tikzpicture}
    \tikzstyle{reg} = [minimum width=2.4cm, minimum height=1cm, draw]
    \tikzstyle{regpfr} = [minimum width=2.4cm, minimum height=0.5cm, draw]
    \tikzstyle{f} = [blue]
    \tikzstyle{g} = [OliveGreen, dashdotted]
    \tikzstyle{h} = [dashed, purple]

    \node[regpfr,anchor=south] at (0,0) (pfr) {};
    \node[regpfr,anchor=north] at (0,0) (pfrb) {};
    \node[anchor=west, reg] at ($(pfr.south east)+(2.5,0)$) (ir) {};

    \foreach \n in {pfr,ir,pfrb}{
        \foreach \x in {0.4,0.8,1.2,1.6,2.0}{
            \draw ($(\n.north east)+(-\x,0)$) -- ($(\n.south east)+(-\x,0)$);
        }
        \draw[fill={black!20}] ([xshift=-\pgflinewidth/2,yshift=-\pgflinewidth/2]\n.north east) --  ([yshift=-\pgflinewidth/2]$(\n.north east)+(-0.4,0)$) -- ([yshift=\pgflinewidth/2]$(\n.south east)+(-0.4,0)$) -- ([xshift=-\pgflinewidth/2,yshift=\pgflinewidth/2]\n.south east) -- cycle;
    }

    \draw[->, f] ($(pfr.west)+(-0.8,0)$) -- (pfr.west) node[pos=0,left]{$f_1$};
    \draw[->, f] (pfr.east) -- ($(pfr.east)+(0.8,0)$);
    \draw[->, g] ($(pfrb.west)+(-0.8,0)$) -- (pfrb.west) node[pos=0,left]{$f_2$};
    \draw[->, g] (pfrb.east) -- ($(pfrb.east)+(0.8,0)$);
    \draw[->, rounded corners=0.2cm, f] ($(ir.west)+(-0.8,0.4)$) -|  ($(ir.west)+(-0.4,0)$)  node[pos=0,left]{$f_1$} -- (ir.west);
    \draw[->, rounded corners=0.2cm, h] ($(ir.west)+(-0.8,-0.4)$) -|  ($(ir.west)+(-0.4,0)$)  node[pos=0,left]{\dots} -- (ir.west);
    \draw[->, g] ($(ir.west)+(-0.8,0)$) -- (ir.west) node[pos=0,left]{$f_2$};
    \draw[->] (ir.east) -- ($(ir.east)+(0.8,0)$);

    \node[anchor=north] at ([yshift=-0.3cm] pfrb.south) {\makecell{(a)~\acfp{PFR}\\One per flow (two here).}};
    \node[anchor=north] at ([yshift=-0.3cm] ir.south) {\makecell{(b)~\acf{IR}\\Only one for the aggregate.}};

    \node[anchor=east, f] at ($(pfr.north east)+(1,0.3)$) (pfrConf) {$\sigma_{f_1}$};
    \node[anchor=east, g] at ($(pfrb.south east)+(1,-0.3)$) (pfrConfb) {$\sigma_{f_2}$};
    \node[anchor=east] at ($(ir.north east)+(1.5,0.5)$) (irConf) {$\left\lbrace\begin{aligned}{\color{blue}\sigma_{f_1}}\\{\color{OliveGreen}\sigma_{f_2}}\\{\color{purple} \dots}\end{aligned}\right.$};

    \draw[->] (pfrConf.west) -- ++(-0.2,0) -- (pfr.north east);
    \draw[->] (pfrConfb.west) -- ++(-0.2,0) -- (pfrb.south east);
    \draw[->] (irConf.west) --  ++(-0.2,0) -- (ir.north east); 

\end{tikzpicture}}
    \caption{\label{fig:background:pfr-ir} Two flavors of traffic regulators.
    With \acp{PFR}, we need one \acs{PFR} per flow.
    In contrast, the \acs{IR} uses a single \acs{FIFO} queue to shape several flows.}
\end{figure}
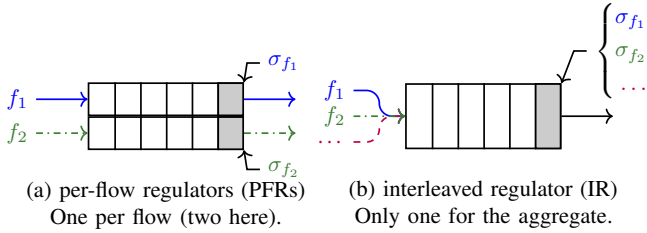 

Traffic regulators can delay individual packets, but there exist specific conditions in which they do not increase the worst-case latency bounds of the flows.
This fundamental \emph{shaping-for-free} property is central in the analysis of time-sensitive networks with traffic regulators.
It slightly differs for the two flavors.

\begin{theorem}[\label{thm:pfr-for-free}Shaping-for-free property of the \ac{PFR} {\cite[Thm. 3]{leboudecTheoryTrafficRegulators2018}}]
    Consider a flow $f$ with input arrival curve $\alpha_f$ that crosses in sequence a causal system $S$ followed by a \ac{PFR} (Figure~\ref{fig:related-work:shaping-for-free:pfr}).
    %Assume that $\alpha_f$ is concave and $\lim_{\substack{t\rightarrow 0^+}}\alpha_f(t)$ is greater than the maximum packet length of $f$.
    If the \ac{PFR} is configured with $\sigma_f\ge \alpha_f$
    and if $S$ is \ac{FIFO} for $f$, then the worst-case delay $D_{f,S+\ac{PFR}}$ of $f$ through the concatenation is equal to the worst-case delay $D_{f,S}$ of the flow through the previous system $S$.
\end{theorem}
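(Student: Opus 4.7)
The plan is to prove two inequalities. First, $D_{f,S+\ac{PFR}} \ge D_{f,S}$ is immediate because the \ac{PFR} is causal and lossless: any packet of $f$ can only leave the \ac{PFR} after it has left $S$. The substantive content is the reverse inequality $D_{f,S+\ac{PFR}} \le D_{f,S}$. I would prove it by working with the cumulative arrival functions of $f$ at the input of $S$, between $S$ and the \ac{PFR}, and at the output of the \ac{PFR} (call them $R^{\mathtt{A}}, R^{\mathtt{B}}, R^{\mathtt{C}}$) and by establishing the horizontal-deviation bound
\begin{equation*}
    \forall t\ge 0,\quad R^{\mathtt{C}}(t + D_{f,S}) \ge R^{\mathtt{A}}(t).
\end{equation*}
Because both $S$ and the \ac{PFR} are \ac{FIFO} for $f$, so is their concatenation, and this horizontal-deviation bound then translates directly into the desired delay bound.

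To handle the \ac{PFR}, I would model it as a greedy shaper with shaping curve $\sigma_f$ followed by a packetizer, assuming without loss of generality that $\sigma_f$ is sub-additive with $\sigma_f(0)=0$ (otherwise I replace it by its sub-additive closure, which leaves the \ac{PFR}'s behaviour unchanged). The greedy-shaper theorem then gives, at the fluid level, $R^{\mathtt{C}} = R^{\mathtt{B}} \otimes \sigma_f$, and the subsequent packetizer does not increase the worst-case delay by the packetizer result cited in Section~\ref{sec:background:fluid-sc}.

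The core of the proof is then a direct bound on the infimum defining the min-plus convolution. For every $u \in [0, t+D_{f,S}]$, I would use two ingredients. On one hand, since $S$ is \ac{FIFO} for $f$ with worst-case delay $D_{f,S}$, one has $R^{\mathtt{B}}(t + D_{f,S} - u) \ge R^{\mathtt{A}}\bigl((t-u)^+\bigr)$. On the other hand, since $\alpha_f$ is an arrival curve at $\mathtt{A}$ and $\alpha_f \le \sigma_f$, one has $R^{\mathtt{A}}(t) - R^{\mathtt{A}}\bigl((t-u)^+\bigr) \le \sigma_f(u)$. Summing these two inequalities lower-bounds each term of the infimum by $R^{\mathtt{A}}(t)$, which yields the horizontal-deviation bound.

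The main obstacle is the bookkeeping at the boundary case $u > t$: there $(t-u)^+ = 0$, and obtaining $R^{\mathtt{A}}(t) \le \sigma_f(u)$ requires chaining the arrival-curve inequality with monotonicity, namely $R^{\mathtt{A}}(t) \le \alpha_f(t) \le \alpha_f(u) \le \sigma_f(u)$, using $R^{\mathtt{A}}(0)=0$ and the wide-sense monotonicity of $\alpha_f$. Once this case is cleared, the remaining ingredients (the greedy-shaper representation, the FIFO delay bound on $S$, and the packetizer-transparency result) are classical network-calculus manipulations and assemble immediately into the proof.
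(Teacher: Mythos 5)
Your plan matches the paper's own re-derivation of Theorem~\ref{thm:pfr-for-free} via Proposition~\ref{prop:fifo-sff:no-explain:pfr-sff-explain}: decompose the \ac{PFR} as a (fluid) greedy shaper with curve $\sigma_f$ followed by a packetizer, propagate the bounded-delay service curve $\delta_{D_{f,S}}$ of $S$ through the greedy shaper, bound the resulting horizontal deviation by $D_{f,S}$, and finally invoke the packetizer-transparency result. Where the paper chains citations ([Thm.~1.4.6], [Thm.~1.4.2], [Lem.~1.5.2] of \cite{leboudecNetworkCalculus2001}), you unfold the min-plus convolution by hand; the two are equivalent, and your direct bound $R^{\mathtt{B}}(t+D_{f,S}-u)+\sigma_f(u)\ge R^{\mathtt{A}}(t)$ (for $u$ in the admissible range, with the $u>t$ branch handled by monotonicity) is exactly the content of $h(\sigma_f,\sigma_f\otimes\delta_{D_{f,S}})\le D_{f,S}$.

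One point is not merely bookkeeping, though. You write that you may ``assume without loss of generality that $\sigma_f$ is sub-additive with $\sigma_f(0)=0$'' and then decompose the \ac{PFR} as greedy shaper $+$ packetizer. Sub-additivity is not sufficient for this decomposition: \cite[Thm.~1.7.3]{leboudecNetworkCalculus2001} also needs the condition $\lim_{t\to 0^+}\sigma_f(t)\ge L_f^{\max}$ (equivalently, Eq.~(1.14) there). Without it, the packetized greedy shaper genuinely differs from the fluid greedy shaper followed by a packetizer, and the per-packet delay of the \ac{PFR} can exceed what the fluid model predicts, so the ``WLOG'' fails. The paper's Proposition~\ref{prop:fifo-sff:no-explain:pfr-sff-explain} makes this hypothesis explicit (concave $\sigma_f$ with $\lim_{t\to 0}\sigma_f(t)\ge L_f^{\max}$); you should too. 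Note also that Theorem~\ref{thm:pfr-for-free} as stated is cited from \cite[Thm.~3]{leboudecTheoryTrafficRegulators2018}, whose proof uses the $\Pi$-operator input--output characterization and does not go through the greedy-shaper decomposition, so it covers shaping curves for which your route (and the paper's re-derivation) does not directly apply. With the missing hypothesis added, your argument is sound; in its present form it is an over-claim of generality.
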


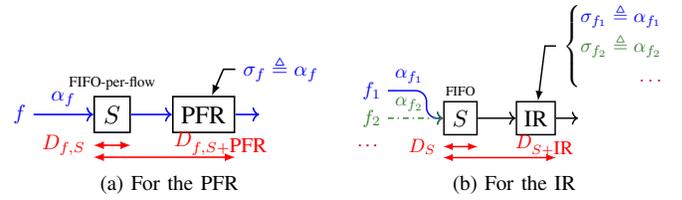
\begin{figure}\centering
    \def\lsize{0.49\linewidth}
    \subfloat[For the \acs{PFR}]{
        \label{fig:related-work:shaping-for-free:pfr}
        \resizebox{\lsize}{!}{\begin{tikzpicture}
    \tikzstyle{n} =[draw, minimum height=0.5cm, minimum width=0.5cm, scale=1.2]
    \tikzstyle{f} = [blue]
    \tikzstyle{g} = [red, dashdotted]
    \tikzstyle{h} = [dashed, purple]

    \node at (1,0) (xv) {};
    \node at (0.4,0) (xv2) {};

    \node[n] at (0,0) (pfr) {\ac{PFR}};
    \node[n] at ($(pfr.west) - (xv)$) (s) {$S$};
    
    \node[anchor=south, scale=0.7] at (s.north) {\ac{FIFO}-per-flow};

    \draw[->, f] ($(s.west) - (xv)$) -- (s.west) node[pos=0,left]{$f$} node[pos=0.5,above] {$\alpha_f$};
    \draw[->, f] (s.east) -- (pfr.west);
    \draw[->, f] (pfr.east) -- ($(pfr.east) + (xv2)$);

    \draw[latex-latex,red] ($(s.south west) + (0,-0.2)$) --  ($(s.south east) + (0,-0.2)$) node[pos=0, left] {$D_{f,S}$};
    \draw[latex-latex,red] ($(s.south west) + (0,-0.4)$) --  ($(pfr.south east) + (0,-0.4)$) node[pos=0.9, anchor=south, yshift=-0.1cm] {$D_{f,S+\acs{PFR}}$};

    \node[anchor=south west, f] at ($(pfr.north east) + (0,0.1)$) (pfrconf) {$\sigma_f\triangleq \alpha_f$};
    \draw[-latex] (pfrconf.west) -- ++(-0.2,0) -- (pfr);

\end{tikzpicture}}
    }   
    \subfloat[For the \acs{IR}]{
        \label{fig:related-work:shaping-for-free:ir}
        \resizebox{\lsize}{!}{\begin{tikzpicture}
    \tikzstyle{n} =[draw, minimum height=0.5cm, minimum width=0.5cm, scale=1.2]
    \tikzstyle{f} = [blue]
    \tikzstyle{g} = [OliveGreen, dashdotted]
    \tikzstyle{h} = [dashed, purple]

    \node at (1,0) (xv) {};
    \node at (0.3,0) (xvhalf) {};
    \node at (0.4,0) (xv2) {};
    \node at (0,0.5) (yv) {};

    \node[n] at (0,0) (ir) {\ac{IR}};
    \node[n] at ($(ir.west) - (xv)$) (s) {$S$};
    
    \node[anchor=south, scale=0.7] at (s.north) {\ac{FIFO}};

    \draw[->, f, rounded corners=0.2cm] ($(s.west) - (xv) + (yv)$) -| ($(s.west) - (xvhalf)$)  node[pos=0,left]{$f_1$} node[pos=0,anchor=south west] {$\alpha_{f_1}$} -- (s.west);
    \draw[->, g] ($(s.west) - (xv)$) -- (s.west) node[pos=0,left]{$f_2$} node[pos=0,anchor=south west] {$\alpha_{f_2}$};
    \path[->, h, rounded corners=0.2cm] ($(s.west) - (xv) - (yv)$) -| ($(s.west) - (xvhalf)$)  node[pos=0,left]{\dots} -- (s.west);
    \draw[->] (s.east) -- (ir.west);
    \draw[->] (ir.east) -- ($(ir.east) + (xv2)$);

    \draw[latex-latex,red] ($(s.south west) + (0,-0.2)$) --  ($(s.south east) + (0,-0.2)$) node[pos=0, left] {$D_{S}$};
    \draw[latex-latex,red] ($(s.south west) + (0,-0.4)$) --  ($(ir.south east) + (0,-0.4)$) node[pos=0.9, anchor=south, yshift=-0.1cm] {$D_{S+\acs{IR}}$};

    \node[anchor=south west] at ($(ir.north east) + (0,0.1)$) (irconf) {$\left\lbrace\begin{aligned}
        {\color{blue}\sigma_{f_1} \triangleq \alpha_{f_1}}\\
        {\color{OliveGreen}\sigma_{f_2} \triangleq \alpha_{f_2}}\\
        {\color{purple} \dots}
    \end{aligned}\right.$};
    \draw[-latex] (irconf.west) -- ++(-0.2,0) -- (ir);

\end{tikzpicture}}
    }
    \caption{\label{fig:related-work:shaping-for-free} Shaping-for-free properties of the traffic regulators. 
    For the \ac{PFR}, the system $S$ only needs to be \ac{FIFO}-per-flow.
    For the \ac{IR}, $S$ must be \ac{FIFO} for the aggregate.}
\end{figure}

\begin{theorem}[\label{thm:ir-for-free}Shaping-for-free property of the \ac{IR} {\cite[Thm. 4]{leboudecTheoryTrafficRegulators2018}}]
    Consider an aggregate $\mathcal{F}=\{f_1,f_2,\dots\}$ with input arrival curves $\{\alpha_{f}\}_{f\in\mathcal{F}}$ that crosses in sequence a causal system $S$ followed by an \ac{IR} (Figure~{\ref{fig:related-work:shaping-for-free:ir}}).
    %For any $i$, assume that $\alpha_{f_i}$ is concave and $\lim_{\substack{t\rightarrow 0^+}}\alpha_{f_i}(t)$ is greater than the maximum packet length of $f_i$.
    If the \ac{IR} is configured with $\forall i, \sigma_{f_i}\ge \alpha_{f_i}$ and if $S$ is \ac{FIFO} for the aggregate, then the worst-case delay $D_{S+\ac{IR}}$ of the aggregate $\mathcal{F}$ through the concatenation is equal to the worst-case delay $D_{S}$ of the aggregate through the previous system $S$ only.
\end{theorem}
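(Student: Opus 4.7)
The inequality $D_{S+\mathrm{IR}} \ge D_S$ is immediate because the IR is causal and can only delay packets, so the delay of a packet through $S$ alone is a lower bound on its delay through the concatenation. The whole work therefore lies in proving $D_{S+\mathrm{IR}} \le D_S$. My plan is to construct an explicit ``virtual release schedule'' that is feasible for the IR and that never serves any packet later than $D_S$ after its entry in $S$; invoking the greedy (work-conserving subject to the per-flow shaping constraints) nature of the IR then gives the bound.

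\textbf{Step 1: virtual schedule.} For every packet $p$ belonging to flow $f_i\in\mathcal{F}$, let $A(p)$ be its arrival time at $S$, $B(p)$ its departure time from $S$ (hence arrival time at the IR), and $C(p)$ its departure time from the IR. Define the virtual release time $\tilde{C}(p) \triangleq A(p) + D_S$. I would establish the following three properties of this schedule:
\begin{enumerate}[(i)]
    \item \emph{Causality at the IR:} $\tilde{C}(p) \ge B(p)$, which follows from $B(p)-A(p)\le D_S$.
    \item \emph{Order preservation:} two packets of the aggregate ordered at $A$ are ordered the same way at $B$ (since $S$ is FIFO for the aggregate) and at $\tilde{C}$ (since $\tilde{C}$ is a uniform shift of $A$); the three orderings coincide.
    \item \emph{Per-flow contract respected:} for each flow $f_i$, the virtual output is a time-shift by $D_S$ of the input of $f_i$ at $S$, hence admits $\alpha_{f_i}$ as arrival curve; since $\sigma_{f_i}\ge\alpha_{f_i}$, it also admits $\sigma_{f_i}$ as arrival curve.
\end{enumerate}

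\textbf{Step 2: the IR is no later than any feasible schedule.} The IR is the work-conserving FIFO device that releases its head-of-line packet $p$ (of some flow $f_i$) at the earliest date $\ge B(p)$ such that, given the previous release times of $f_i$, the resulting cumulative output of $f_i$ is still $\sigma_{f_i}$-constrained. A short induction on the FIFO order of packets in the IR shows that if $\tilde{C}$ is any schedule satisfying (i)--(iii), then $C(p)\le \tilde{C}(p)$ for every packet $p$: if the assertion holds for all predecessors of $p$, releasing $p$ at $\tilde{C}(p)$ is still feasible (by (iii)), and the IR picks the earliest such feasible time. Applied to the virtual schedule, this gives $C(p) - A(p) \le D_S$ for every packet, and hence $D_{S+\mathrm{IR}} \le D_S$.

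\textbf{Main obstacle.} The delicate step is Step~2, specifically the induction argument showing that the IR dominates every feasible schedule. Because releasing an earlier packet sooner can tighten the $\sigma_{f_i}$ constraint on later packets of the same flow (this is where sub-additivity/convexity of the shaping curves would ordinarily be invoked), one must argue carefully that making the release time of each past packet no later than in $\tilde{C}$ does not push the earliest feasible release time of the current packet beyond $\tilde{C}(p)$. The cleanest way to handle this is to express the per-flow feasibility condition as a min-plus inequality and use the monotonicity of the min-plus convolution with a wide-sense increasing $\sigma_{f_i}$; alternatively, one can reason directly on the tokens of the leaky bucket when $\sigma_{f_i}$ has the usual $\gamma_{r,b}$ form. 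Everything else (the immediate inequality, the FIFO propagation of the order, the shift-invariance of the arrival curve) is routine.
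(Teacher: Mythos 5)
The paper does not prove Theorem~\ref{thm:ir-for-free}; it states it and cites \cite[Thm.~4]{leboudecTheoryTrafficRegulators2018}, where the result is established via the input--output characterization of the \ac{IR}. Your proposal reproduces exactly the strategy of that reference: the lower bound $D_{S+\mathrm{IR}}\ge D_S$ is causality, and the upper bound comes from comparing the \ac{IR}'s output against the ``virtual'' schedule $\tilde{C}(p)=A(p)+D_S$ together with the minimality (greediness) of the \ac{IR}. Properties (i)--(iii) are the right ones to check, and the per-packet induction in your Step~2 is precisely the argument used to show that the \ac{IR}'s departure sequence is pointwise earliest among all sequences that are causal at $\mathtt{B}$, order-preserving, and per-flow $\sigma$-constrained.

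One remark about your ``main obstacle'' paragraph: the concern that releasing an earlier packet of $f_i$ sooner could \emph{tighten} the shaping constraint on a later packet of the same flow is backwards. Looking at the $\Pi$-operator of \eqref{eq:sysmod:pi-gamma-regulator} (more generally, the operator for any good shaping curve), the earliest admissible release date $\Pi^{\sigma_{f_i}}(D,L)_n=\max_{m<n}\{D_m+\sigma_{f_i}^{\downarrow}(\sum_{j=m}^n L_j)\}$ is \emph{monotone nondecreasing} in each past departure $D_m$. Hence if the induction hypothesis gives $C(q)\le\tilde{C}(q)$ for every predecessor $q$, then the earliest feasible release date of $p$ under the actual history is \emph{at most} the one under the virtual history, and the feasibility of $\tilde{C}$ gives $\tilde{C}(p)\ge\Pi^{\sigma_{f_i}}(\tilde C,L)\ge\Pi^{\sigma_{f_i}}(C,L)$; together with $\tilde{C}(p)\ge B(p)$ from (i) and $\tilde{C}(p)\ge\tilde{C}(p^-)\ge C(p^-)$ from (ii) and the hypothesis, you get $C(p)\le\tilde{C}(p)$ with no extra care needed. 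So the step you flagged as delicate is in fact a one-line monotonicity observation; the rest of your plan is sound and matches the cited proof.
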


Theorems~\ref{thm:pfr-for-free} and \ref{thm:ir-for-free} exhibit two fundamental differences.
First, Theorem \ref{thm:ir-for-free} only ensures that the worst-case delay of the aggregate is not increased, whereas Theorem \ref{thm:pfr-for-free} guarantees that the worst-case delay of the individual flow is preserved. 
Within an aggregate, the first bound can be larger than the latter, \eg when the flows have different packet sizes.
Second, Theorem \ref{thm:pfr-for-free} only requires the previous system $S$ to be \ac{FIFO} for each flow individually (\ac{FIFO}-per-flow), whereas the same system is required to be globally \ac{FIFO} for Theorem~\ref{thm:ir-for-free}.

% On one hand, the role of the \ac{FIFO}-per-flow assumption in Theorem~\ref{thm:pfr-for-free} is well understood:
% In \cite{mohammadpourPacketReorderingTimeSensitive2022}, formal metrics for capturing the amount of reordering in time-sensitive networks is defined and a method for computing a bound of these metrics is provided. 
% In \cite{thomasWorstCaseDelayBounds2022}, the shaping-for-free property of the \ac{PFR} is proved to be lost when placed after a non-\ac{FIFO} system, but the delay penalty is constrained. 
% %\todo{and depends on the reordering metrics}.

% On the other hand, the role of the \ac{FIFO} assumption in Theorem~\ref{thm:ir-for-free} is less understood.
% When removed, for example due to the use of redundancy mechanisnms \cite[Thm. 4]{thomasWorstCaseDelayBounds2022}, the \ac{IR} was proved to exhibit an instability with unbounded latencies and a constant build-up of backlog.
% The reasons of this instability remain unclear.

% \todo{}
% \begin{itemize}
%     \item Shaping-for-free property for the \ac{PFR} and for the \ac{PFR}
% \end{itemize}
% In time-sensitive networks with traffic regulators, end-to-end latency bounds are today obtained by combining network-calculus results with the shaping-for-free property \cite{thomasCyclicDependenciesRegulators2019}.

\section{Related Work on the Modeling of\\ Traffic Regulators}
\label{sec:related-work}

In time-sensitive networks with traffic regulators, end-to-end latency bounds for the flows are obtained by combining the shaping-for-free property for traffic regulators with service-curve-based network-calculus results for other systems.
This differentiated treatment of network elements (traffic regulators vs. other systems with service-curve models) restrains the choice of the end-to-end analysis method.
Methods based on \ac{TFA} \cite[\S 3.2]{schmittDISCONetworkCalculator2006} can be adapted to networks with traffic regulators \cite{mohammadpourLatencyBacklogBounds2018,thomasCyclicDependenciesRegulators2019}.
Other approaches, such as \acf{SFA} \cite[\S 3.3]{schmittDISCONetworkCalculator2006}, \acf{PMOO} \cite{schmittImprovingPerformanceBounds2008} and flow prolongation \cite{bondorfBetterBoundsWorse2017} provide tighter end-to-end latency bounds than \ac{TFA} in several types of networks \cite{bondorfQualityCostDeterministic2017}, but they heavily rely on service-curve models.
In addition, service-curve models provide continuity and differentiability properties, which allows for synthesizing network designs from the performance requirements, as shown by Geyer and Bondorf in \cite{geyerNetworkSynthesisDelay2022}.
Hence, a need exists for obtaining service-curve models for all elements of time-sensitive networks, including traffic regulators such as \acp{PFR} and \acp{IR}.

The \acf{PFR} was introduced under the name \emph{packetized greedy shaper} in \cite[\S 1.7.4]{leboudecNetworkCalculus2001}.
Le Boudec and Thiran proved in \cite[\S 1.7.4]{leboudecNetworkCalculus2001} that if $\sigma_f$ is concave and such that $\lim_{t\rightarrow 0^+} \sigma_f(t)$ is larger than the maximum packet size of $f$, then the \ac{PFR} offers $\sigma_f$ as a fluid service curve \cite[Thm. 1.7.3]{leboudecNetworkCalculus2001}.
This model proves Theorem~\ref{thm:pfr-for-free}.
% with a packetized input and a shaping curve $\sigma_f$ is equivalent to the concatenation of a \emph{greedy shaper} with a packetizer (Figure~\ref{fig:related-work:pfr-decompo}).
% \cite[Thm. 1.7.3]{leboudecNetworkCalculus2001}.
%
% The packetizer cannot in general be modeled as a service-curve element but it does not increase the end-to-end latency bounds in lossless networks \cite[Thm. 1.7.5]{leboudecNetworkCalculus2001}.
% The greedy shaper, however, offers the network-calculus service curve $\beta=\sigma_f$ and classic service-curve-oriented network-calculus results can be used to analyze this network
% As the packetizer does not increase the end-to-end latency bounds, proving Theorem~\ref{thm:pfr-for-free} comes down to the network-calculus analysis of the greedy shaper.
% This system offers a service curve $\sigma_f$ \cite[Thm. 1.5.2]{leboudecNetworkCalculus2001}.
%This property of the greedy shaper is obtained directly from the knowledge of its service curve, $\beta\triangleq \sigma_f$.
In this paper, we say that the curve $\sigma_f$ \emph{explains the shaping-for-free property} of the \ac{PFR} and we formally define this notion in Section~\ref{sec:fifo-sff:no-explain}.
Due to its network-calculus service-curve model, the behavior of the \ac{PFR} can also be studied in situations where Theorem~\ref{thm:pfr-for-free} does not apply.
% In \cite{thomasTimeSynchronizationIssues2020}, the consequences of the clock non-idealities -- which can affect the effectively-enforced shaping curve based on the configured shaping curve -- are discussed and end-to-end latency bounds are obtained for flows in synchronized networks with \acp{PFR}, by using the service-curve model of the latter.
In \cite{thomasWorstCaseDelayBounds2022}, the consequence of redundancy mechanisms -- that can affect the \ac{FIFO} property -- is studied, and end-to-end latency bounds are obtained for flows in networks with redundancy mechanisms and \acp{PFR}.
In \cite[\S IV.A]{leboudecTheoryTrafficRegulators2018}, Le Boudec also provides an input-output characterization of the \ac{PFR}. 
This type of model does not rely on the concept of service curve but describes the \ac{PFR}'s output packet sequence as a function of the input packet sequence.

% \begin{figure}
%     \centering
%         \resizebox{\linewidth}{!}{\input{./figures/2023-01-pfr-decompo}}
%         \caption{\label{fig:related-work:pfr-decompo} Service-curve model of a \ac{PFR} \cite{leboudecNetworkCalculus2001}. The curve $\sigma_f$ is a service-curve of the greedy shaper and explains the shaping-for-free property of the \ac{PFR}.}
% \end{figure}

% %, by using the service-curve model.

% In \cite[\S IV.A]{leboudecTheoryTrafficRegulators2018}, Le Boudec also provides an input-output characterization of the \ac{PFR}. 
% Such a model does not rely on the concept of service curve but describes the \ac{PFR}'s output packet sequence as a function of the input packet sequence.

The \acf{IR} was introduced by Specht and Samii under the name \emph{Urgency-Based Scheduler} \cite{spechtUrgencyBasedSchedulerTimeSensitive2016}.
As opposed to the \ac{PFR}, its shaping-for-free property was proved without the concept of service curves: with a trajectorial approach in \cite{spechtUrgencyBasedSchedulerTimeSensitive2016} and with an input-output characterization in \cite[\S V]{leboudecTheoryTrafficRegulators2018}.
The equivalence between the theoretical model of the \ac{IR} and the \ac{TSN} implementation (\emph{Asynchronous Traffic Shaping}, \cite{ieee8021Qcr}) was proved by Boyer in \cite{boyerEquivalenceTheoreticalModel2022}, who also provides a second input-output characterization \cite[\S 3.3]{boyerEquivalenceTheoreticalModel2022}.
The only useful service curves that are known for the \ac{IR} are only valid when the \ac{IR} is placed in a context that meets the conditions of Theorem~\ref{thm:ir-for-free}.
A first context-dependent service curve is provided in \cite[\S IV.A.1]{mohammadpourLatencyBacklogBounds2018} and then slightly improved in \cite[\S III.B.1]{zhaoQuantitativePerformanceComparison2022}.
In contrast, the only context-agnostic service curve known for the \ac{IR} is the trivial function $t\mapsto 0$.
%Hamscher first questioned the existence of a better context-agnostic service curve was first studied by Hamscher:
In \cite{hamscherUsingMathematicalProgramming2022}, Hamscher mentions the first conjecture on a non-trivial context-agnostic service curve for the \ac{IR} and uses a mathematical-programming approach for hardening their conjecture pending formal proof.
The conjecture was not shared (and, to our knowledge, has not been published at the time of this writing). 
However, the presentation triggered discussions on whether the \ac{IR}'s behavior could be captured by context-agnostic service curves.
These discussions motivated this paper.

Hence, two questions remain open: 
Beyond the function $\beta:t\mapsto 0$, what other context-agnostic service curves does the \ac{IR} provide? 
Do any of them explain Theorem~\ref{thm:ir-for-free}?
We address these two questions in this paper.

\section{System Model and Notations}
\label{sec:sysmodel}

\begin{table}
    \caption{\label{tab:motations} Notations}
    \resizebox{\linewidth}{!}{\begin{tabular}{r|ll}
    \multicolumn{3}{c}{Common Operators} \\
    $a \vee b $ & $=\max(a,b)$ & Maximum of $a$ and $b$.\\
    $a \wedge b $ & $=\min(a,b)$ & Minimum of $a$ and $b$.\\
    $|c|^+$ & $=\max(0,c)$ & \\
    $\lfloor x\rfloor$ & $=\max\{n\in\mathbb{N}|n\le x\}$ & Floor function.\\ 
    %$\lceil x\rceil$ & $=\min\{n\in\mathbb{N}|n\ge x\}$ & Ceil function.\\ 
    $\mathfrak{f} \otimes \mathfrak{g}$ & $t\mapsto\inf_{s\le t}\mathfrak{f}(s)+\mathfrak{g}(t-s)$ & Min-plus convolution\\
    $\mathfrak{f}\ \overline{\otimes}\ \mathfrak{g}$ & $t\mapsto\sup_{s\le t}\mathfrak{f}(s)+\mathfrak{g}(t-s)$ & Max-plus convolution\\
    $\mathfrak{f}\ \overline{\oslash}\ \mathfrak{g}$ & $t\mapsto\inf_{u\ge0}\mathfrak{f}(t+u)+\mathfrak{g}(u)$ & Max-plus deconvolution\\
    $\mathfrak{F}_0$ & $=\{\mathfrak{f}:R^+\rightarrow R^+|\mathfrak{f}(0)=0\}$ & Set of curves\\
    \hline
    \multicolumn{3}{c}{Common Curves} \\
    $\gamma_{r,b}$  & $t\mapsto \left\lbrace \begin{aligned} 0 &\quad \text{ if } t = 0\\ rt+b &\quad \text{ if } t>0\end{aligned}\right.$      &   Leaky-bucket curve.\\
    $\beta_{R,T}$   & $t\mapsto R|t-T|^+$ & Rate-latency curve. \\
    %$\delta_D$      & $t\mapsto \left\lbrace \begin{aligned} 0 &\quad \text{ if } t \le D\\ +\infty &\quad \text{ otherwise}\end{aligned}\right.$ & Bounded-delay curve.\\ 
    \hline
    \multicolumn{3}{c}{Flows} \\
    $f\in\mathcal{F}$ & A flow $f$ in the set of flows $\mathcal{F}$\\
    $\{\sigma_f\}_{f\in\mathcal{F}}$ &  \multicolumn{2}{l}{A set of shaping curves for the flows $\mathcal{F}$} \\
    $L_f^{\min}$, $L_f^{\max}$ &  \multicolumn{2}{l}{Minimum [\resp maximum] packet size of flow $f$} \\
    %$L_f^{\max}$ &  \multicolumn{2}{l}{maximum packet size of flow $f$} \\
    \hline
    \multicolumn{3}{c}{Trajectory Description} \\
    $x$ & \multicolumn{2}{l}{A trajectory: Description of all the events in the network} \\
    $\mathtt{M}$ & \multicolumn{2}{l}{An observation point} \\
    $\mathscr{M}^x$ & \multicolumn{2}{l}{Packet sequence at $\mathtt{M}$ in trajectory $x$} \\
    $R^{x,\mathtt{M}}$ & \multicolumn{2}{l}{Cumulative function of the aggregate\dots} \\
    {[\resp $R^{x,\mathtt{M}}_f$]} & \multicolumn{2}{l}{\dots[\resp of $f$] at $\mathtt{M}$ in Trajectory $x$.} \\
    $R^{x,\mathtt{M}} \sim \alpha$ & \multicolumn{2}{l}{$R^{x,\mathtt{M}}$ is constrained by $\alpha$, Equation \eqref{eq:background:strict-ac-def}} \\
    \hline
    \multicolumn{3}{c}{Parameters of the Spring adversary (Section~\ref{sec:fifo-sff:spring})} \\
    $I$ &  \multicolumn{2}{l}{Expected spacing for same-flow packets after the \ac{IR}} \\
    $d$ &  \multicolumn{2}{l}{Maximum delay in the Spring-controlled system $S_1$} \\
    $\epsilon$ &  \multicolumn{2}{l}{Margin (minimum packet spacing after $S_1$)} \\
    $\tau$ & \multicolumn{2}{l}{Period of the six-packet-long profile}

    %$\text{IR}(\sigma)$ & \multicolumn{2}{l}{The interleaved regulator configured with the set $\sigma$} \\
\end{tabular}}
\end{table}

We consider an asynchronous packet-switching time-sensitive network that contains traffic regulators.
We focus on a particular traffic regulator within this network.
It can either be a \acf{PFR} that processes a single flow $\mathcal{F}=\{f\}$ with shaping curve $\sigma_f$ or an \acf{IR} that processes an aggregate $\mathcal{F}$ with leaky-bucket shaping curves $\{\sigma_f\}_{f\in\mathcal{F}}=\{\gamma_{r_f,b_f}\}_{f\in\mathcal{F}}$.
We focus on the subset of flows $\mathcal{F}$ that cross the regulator.
We model any other network elements (queues, schedulers, switching fabrics, transmission links, \dots) or sequence of network elements crossed by the flows $\mathcal{F}$ between their sources, the regulator, and their destinations as black-box systems.
Each system has a traffic input and a traffic output and is only assumed to be causal and lossless: it neither produces nor loses any data internally.
Data is produced at the flows' sources and consumed at the flows' destinations.
%Data is produced at the network's edge by the sources.

A \emph{trajectory} $x$ is a description of all the events in the network (packet arrival, packet departure).
It is \emph{acceptable} if all known constraints are satisfied.
For an observation point $\mathtt{M}$, we denote by $R^{x,\mathtt{M}}$ [\resp $R_f^{x,\mathtt{M}}$] the cumulative function of the aggregate [\resp of the flow $f\in\mathcal{F}$] at observation point $\mathtt{M}$ in trajectory $x$.
If the stream is packetized at $\mathtt{M}$, we call $\mathscr{M}^x$ the packet sequence that describes the packets' arrival date, size, and associated flow at $\mathtt{M}$ in trajectory $x$.
%For another observation point $\mathtt{A}$, we denote by $\mathscr{A}^x$ the packet sequence at $\mathtt{A}$.

For an input packet sequence $\mathscr{B}^x$ at the input $\mathtt{B}$ of the regulator (Figure~\ref{fig:sysmodel:traffic-regulators:simple-ir}), we use the equivalent input-output characterizations of traffic regulators from \cite{leboudecTheoryTrafficRegulators2018} and \cite{boyerEquivalenceTheoreticalModel2022} to obtain the output packet sequence $\mathscr{D}^x$ at the output $\mathtt{D}$ of the regulator.

We list the notations in Table~\ref{tab:motations}.
\proofEndSysModel{}.
%Each result is associated with an intuition of the proof, and the formal proofs are in Appendix~\ref{appendix:proofs}.

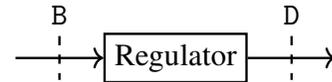
\begin{figure}
    \centering
        \resizebox{0.5\linewidth}{!}{\begin{tikzpicture}
    \node[draw] at (0,0) (ir) {Regulator};
    \draw[->] ($(ir.west)+(-1,0)$) -- (ir.west) node[pos=0.5,anchor=center](tin){};
    \draw[->] (ir.east) -- ($(ir.east)+(1,0)$)  node[pos=0.5,anchor=center](tout){};

    \node at (0,0.25) (vv) {};
    \draw[dashed] ($(tin)+(vv)$) -- ($(tin)-(vv)$)      node[pos=0,above]{$\mathtt{B}$};
    \draw[dashed] ($(tout)+(vv)$) -- ($(tout)-(vv)$)    node[pos=0,above]{$\mathtt{D}$};
\end{tikzpicture}}
        \caption{\label{fig:sysmodel:traffic-regulators:simple-ir} Input [\resp output] observation point $\mathtt{B}$ [\resp $\mathtt{D}$] for a regulator.}
\end{figure}
\section{Limits of the Shaping-For-Free Property for the Interleaved Regulator}
\label{sec:fifo-sff}

The shaping-for-free property is a strong attribute of the \acf{IR}.
However, it is context dependent: 
It makes assumptions on the context in which the \ac{IR} is placed.
In this section, we investigate the limits of these assumptions.
%Theorem~\ref{thm:ir-for-free}.

First, we observe that Theorem~\ref{thm:ir-for-free} requires the upstream system to be \ac{FIFO}.
In Section~\ref{sec:fifo-sff:spring}, we prove that removing this assumption makes the \ac{IR} unstable: it can yield unbounded latencies.
We then prove in Section~\ref{sec:fifo-sff:no-explain} that there exists no service-curve model of the \ac{IR} that can explain Theorem~\ref{thm:ir-for-free}.

\subsection{Instability of the \acs{IR} when Placed after a Non-\acs{FIFO} System}
\label{sec:fifo-sff:spring}

In this subsection, we discuss the role of the \ac{FIFO} assumption in Theorem~\ref{thm:ir-for-free}.
When removed, we prove that the \ac{IR} can yield unbounded latencies.
Specifically, we prove
% By removing it, we evaluate if the \ac{IR} provides some sort of context-ag
% This assumption on the
% The shaping-for-free property that the \ac{IR} exhibits in the conditions of Theorem~\ref{thm:ir-for-free} is such a strong attribute that one could expect that it provides, as a stand-alone network element, some sort of minimum guarantees, irrespective of its context.
% %We can for example expect some minimum performance guarantees even when the upstream system is not \ac{FIFO}.

% In this section, we prove on the contrary that the \ac{IR} does not provide any context-agnostic delay guarantee: as soon as the upstream system is no longer assumed to be \ac{FIFO}, the \ac{IR} can yield unbounded latencies.
% Specifically, we prove

\begin{theorem}[\label{thm:instab}Instability of the \ac{IR} after a non-\ac{FIFO} system]
    Consider an \ac{IR} that processes three or more flows with the same leaky-bucket shaping curve for the first three flows: $\forall f_i\in\{f_1,f_2,f_3\}, \sigma_{f_i}=\gamma_{r,b}$ with $r>0$ and $b$ greater than the maximum packet size of $f_1,f_2,f_3$.
    For any $D>0$, there exists a system $S_1$ and a source $\phi$ (Figure~\ref{fig:fifo-sff:spring:n1}), such that:
    \begin{enumerate}[1/]
        \item each flow $f_i$ is $\sigma_{f_i}$-constrained at the source $\phi$, \label{enum:spring:source-constraint} \label{enum:spring:first}
        \item $S_1$ is causal, lossless and \ac{FIFO}-per-flow, \label{enum:spring:s-fifo-per-flow-lossless}
        \item when the system $S_1$ is placed 
        %in the network $\mathcal{N}_1$ 
        after the source as in Figure~\ref{fig:fifo-sff:spring:n1}, then the delay of each flow within $S_1$ is upper-bounded by $D$, \label{enum:spring:delay-through-s}
        \item when the \ac{IR} is placed 
        %in the network $\mathcal{N}_1$ 
        after $S_1$ as in Figure~\ref{fig:fifo-sff:spring:n1}, then the delay of any flow within the \ac{IR} is not bounded.\label{enum:spring:instable}  \label{enum:spring:last}
    \end{enumerate}
    %causal, lossless and \ac{FIFO}-per-flow system $S$ and a $\gamma_{r,b}$-constrained arrival for $f_1,f_2,f_3$ at the input of $S$ such that, when the \ac{IR} is placed after $S$, the delay of the flows through $S$ is upper-bounded by $D$ but the delay of the flows through the \ac{IR} is not bounded.
\end{theorem}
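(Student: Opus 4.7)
The plan is to construct explicitly, for every $D>0$, an adversary pair $(\phi,S_1)$, called the Spring adversary, that forces the \ac{IR} into sustained head-of-line blocking. The key observation is that \acs{HOL} blocking in the \ac{IR} wastes service opportunities for the packets queued behind, so if $S_1$ arranges for the \acs{HOL} to be frequently a packet whose leaky bucket is saturated, the \ac{IR}'s effective throughput drops below its input rate; the backlog -- and hence the per-packet delay -- then grows linearly in time.

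For a suitable choice of a packet size $L\in(b/2,b)$, a margin $\epsilon>0$ and a period $\tau=2I$ (with $I\triangleq L/r$), I would let the source $\phi$ emit packets of size $L$ for each of $f_1,f_2,f_3$ at times $0,I,2I,\ldots$, which is trivially $\gamma_{r,b}$-constrained (item~\ref{enum:spring:source-constraint}). The system $S_1$ keeps per-flow \ac{FIFO} buffers and, at every period $\tau$, releases at the \ac{IR} input a ``bad profile'' of six packets in the order $f_1,f_1,f_2,f_2,f_3,f_3$ separated by $\epsilon$. By construction $S_1$ is causal, per-flow \ac{FIFO} and lossless, and a direct bookkeeping yields that the delay of every packet inside $S_1$ is at most $D$ for an appropriate offset $T\in[I-\epsilon,D-4\epsilon]$ (items~\ref{enum:spring:s-fifo-per-flow-lossless} and \ref{enum:spring:delay-through-s}). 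At the \ac{IR}, because $L>b/2$, each flow's leaky bucket admits only the first packet of its pair in the profile immediately; the second packet of the pair saturates the bucket and has to wait $I_{\min}\triangleq (2L-b)/r$ before being released, during which all subsequent packets of the profile -- including those of the other flows -- are blocked in the \ac{FIFO} queue. Tracking the three leaky-bucket states across one period shows that the \ac{IR} needs $3I_{\min}$ to process the six packets of the profile, which exceeds the source period $\tau=2I$ as soon as $L>3b/4$; the backlog therefore grows by a strictly positive amount at every period, and the delay of the packet at the queue tail grows linearly with the number of periods elapsed, giving item~\ref{enum:spring:instable}.

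The delicate part is to make the construction uniform in $D$. The six-packet profile above works whenever $D>3b/(4r)$ (combining the instability condition $L>3b/4$ with the $S_1$ delay bound $2I+4\epsilon\le D$); for smaller $D$, one lengthens the profile to $m>2$ consecutive same-flow packets per flow per period, which still yields a strictly positive deficit in the \ac{IR}'s output rate -- each same-flow run contributes one blocking of $(2L-b)/r$ plus $m-2$ blockings of $L/r$ -- but also tightens the $S_1$ delay bound to roughly $(m-1)(I-\epsilon)+(3m-1)\epsilon\le D$. Engineering the joint choice of $L$, $m$, $\epsilon$ and $\tau$ so that all three requirements -- per-flow \ac{FIFO}, $S_1$ delay at most $D$, and strict deficit at the \ac{IR} -- hold simultaneously for every positive $D$ is the main bookkeeping content of the formal proof, and constitutes the principal obstacle.
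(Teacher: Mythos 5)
Your high-level plan is on target: construct a Spring adversary in which $S_1$ rearranges the input into runs of same-flow packets so that the \ac{IR} must insert inter-packet gaps that starve the \ac{FIFO} queue, producing a strictly positive deficit per period. This is indeed the mechanism used in the paper. However, your specific parametrization has a genuine gap that your proposed fix does not close -- in fact, it makes things worse.

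The gap is in the synchronous source. If $\phi$ emits one packet per flow at $0,I,2I,\dots$ and $S_1$ reorders each period into $f_1^m f_2^m f_3^m$, then for whichever flow is released last, its packet emitted at time $0$ must appear at $\mathtt{B}$ after a packet emitted at time $(m-1)I$. By causality, its delay inside $S_1$ is at least $(m-1)I$, regardless of $\epsilon$ and $\tau$. Combined with your instability requirement $L>3b/(2m)$, i.e.\ $I=L/r>3b/(2mr)$, the maximum delay inside $S_1$ is strictly greater than
\begin{equation}
    (m-1)\cdot\frac{3b}{2mr} \;=\; \frac{3b}{2r}\cdot\frac{m-1}{m},
\end{equation}
which is an \emph{increasing} function of $m$ on $m\ge 2$, with minimum $\tfrac{3b}{4r}$ at $m=2$. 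Hence, for $D\le 3b/(4r)$, no choice of $L,m,\epsilon,\tau$ within your construction can make item~\ref{enum:spring:delay-through-s} hold; and lengthening the profile ($m>2$), which you offer as the fix, moves the floor upward, not downward. The ``principal obstacle'' you identify is real and cannot be engineered around inside the synchronous design.

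The paper sidesteps this entirely by decoupling the severity of the reordering from the size of the packets. The source already emits the six packets in nearly the bad order, $f_1,f_2,f_1,f_2,f_3,f_3$, with each flow individually $\gamma_{r,b}$-constrained (packets of size $b$ spaced $I=b/r$), and $S_1$ performs a single swap per period by delaying one packet by $d$ and another by $0$. Every packet suffers a delay at most $d$ through $S_1$, and $d$ is a free parameter chosen with $0<d<\min(I,D)$, so it is arbitrarily small. The period $\tau=3I+3\epsilon-d$ is then only slightly shorter than the $3I$ the \ac{IR} takes to drain six packets, giving a deficit $d-3\epsilon>0$ per period that is small but strictly positive. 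The moral: the construction must generate an arbitrarily small reordering (so $S_1$'s delay bound scales to zero with $D$) that still produces a positive deficit. Your design ties the size of the reordering to $I$, which is bounded below by the instability requirement; the paper's staggered source breaks that coupling.
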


The proof of Theorem~\ref{thm:instab} in \proofSpring{} relies on an adversarial traffic generation that we call ``Spring''. 
% Theorem~\ref{thm:instab} shows that the \ac{IR} does not provide any context-agnostic delay guarantees when considered as a stand-alone network element.
% % \ie guarantees that would be independent of the context in which the \ac{IR} is placed.
%
% In Section~\ref{sec:spring}, we first prove this result by developing Spring, an adversarial trajectory that yields unbounded latencies in the \ac{IR}.
% We then argue that a consequence of Theorem~\ref{thm:ir-for-free} is that no network-calculus service curve can explain the shaping-for-free property of the \ac{IR}, which we prove in Section~\ref{sec:fifo:ir-sc-model}.
% Then we show that a consequence of Theorem~\ref{i}
% As a consequence of Theorem~\ref{thm:instab}, we prove in \S \ref{} that no network-calculus service curve can explain Theorem~\ref{thm:ir-for-free} and the origin of the shaping-for-free property for the \ac{IR}.
Spring is an adversary that knows the values of $b,r$ and $D$ in Theorem~\ref{thm:instab} and controls the source $\phi$ and the system $S_1$ of Figure~\ref{fig:fifo-sff:spring:n1} such that Properties~\ref{enum:spring:first} to \ref{enum:spring:last} of Theorem~\ref{thm:instab} hold.
It defines the constants $I,d,\epsilon$ and $\tau$ as follows
\begin{equation}\label{eq:spring-constants}
    \resizebox{\linewidth}{!}{$I\triangleq \frac{b}{r}; 0<d<\min\left(I,D\right); 0<\epsilon<\min(I-d,\frac{d}{3}); \tau\triangleq 3I+3\epsilon-d$}
\end{equation}

%We provide here an intuition of the Spring trajectory.

\begin{figure}
    \resizebox{\linewidth}{!}{\begin{tikzpicture}
    \tikzstyle{n} = [draw]

    \node[n] at (1,0) (s1) {$S_1$};
    \node[n, anchor=west] at (2.8,0) (ir1) {$\text{IR}$};
    \node[anchor=east] at (-1,0) (n1) {};
    \node[anchor=east, circle, draw] at (n1.west) (phi) {$\phi$};
    \draw[-] (phi) -- (n1.east);
    \draw[->] (n1.east) -- (s1.west) node[pos=0.5,anchor=center](ta1){};
    \draw[->] (s1.east) -- (ir1.west) node[pos=0.35](tb1){};
    \draw[->] (ir1.east) -- ($(ir1.east)+(1.2,0)$)node[pos=0.5](td1){};

    \draw[dashed] let \p3=(ta1.center) in (\x3,0.2) -- (\x3,-0.2) node[pos=0,above]{$\mathtt{A}$};
    \draw[dashed] let \p3=(tb1.center) in (\x3,0.2) -- (\x3,-0.2) node[pos=0,above]{$\mathtt{B}$};
    \draw[dashed] let \p3=(td1.center) in (\x3,0.2) -- (\x3,-0.2) node[pos=0,above]{$\mathtt{D}$};

    \def\mlinelength{-4.7}
    \draw[-latex] (ta1.center) -- ($(ta1.center)+(0,\mlinelength)$) node[pos=1,rotate=90,scale=0.8,anchor=south west]{time} node[pos=1,anchor=north]{$\mathscr{A}^1$};
    \draw[-latex] (tb1.center) -- ($(tb1.center)+(0,\mlinelength)$)  node[pos=1,anchor=north]{$\mathscr{B}^1$};
    \draw[-latex] (td1.center) -- ($(td1.center)+(0,\mlinelength)$)  node[pos=1,anchor=north]{$\mathscr{D}^1$};

    \tikzstyle{f1} = [-stealth', blue]
    \tikzstyle{f2} = [-stealth', red, dashdotted]
    \tikzstyle{f3} = [-stealth', orange, dashed]
    
    \def\myI{1}
    \def\myD{0.7}
    \def\myEpsilon{0.2}
    \tikzstyle{ldd} = [scale=0.75]
    \draw[f1] ($(ta1.center)-(0,\myD)$) -- ++ (1,0)                     node[pos=0, left, ldd] {$d$}                  node[pos=1,anchor=center](1aa){}; 
    \draw[f2] ($(ta1.center)-(0,\myI+\myEpsilon)$) -- ++ (1,0)          node[pos=0, left, ldd] {$I+\epsilon$}        node[pos=1,anchor=center](2aa){};   
    \draw[f1] ($(ta1.center)-(0,\myI+\myD)$) -- ++ (1,0)                node[pos=0, left, ldd] {$I+d$}               node[pos=1,anchor=center](1ba){};
    \draw[f2] ($(ta1.center)-(0,2*\myI+\myEpsilon)$) -- ++ (1,0)        node[pos=0, left, ldd] {$2I+\epsilon$}       node[pos=1,anchor=center](2ba){};
    \draw[f3] ($(ta1.center)-(0,2*\myI+2*\myEpsilon)$) -- ++ (1,0)      node[pos=0, left, ldd] {$2I+2\epsilon$}      node[pos=1,anchor=center](3aa){};
    \draw[f3] ($(ta1.center)-(0,3*\myI+2*\myEpsilon)$) -- ++ (1,0)      node[pos=0, left, ldd] {$3I+2\epsilon$}      node[pos=1,anchor=center](3ba){};

    \draw[f1] ($(ta1.center)-(0,\myD+3*\myI+3*\myEpsilon-\myD)$) -- ++ (1,0)                     node[pos=0,left, ldd] {$d+\tau$}                  node[pos=1,anchor=center](1aak1){};

    \draw[f1] ($(tb1.center)-(0,2*\myD)$) -- ++ (1,0)                   node[pos=0,anchor=center](1ab){} node[pos=1,anchor=center](1abEnd){}; %node[pos=0,left] {$2d$}; 
    \draw[f1] ($(tb1.center)-(0,\myI+\myD)$) -- ++ (1,0)                node[pos=0,anchor=center](1bb){} node[pos=1,anchor=center](1bbEnd){}; %node[pos=0, left] {$I+d$};
    \draw[f2] ($(tb1.center)-(0,\myI+\myD+\myEpsilon)$) -- ++ (1,0)     node[pos=0,anchor=center](2ab){} node[pos=1,anchor=center](2abEnd){}; %node[pos=0, left] {$I+\epsilon+d$};
    \draw[f2] ($(tb1.center)-(0,2*\myI+\myEpsilon+\myD)$) -- ++ (1,0)   node[pos=0,anchor=center](2bb){} node[pos=1,anchor=center](2bbEnd){}; %node[pos=0, left] {$2I+\epsilon+d$};
    \draw[f3] ($(tb1.center)-(0,2*\myI+2*\myEpsilon+\myD)$) -- ++ (1,0) node[pos=0,anchor=center](3ab){} node[pos=1,anchor=center](3abEnd){}; %node[pos=0, left] {$2I+2\epsilon+d$};
    \draw[f3] ($(tb1.center)-(0,3*\myI+2*\myEpsilon+\myD)$) -- ++ (1,0) node[pos=0,anchor=center](3bb){} node[pos=1,anchor=center](3bbEnd){}; %node[pos=0, left] {$3I+2\epsilon+d$};

    \draw[f1] ($(tb1.center)-(0,2*\myD+3*\myI+3*\myEpsilon-\myD)$) -- ++ (1,0)                   node[pos=0,anchor=center](1abk1){} node[pos=1,anchor=center](1abk1End){}; %node[pos=0,left] {$2d$}; 

    \tikzstyle{da} = [-latex, dotted]
    \draw[da] (1aa) -- (1ab) node[pos=0.4, above, sloped]{$d$};
    \draw[da] (1ba) -- (1bb) node[pos=0.2, below, sloped]{$0$};
    \draw[da] (2aa) -- (2ab) node[pos=0.4, above, sloped]{$d$};
    \draw[da] (2ba) -- (2bb) node[pos=0.4, above, sloped]{$d$};
    \draw[da] (3aa) -- (3ab) node[pos=0.4, below, sloped]{$d$};
    \draw[da] (3ba) -- (3bb) node[pos=0.4, above, sloped]{$d$};
    \draw[da] (1aak1) -- (1abk1) node[pos=0.4, below, sloped]{$d$};

    \draw[{latex[length=1mm, width=1mm]}-{latex[length=1mm, width=1mm]}] ($(1bbEnd.center)+(0.1,0)$) -- ($(2abEnd.center)+(0.1,0)$) node[pos=0.5, right] {$\epsilon$};
    \draw[{latex[length=1mm, width=1mm]}-{latex[length=1mm, width=1mm]}] ($(2bbEnd.center)+(0.1,0)$) -- ($(3abEnd.center)+(0.1,0)$) node[pos=0.5, right] {$\epsilon$};
    \draw[{latex[length=1mm, width=1mm]}-{latex[length=1mm, width=1mm]}] ($(3bbEnd.center)+(0.1,0)$) -- ($(1abk1End.center)+(0.1,0)$) node[pos=0.5, right] {$\epsilon$};
 
    \draw[f1] ($(td1.center)-(0,2*\myD)$) -- ++ (1,0)                       node[pos=0,anchor=center](1ad){} node[pos=1,anchor=center](1adEnd){}; %node[pos=0,left] {$2d$}; 
    \draw[f1] ($(td1.center)-(0,2*\myD+\myI)$) -- ++ (1,0)                  node[pos=0,anchor=center](1bd){} node[pos=1,anchor=center](1bdEnd){}; %node[pos=0, left] {$I+d$};
    \draw[f2] ($(td1.center)-(0,2*\myD+\myI+0.05)$) -- ++ (1,0)             node[pos=0,anchor=center](2ad){} node[pos=1,anchor=center](2adEnd){}; %node[pos=0, left] {$I+\epsilon+d$};
    \draw[f2] ($(td1.center)-(0,2*\myD+2*\myI)$) -- ++ (1,0)                node[pos=0,anchor=center](2bd){} node[pos=1,anchor=center](2bdEnd){}; %node[pos=0, left] {$2I+\epsilon+d$};
    \draw[f3] ($(td1.center)-(0,2*\myD+2*\myI+0.05)$) -- ++ (1,0)           node[pos=0,anchor=center](3ad){} node[pos=1,anchor=center](3adEnd){}; %node[pos=0, left] {$2I+2\epsilon+d$};
    \draw[f3] ($(td1.center)-(0,2*\myD+3*\myI)$) -- ++ (1,0)                node[pos=0,anchor=center](3bd){} node[pos=1,anchor=center](3bdEnd){}; %node[pos=0, left] {$3I+2\epsilon+d$};
    \draw[f1] ($(td1.center)-(0,2*\myD+3*\myI+0.05)$) -- ++ (1,0)           node[pos=0,anchor=center](1adk1){} node[pos=1,anchor=center](1adk1End){}; %node[pos=0,left] {$2d$}; 

    \draw[{latex[length=1mm, width=1mm]}-{latex[length=1mm, width=1mm]}] ($(1adEnd.center)+(0.1,0)$) -- ($(1bdEnd.center)+(0.1,0)$) node[pos=0.5, right] (if1) {$I$};
    \draw[{latex[length=1mm, width=1mm]}-{latex[length=1mm, width=1mm]}] ($(2adEnd.center)+(0.1,0)$) -- ($(2bdEnd.center)+(0.1,0)$) node[pos=0.5, right] (if2) {$I$};
    \draw[{latex[length=1mm, width=1mm]}-{latex[length=1mm, width=1mm]}] ($(3adEnd.center)+(0.1,0)$) -- ($(3bdEnd.center)+(0.1,0)$) node[pos=0.5, right] (if3) {$I$};

    \node[f1,anchor=east] at ($(if1.west)+(-0.5,0)$) {$f_1$};
    \node[f3,anchor=east] at ($(if3.west)+(-0.5,0)$) {$f_3$};
    \node[f2,anchor=east] at ($(if2.west)+(-0.5,0)$) {$f_2$};

    \draw[da] (1abEnd) -- (1ad); % node[pos=0.4, above, sloped]{$d$};
    \draw[da] (1bbEnd) -- ($(1bbEnd)+(0.4,0)$) -- (1bd)     node[pos=0.5,above,scale=0.5, sloped]{shaped}; % node[pos=0.2, below, sloped]{$0$};
    \draw[da] (2abEnd) -- (2ad)                             node[pos=0.5,below,scale=0.5, sloped]{blocked by HOL}; % node[pos=0.4, above, sloped]{$d$};
    \draw[da] (2bbEnd) -- ($(2bbEnd)+(0.4,0)$) -- (2bd)     node[pos=0.5,above,scale=0.5, sloped]{shaped}; % node[pos=0.4, above, sloped]{$d$};
    \draw[da] (3abEnd) -- (3ad)                             node[pos=0.5,below,scale=0.5, sloped]{blocked by HOL}; % node[pos=0.4, below, sloped]{$d$};
    \draw[da] (3bbEnd) -- ($(3bbEnd)+(0.4,0)$) -- (3bd)     node[pos=0.5,above,scale=0.5, sloped]{shaped}; % node[pos=0.4, above, sloped]{$d$};
    \draw[da] (1abk1End) -- (1adk1)                         node[pos=0.4,below,scale=0.5, sloped]{blocked by HOL}; % node[pos=0.4, below, sloped]{$d$};

    \draw[dotted, |-|] let \p1=($(ta1.center)+(-1.1,0)$), \p2=(1aa.center), \p3=(1aak1.center) in (\x1,\y2) -- (\x1,\y3) node[pos=0,anchor=north east, rotate=90, scale=0.7]{$k=0$};
    \draw[dotted, |->] let \p1=($(ta1.center)+(-1.1,0)$), \p2=(1aak1.center) in (\x1,\y2) -- ++(0,-1) node[pos=0.9, anchor=north west, rotate=90, scale=0.7]{$k=1$};

\end{tikzpicture}}
    \caption{\label{fig:fifo-sff:spring:n1} The network $\mathcal{N}_1$ and the Spring-generated Trajectory 1 that yields unbounded latencies in the \acs{IR} when $S_1$ is not assumed \ac{FIFO}.}
\end{figure}
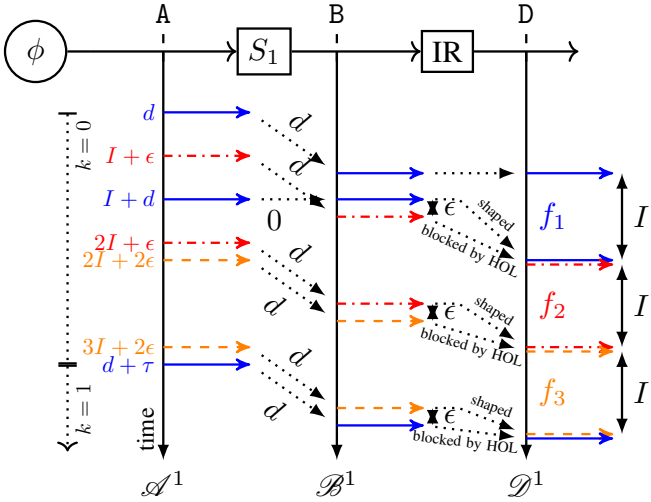

%and generates a trajectory that

%The trajectory is periodic with period $\tau$.
%These constants are used to describe the trajectory that Spring generates and that leads to 

\paragraph*{Intuition}

Trajectory $1$ generated by Spring is illustrated in Figure~\ref{fig:fifo-sff:spring:n1}.
All packets have the size $b$.
The far-left timeline shows the packet sequence $\mathscr{A}^1$ for the three flows at the output of the Spring-controlled source.
A sequence of six packets is repeated with period $\tau$.
Only the period $k=0$ is shown.

The dotted arrows that lead to the second timeline highlight each packet's delay in the Spring-controlled system $S_1$ and the resulting packet sequence $\mathscr{B}^1$.
The main property of Trajectory $1$ is that the first packet of the dash-dotted red flow $f_2$ and the second packet of the solid blue flow $f_1$ have exchanged their order at $\mathtt{B}$ compared to their order at $\mathtt{A}$.
This is because the former suffers a delay $d$ through $S_1$, but the latter does not suffer any delay.
Note that the Spring-controlled system $S_1$ is not \ac{FIFO} but remains causal, lossless, and \ac{FIFO}-per-flow with a delay bound $d< D$.

The dotted arrows that link the second to the third timeline describe the behavior of the \ac{IR} (not controlled by Spring) when provided with the input sequence $\mathscr{B}^1$.
For example, the first packet of $f_1$ is immediately released by the \ac{IR} because the network was previously empty.
However, the second solid blue packet of $f_1$ is shaped (delayed) by the \ac{IR} because releasing it would violate the $\gamma_{r,b}$ shaping constraint for $f_1$ at the output of the \ac{IR}.
This packet is released as soon as doing so does not violate the $\gamma_{r,b}$ constraint, \ie $I$ seconds after the previous packet.
Because of this, the first dash-dotted red packet of the flow $f_2$ is blocked by the \acf{HOL}.
And the second packet of $f_2$ is shaped and delayed to ensure a distance of $I$ from the previous packet of $f_2$.

As a result, it takes $3I$ seconds for the \ac{IR} to output the six packets of the first period, whereas they entered the \ac{IR} within $\tau$ seconds. 
As $\tau < 3I$, we can generate a constant build-up of delay and backlog in the \ac{IR} by repeating the six-packet-long profile every $\tau$ seconds.

% Theorem~\ref{thm:instab} proves that the \ac{IR} is not a network element that can provide any stand-alone delay guarantee irrespective of the context in which it is installed.
% % On the contrary, it appears to be very dependent on this context.
% %This holds even if the input traffic is constrained by arrival curves.

% In the next subsection, we prove that a consequence of Theorem~\ref{thm:instab} is that there cannot exist any network-calculus service curve that can explain the shaping-for-free property of the \ac{IR}.
% % because such a service curve would provide context-agnostic guarantees that the \ac{IR} does not enjoy.

% % \begin{proof}
%     Consider an \ac{IR} that processes three or more flows with, for the three first flows, the shaping curve $\sigma_i=\gamma_{r,b}$with $r>0$ and $b$ greater than the minimum packet size of $f_1,f_2,f_3$.
%     Take any $D>0$.

%     Then the springs' adversarial model describes a causal, lossless and \ac{FIFO}-per-flow system $S$ with service curve $\delta_S$ and a $\gamma_{r,b}$-constrained arrival for the flows $f_1,f_2,f_3$ at the input of $S$, such that the delay of the flows within $S$ is upper-bounded by $D$ but when the \ac{IR} is placed immediately after $S$ as in Figure~\ref{fig:spring:n1}, the delay of the flows through the \ac{IR} is not bounded.

% \end{proof}

\subsection{The Shaping-for-Free Property of the Interleaved Regulator Cannot be Explained by a Service Curve}
\label{sec:fifo-sff:no-explain}

Theorem~\ref{thm:instab} shows that the \ac{IR} does not provide any context-agnostic delay guarantees as a stand-alone network element.
In contrast, if a system $Z$ offers a context-agnostic service curve $\beta$ that explains a context-dependent property (\eg shaping-for-free), then $\beta$ continues to hold when $Z$ is placed in a context that differs from the assumptions of the context-dependent property.
$\beta$ can be used to compute the consequences of the deviation from the assumptions and their resulting penalties on performance bounds.
For such a system, slight deviations from the assumptions should lead to small delay penalties.

This is the case for the \ac{PFR}, for which we can find a service-curve model that explains its shaping-for-free property.
We formally define this notion as follows.

\begin{definition}[A curve explains the shaping-for-free property\label{def:fifo-sff:no-explain:curve-explains-sff}]
    Consider a set of flows $\mathcal{F}$ and a set of shaping curves $\boldsymbol{\sigma}=\{\sigma_f\}_{f\in\mathcal{F}}$.
    We say that $\beta^{\boldsymbol{\sigma}}\in\mathfrak{F}_0$ explains the shaping-for-free property if and only if: For any causal, lossless and \ac{FIFO} systems $Z'$ and $S$, if $Z'$ offers $\beta^{\boldsymbol{\sigma}}$ as a service curve, then the worst-case delay of the aggregate between $\mathtt{A}$ and $\mathtt{B}$ (Figure~\ref{fig:fifo-sff:no-explain:curve-explains-sff}) over all the trajectories $X=\{x | \forall f \in\mathcal{F}, R_f^{\mathtt{A},x}\sim\sigma_f\}$ equals the worst-case delay between $\mathtt{A}$ and $\mathtt{C}$ over the same set $X$.
\end{definition}
\begin{figure}
\centering
    \resizebox{0.6\linewidth}{!}{\begin{tikzpicture}
    \tikzstyle{n} = [draw]

    \node[n] at (0,0) (s) {$S$};
    \node[n, anchor=west] at ($(s.east)+(1,0)$) (z) {$Z'$};
    \node[anchor=south] at (z.north) {$\beta^{\boldsymbol{\sigma}}$};

    \draw[->] ($(s.west)+(-1,0)$) -- (s.west) node[pos=0.5,anchor=center](ta){};
    \draw[->] (s.east) -- (z.west) node[pos=0.5,anchor=center](tb){};
    \draw[->] (z.east) -- ++(1,0) node[pos=0.5,anchor=center](tc){};

    \node (v) at (0,0.2) {};
    \draw[dashed] ($(ta)+(v)$) -- ($(ta)-(v)$) node[pos=0,above]{$\mathtt{A}$};
    \draw[dashed] ($(tb)+(v)$) -- ($(tb)-(v)$) node[pos=0,above]{$\mathtt{B}$};
    \draw[dashed] ($(tc)+(v)$) -- ($(tc)-(v)$) node[pos=0,above]{$\mathtt{C}$};

\end{tikzpicture}}
    \caption{\label{fig:fifo-sff:no-explain:curve-explains-sff} Notations of Definition~\ref{def:fifo-sff:no-explain:curve-explains-sff}. $\beta^{\boldsymbol{\sigma}}$ explains the shaping-for-free property if any system $Z'$ that offers $\beta^{\boldsymbol{\sigma}}$ as a service curve does not increase the worst-case delay of the flows when placed after any \acs{FIFO} system $S$.}
\end{figure}
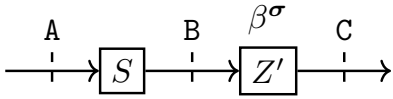

As per this definition, a function $\beta^{\boldsymbol{\sigma}}$ explains the shaping-for-free property if any system $Z'$ that offers $\beta^{\boldsymbol{\sigma}}$  as a service curve does not increase the worst-case delay of the aggregate $\mathcal{F}$ when placed in a context that meets the assumptions of Theorems~\ref{thm:pfr-for-free} and \ref{thm:ir-for-free} (\ie placed after a FIFO system $S$ with flows that are initially constrained by their shaping curves).
For the \ac{PFR}, we have a positive result:

\begin{proposition}\label{prop:fifo-sff:no-explain:pfr-sff-explain}
    Consider a \ac{PFR} that shapes a single flow $\mathcal{F}=\{f\}$ with a concave shaping curve $\sigma_f$ such that $\lim_{t\rightarrow 0}\sigma_f(t)\ge L_f^{\max}$. 
    Then the \ac{PFR} offers the fluid service curve $\beta^{\boldsymbol{\sigma}}=\sigma_f$ that explains the shaping-for-free property.
\end{proposition}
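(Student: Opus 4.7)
The plan is to split Proposition~\ref{prop:fifo-sff:no-explain:pfr-sff-explain} into two independent claims: (i) the \ac{PFR} offers $\sigma_f$ as a fluid service curve, and (ii) the curve $\sigma_f$ explains the shaping-for-free property in the sense of Definition~\ref{def:fifo-sff:no-explain:curve-explains-sff}. Claim (i) is exactly Theorem~1.7.3 of \cite{leboudecNetworkCalculus2001}: when $\sigma_f$ is concave with $\lim_{t\rightarrow 0^+}\sigma_f(t)\ge L_f^{\max}$, the packetized greedy shaper (the \ac{PFR}) admits the decomposition of Definition~\ref{def:fluid-service-curve} with the fluid component offering $\sigma_f$ as a service curve. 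I would invoke this reference directly; the concavity and minimum-burst assumptions are only used here.

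For Claim (ii), I would fix arbitrary causal, lossless and \ac{FIFO} systems $S$ and $Z'$ with $Z'$ offering $\sigma_f$ as a service curve, and a trajectory $x$ with $R^{\mathtt{A},x}_f\sim\sigma_f$. Denote by $D_S(x)$ the supremum of per-packet delays in $S$ on trajectory $x$. Three ingredients suffice. First, since $S$ is \ac{FIFO} with per-packet delays at most $D_S(x)$, the cumulative functions satisfy $R^{\mathtt{B},x}_f(s)\ge R^{\mathtt{A},x}_f(s-D_S(x))$ for every $s\ge 0$. Second, the service-curve definition \eqref{eq:background:sc-def} applied to $Z'$ gives $R^{\mathtt{C},x}_f(t)\ge (R^{\mathtt{B},x}_f\otimes\sigma_f)(t)$. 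Third, the min-plus identity $R^{\mathtt{A},x}_f\otimes\sigma_f = R^{\mathtt{A},x}_f$ holds whenever $R^{\mathtt{A},x}_f\sim\sigma_f$: one direction is immediate by taking $s=t$ in the convolution; the other is a direct rewriting of \eqref{eq:background:strict-ac-def}.

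Composing the three, a change of variable $u=s-D_S(x)$ inside the infimum yields $R^{\mathtt{C},x}_f(t)\ge (R^{\mathtt{A},x}_f\otimes\sigma_f)(t-D_S(x)) = R^{\mathtt{A},x}_f(t-D_S(x))$, so every packet crosses the concatenation $S+Z'$ in at most $D_S(x)$ time units. Taking the supremum over $x\in X$ gives that the worst-case delay from $\mathtt{A}$ to $\mathtt{C}$ is at most the worst-case delay from $\mathtt{A}$ to $\mathtt{B}$; the reverse inequality is immediate from causality of $Z'$ (no packet departs before it arrives). The two worst-case delays therefore coincide, which is Claim (ii).

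The main obstacle I anticipate is in the first ingredient, where the pointwise inequality on cumulative functions must be deduced from the per-packet \ac{FIFO} delay bound, taking care that $D_S(x)$ is a supremum over packets and that the cumulative functions are right-continuous step functions; this is routine bookkeeping but easy to state sloppily. Importantly, no concavity or minimum-burst assumption on $\sigma_f$ is used in Claim (ii), which is what makes the argument go through for any admissible shaping curve and cleanly isolates the role of those assumptions in Claim (i).
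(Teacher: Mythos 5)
Your decomposition into claims (i) and (ii) is exactly the one the paper uses, and claim (i) is handled identically, by invoking \cite[Thm.~1.7.3]{leboudecNetworkCalculus2001}. For claim (ii) you take a genuinely different route from the paper's proof. The paper works at the level of individual service curves: within $X$, $S$ offers $f$ the individual service curve $\delta_{D_f^{\mathtt{A}\rightarrow\mathtt{B}}}$, the concatenation offers $\delta_{D_f^{\mathtt{A}\rightarrow\mathtt{B}}}\otimes\sigma_f$ by the concatenation theorem \cite[Thm.~1.4.6]{leboudecNetworkCalculus2001}, and the delay is then bounded by the horizontal deviation $h(\sigma_f,\delta_{D_f^{\mathtt{A}\rightarrow\mathtt{B}}}\otimes\sigma_f)$ via \cite[Thm.~1.4.2]{leboudecNetworkCalculus2001}, simplified with \cite[Lem.~1.5.2]{leboudecNetworkCalculus2001} to $D_f^{\mathtt{A}\rightarrow\mathtt{B}}$. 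You instead unfold the min-plus machinery: you shift $R^{\mathtt{A}}_f$ by the per-trajectory delay bound $D_S(x)$, push it through the convolution with $\sigma_f$, and use the closure identity $R^{\mathtt{A}}_f\otimes\sigma_f=R^{\mathtt{A}}_f$ (valid whenever $R^{\mathtt{A}}_f\sim\sigma_f$) to conclude pointwise. Both hinge on the same core fact --- convolution with $\sigma_f$ is a no-op on a $\sigma_f$-constrained input --- but the paper cites it as a packaged lemma while you re-derive it. What the paper's route buys is a one-line delay bound once the individual-service-curve vocabulary is loaded; what your route buys is self-containedness and the explicit observation (which you make, and which the paper leaves implicit) that concavity and the minimum-burst condition are irrelevant to claim (ii). One small technical remark: you work with a trajectory-dependent bound $D_S(x)$ and then take a supremum, whereas the paper separates out the $D_f^{\mathtt{A}\rightarrow\mathtt{B}}=+\infty$ case explicitly; your argument is still correct because when $D_S(x)=+\infty$ your pointwise inequality is vacuous but causality already gives equality, and indeed you invoke causality for the reverse inequality. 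Also note the paper's cumulative functions are left-continuous, not right-continuous as you wrote, though this does not change the substance of the bookkeeping.
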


The formal proof in \proofPropPfrScModel{} directly derives from \cite[Thm. 1.7.3]{leboudecNetworkCalculus2001}.
Note that Proposition~\ref{prop:fifo-sff:no-explain:pfr-sff-explain} contains two statements: 
(1) $\sigma_f$ is a fluid service curve of the \ac{PFR}.
(2) $\sigma_f$ explains the shaping-for-free (Definition~\ref{def:fifo-sff:no-explain:curve-explains-sff}).

\begin{figure}\centering
    \def\lsize{0.37\linewidth}
    \def\rsize{0.97\linewidth-\lsize}
    \subfloat[]{
        \label{fig:fifo-sff:no-explain:pfr-model:pfr}
        \resizebox{\lsize}{!}{\begin{tikzpicture}
    \node[draw] at (0,0) (s) {$S$};
    \node[draw, anchor=west] at ($(s.east)+(0.5,0)$) (pfr) {PFR};

    \draw[->] ($(s.west)+(-0.5,0)$) -- (s.west) node[pos=0.5,anchor=center](ta){};
    \draw[->] (s.east) -- (pfr.west)            node[pos=0.5,anchor=center](tb){};
    \draw[->] (pfr.east) -- ++(0.5,0)           node[pos=0.5,anchor=center](td){};
    \phantom{
        \node[anchor=center] at (pfr.center) (zprime) {$Z'$};
        \node[anchor=south] at (zprime.north) {$\sigma$}    ;
    }
    \node (v) at (0,0.2) {};
    \draw[dashed] ($(ta.center)+(v)$) -- ($(ta.center)-(v)$) node[pos=0,above]{$\mathtt{A}$};
    \draw[dashed] ($(tb.center)+(v)$) -- ($(tb.center)-(v)$) node[pos=0,above]{$\mathtt{B}$};
    \draw[dashed] ($(td.center)+(v)$) -- ($(td.center)-(v)$) node[pos=0,above]{$\mathtt{D}$};
\end{tikzpicture}}
    }%
    \subfloat[]{
        \label{fig:fifo-sff:no-explain:pfr-model:concatenation}
        \resizebox{\rsize}{!}{\begin{tikzpicture}
    \node[draw] at (0,0) (s) {$S$};
    \node[draw, anchor=west, label=above:{$\sigma$}] at ($(s.east)+(0.5,0)$) (z) {$Z'$};
    \node[draw, anchor=west] at ($(z.east)+(0.5,0)$) (pl) {Packetizer};

    \draw[->] ($(s.west)+(-0.5,0)$) -- (s.west) node[pos=0.5,anchor=center](ta){};
    \draw[->] (s.east) -- (z.west)              node[pos=0.5,anchor=center](tb){};
    \draw[->] (z.east) -- (pl.west)             node[pos=0.5,anchor=center](tc){};
    \draw[->] (pl.east) -- ++(0.5,0)            node[pos=0.5,anchor=center](td){};

    \node (v) at (0,0.2) {};
    \draw[dashed] ($(ta.center)+(v)$) -- ($(ta.center)-(v)$) node[pos=0,above]{$\mathtt{A}$};
    \draw[dashed] ($(tb.center)+(v)$) -- ($(tb.center)-(v)$) node[pos=0,above]{$\mathtt{B}$};
    \draw[dashed] ($(tc.center)+(v)$) -- ($(tc.center)-(v)$) node[pos=0,above]{$\mathtt{C}$};
    \draw[dashed] ($(td.center)+(v)$) -- ($(td.center)-(v)$) node[pos=0,above]{$\mathtt{D}$};
\end{tikzpicture}}
    }
    \caption{\label{fig:fifo-sff:no-explain:pfr-model} Application of Proposition~\ref{prop:fifo-sff:no-explain:pfr-sff-explain} to prove Theorem~\ref{thm:pfr-for-free}. (a) A \ac{PFR} placed in the conditions of Theorem~\ref{thm:pfr-for-free}. (b) The equivalent model as per Proposition~\ref{prop:fifo-sff:no-explain:pfr-sff-explain}.}
\end{figure}
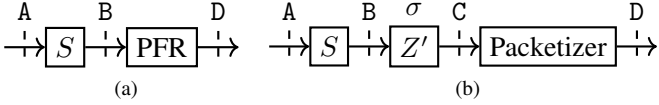

Let us discuss why these two statements prove Theorem~\ref{thm:pfr-for-free}.
Consider a causal, lossless, and \ac{FIFO} system $S$ and a \ac{PFR} configured with $\sigma_f$ placed after $S$ (Figure~\ref{fig:fifo-sff:no-explain:pfr-model:pfr}).
By combining the first statement of Proposition~\ref{prop:fifo-sff:no-explain:pfr-sff-explain} with Definition~\ref{def:fluid-service-curve}, the \ac{PFR} can be realized as the concatenation of $Z'$ followed by a packetizer (Figure~\ref{fig:fifo-sff:no-explain:pfr-model:concatenation}), where $Z'$ is a causal, lossless, and \acs{FIFO} system that offers $\beta^{\boldsymbol{\sigma}}=\sigma_f$ as a service curve.
We then combine the second statement of Proposition~\ref{prop:fifo-sff:no-explain:pfr-sff-explain} with Definition~\ref{def:fifo-sff:no-explain:curve-explains-sff}.
We obtain that if $\sigma_f$ is an arrival curve for $f$ at the input of $S$, then the worst-case delay of the flow $f$ through $S$ equals the worst-case delay of the flow through the concatenation of $S$ and $Z'$.
Finally, the packetizer does not increase the worst-case latency bounds \cite[Thm. 1.7.1]{leboudecTheoryTrafficRegulators2018}, which proves Theorem~\ref{thm:pfr-for-free}.

% Indeed, it first states that the \ac{PFR} offers the fluid shaping curve $\beta^{\sigma}\triangleq\sigma_f$.
% Per Definition~\ref{def:fluid-service-curve}, this means that there exists a system $Z'$ that offers the service curve $\beta^{\sigma}\triangleq\sigma_f$ such that the \ac{PFR} can be realized as the concatenation of $Z'$ followed by a packetizer.
% The packetizer does not increase the worst-case latency bound, nor does $Z'$ because it offers $\beta^{\sigma}$ as a service curve and $\beta^{\sigma}$ explains the shaping-for-free property.

As opposed to the \ac{PFR}, we prove that no fluid service curve can explain the shaping-for-free property of the \ac{IR}:

% When $q=1$, for any \ac{PFR} that processes $\mathcal{F}=\{f_1\}$ with a concave shaping curve $\sigma_1$ such that $\lim_{t\rightarrow 0^+}\sigma_1(t)>L^{\max}_1$, the curve $\beta^{\sigma}\triangleq \sigma_1$ explains the shaping-for-free property of the \ac{PFR}: the system $Z$ is the greedy shaper \cite[\S 1.5]{leboudecNetworkCalculus2001} with shaping curve $\sigma_1$.
% It is causal, \ac{FIFO} and lossless and we have: 1/ $\sigma_1$ is a service-curve of $Z$ \cite[Thm. 1.5.1]{leboudecNetworkCalculus2001}, 2/ for inputs that are packetized, the \ac{PFR} can be realized as the concatenation of $Z$ followed by a packetizer \cite[Thm 1.7.3]{leboudecNetworkCalculus2001} and 3/ for any curve $\alpha_1\le \sigma_1$ and for any curve $\beta_S$,  $h(\alpha_1,\beta_S\otimes\beta^{\sigma})=h(\alpha_1,\beta_S)$ \cite[Lem. 1.5.2]{leboudecNetworkCalculus2001}.

\begin{theorem}\label{thm:fifo-sff:no-explain:ir-has-no-explaination}
    An \ac{IR} that processes at least three flows with the same leaky-bucket shaping curve does not have any fluid service curve that explains its shaping-for-free property.
\end{theorem}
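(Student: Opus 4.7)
The plan is to assume, for contradiction, that some fluid service curve $\beta^{\boldsymbol{\sigma}}$ of the IR explains shaping-for-free, use Theorem~\ref{thm:instab} to upper-bound the long-term rate of $\beta^{\boldsymbol{\sigma}}$, and then contradict Definition~\ref{def:fifo-sff:no-explain:curve-explains-sff} by exhibiting a FIFO system $Z'$ that offers $\beta^{\boldsymbol{\sigma}}$ yet adds unbounded worst-case delay downstream of an identity FIFO $S$. By Definition~\ref{def:fluid-service-curve}, the hypothesis gives $\text{IR}=Z_{\text{IR}}$ followed by a packetizer, where $Z_{\text{IR}}$ is causal, lossless, FIFO, and offers $\beta^{\boldsymbol{\sigma}}$ as a context-agnostic service curve.

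For the first step, I would apply Theorem~\ref{thm:instab} to the three common-shaping-curve flows. The system $S_1$ there is lossless and FIFO-per-flow with per-packet delay at most $d$, so a standard burst-increase calculation shows that each flow at $\mathtt{B}$ admits $\gamma_{r,b+rd}$ as an arrival curve and the aggregate admits $\gamma_{3r,3(b+rd)}$. Spring yields unbounded aggregate delay in the IR, and the packetizer does not increase worst-case delay, so the aggregate delay through $Z_{\text{IR}}$ alone is also unbounded. The network-calculus delay bound applied to $Z_{\text{IR}}$ with its input arrival curve therefore forces $h(\gamma_{3r,3(b+rd)},\beta^{\boldsymbol{\sigma}})=+\infty$, from which an elementary asymptotic comparison yields $r^{*}\triangleq\lim_{t\to\infty}\beta^{\boldsymbol{\sigma}}(t)/t<3r$.

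For the second step, I take $S$ to be the identity FIFO (so the worst-case delay from $\mathtt{A}$ to $\mathtt{B}$ is zero) and $Z'$ to be the canonical minimum system defined by $R^{\mathtt{C}}=R^{\mathtt{B}}\otimes\beta^{\boldsymbol{\sigma}}$. This $Z'$ is causal, FIFO, lossless in the per-packet sense, and tightly offers $\beta^{\boldsymbol{\sigma}}$ as a service curve; it can be realized concretely as a FIFO queue served at the instantaneous rate prescribed by the min-plus convolution. On the acceptable trajectory where each of the three common-$\gamma_{r,b}$ flows bursts $b$ at time $0^{+}$ and then streams at rate $r$, the aggregate at $\mathtt{B}$ grows asymptotically at rate $3r$ while $R^{\mathtt{C}}(t)\le\beta^{\boldsymbol{\sigma}}(t)$ grows asymptotically at rate $r^{*}<3r$; the backlog at $Z'$ therefore diverges, and the aggregate worst-case delay from $\mathtt{A}$ to $\mathtt{C}$ is infinite, contradicting Definition~\ref{def:fifo-sff:no-explain:curve-explains-sff} applied to $(S,Z')$.

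The main obstacle will be the rigorous packet-level verification that the canonical $Z'$ simultaneously satisfies the four properties required by the definition (causality, losslessness, FIFO, and offering $\beta^{\boldsymbol{\sigma}}$), working directly from the definition of min-plus convolution and handling cleanly the fluid-versus-packetized interplay that is baked into the notion of a fluid service curve. The remaining pieces---the burst-increase arrival curve at $\mathtt{B}$ under the non-FIFO $S_1$ and the asymptotic argument that extracts $r^{*}<3r$ from $h=+\infty$ against a leaky bucket---are standard.
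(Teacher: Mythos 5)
Your plan is a genuine alternative route. The paper proceeds by first proving Lemma~\ref{lem:fifo-sff:no-explain:minimum-sc-sff} (that ``explains shaping-for-free'' forces $\beta^{\boldsymbol{\sigma}}\ge\sum_f\sigma_f\ge 3\gamma_{r,b}$), then constructs a \emph{FIFO} Trajectory~2 through a bounded-delay $S_2$ whose cumulative function at~$\mathtt{B}$ coincides with Spring's Trajectory~1, and finally applies the delay bound $h(3\gamma_{r,b}\oslash\delta_D,\beta^{\boldsymbol{\sigma}})\le h(3\gamma_{r,b},\delta_D)\le D<\infty$ to contradict the unbounded Spring delay. You invert the order of the two ingredients: you first squeeze, from Spring plus the fluid decomposition $Z_{\text{IR}}+$packetizer, the conclusion that $h(\gamma_{3r,3(b+rd)},\beta^{\boldsymbol{\sigma}})=+\infty$, and you then feed the canonical FIFO system $R^{\mathtt{C}}=R^{\mathtt{B}}\otimes\beta^{\boldsymbol{\sigma}}$ after an identity $S$ to contradict Definition~\ref{def:fifo-sff:no-explain:curve-explains-sff} on the greedy trajectory. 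Your second step is in effect the contrapositive of Lemma~\ref{lem:fifo-sff:no-explain:minimum-sc-sff} (the paper uses the slightly different witness $R^{\mathtt{C}}=\min(R^{\mathtt{B}},\beta^{\boldsymbol{\sigma}})$), and your first step replaces the paper's Trajectory~2/Trajectory~3 cumulative-function trick with a direct burst-increase calculation for the FIFO-per-flow $S_1$, which is cleaner. What you lose is the pointwise inequality $\beta^{\boldsymbol{\sigma}}\ge 3\gamma_{r,b}$ of Lemma~\ref{lem:fifo-sff:no-explain:minimum-sc-sff}, which the paper reuses elsewhere; what you gain is a shorter path to the contradiction without the Trajectory~2 construction.

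There is, however, one step that as stated does not hold. You claim that $h(\gamma_{3r,3(b+rd)},\beta^{\boldsymbol{\sigma}})=+\infty$ forces $r^{*}\triangleq\lim_{t\to\infty}\beta^{\boldsymbol{\sigma}}(t)/t<3r$; it does not. For instance $\beta(t)=3rt-\sqrt{t}$ (for $t$ large, with $\beta\in\mathfrak{F}_0$) has $\lim\beta(t)/t=3r$ yet $h(\gamma_{3r,c},\beta)=+\infty$ for every $c>0$, so the strict inequality on the long-term rate is not available, and the ``backlog diverges because the rates differ'' reasoning in Step~2 does not go through. The fix is to bypass $r^{*}$ entirely and transport the horizontal deviation directly: if $h(\gamma_{3r,3b},\beta^{\boldsymbol{\sigma}})=T_0<\infty$, then $\beta^{\boldsymbol{\sigma}}(t+T_0)\ge 3rt+3b$ for all $t\ge 0$, hence $\beta^{\boldsymbol{\sigma}}(t+T_0+d)\ge 3rt+3(b+rd)$, i.e.\ $h(\gamma_{3r,3(b+rd)},\beta^{\boldsymbol{\sigma}})\le T_0+d<\infty$, contradicting Step~1. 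Therefore $h(\gamma_{3r,3b},\beta^{\boldsymbol{\sigma}})=+\infty$, and since the worst-case delay of the canonical $Z'$ on the greedy trajectory in $X$ equals exactly this horizontal deviation, you obtain the contradiction with Definition~\ref{def:fifo-sff:no-explain:curve-explains-sff}. With that repair (and the packet-level verification of the four properties of $Z'$, plus the implicit FIFO property of $Z_{\text{IR}}$ that the paper also invokes without proof in Definition~\ref{def:fluid-service-curve}), your route is sound.
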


To prove Theorem~\ref{thm:fifo-sff:no-explain:ir-has-no-explaination}, we rely on the following lemma, that we prove in \proofLemmaMinimumScExplain{}.

\begin{lemma}\label{lem:fifo-sff:no-explain:minimum-sc-sff}
    If $\beta^{\boldsymbol{\sigma}}$ explains the shaping-for-free (Definition~\ref{def:fifo-sff:no-explain:curve-explains-sff}), then $\beta^{\boldsymbol{\sigma}} \ge \sum_{f\in\mathcal{F}}\sigma_f$
\end{lemma}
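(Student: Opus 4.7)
The plan is to argue by contradiction. Suppose there exists $t_0 > 0$ such that $\beta^{\boldsymbol{\sigma}}(t_0) < \sum_{f\in\mathcal{F}} \sigma_f(t_0)$, and let $\delta \triangleq \sum_{f\in\mathcal{F}} \sigma_f(t_0) - \beta^{\boldsymbol{\sigma}}(t_0) > 0$. I will exhibit a system $S$, a system $Z'$ offering $\beta^{\boldsymbol{\sigma}}$ as a service curve, and a trajectory in the admissible set $X$ of Definition~\ref{def:fifo-sff:no-explain:curve-explains-sff}, under which the aggregate suffers strictly positive delay between $\mathtt{A}$ and $\mathtt{C}$ while it suffers zero delay between $\mathtt{A}$ and $\mathtt{B}$. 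This contradicts the hypothesis that $\beta^{\boldsymbol{\sigma}}$ explains the shaping-for-free property, and therefore no such $t_0$ can exist, which is exactly $\beta^{\boldsymbol{\sigma}} \ge \sum_{f\in\mathcal{F}}\sigma_f$.

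For the construction, I take $S$ to be the identity (zero-delay) system, which is trivially causal, lossless, and FIFO, so the worst-case aggregate delay between $\mathtt{A}$ and $\mathtt{B}$ is zero. I let each flow $f$ arrive greedily against its shaping curve, i.e.~$R_f^{\mathtt{A}} = \sigma_f$, so that $R_f^{\mathtt{A}} \sim \sigma_f$ holds and $R^{\mathtt{B}} = \sum_{f\in\mathcal{F}} \sigma_f$. For $Z'$, I use the canonical minimum-service realization whose output equals $R^{\mathtt{C}} = R^{\mathtt{B}} \otimes \beta^{\boldsymbol{\sigma}}$: because $\beta^{\boldsymbol{\sigma}}(0) = 0$, one has $R^{\mathtt{C}} \le R^{\mathtt{B}}$ (causal and lossless); serving bits in arrival order makes it FIFO for the aggregate; and $\beta^{\boldsymbol{\sigma}}$ is offered as a service curve with equality. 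Bounding the infimum defining the min-plus convolution at $t_0$ by its $s = 0$ term yields $R^{\mathtt{C}}(t_0) \le R^{\mathtt{B}}(0) + \beta^{\boldsymbol{\sigma}}(t_0) = \beta^{\boldsymbol{\sigma}}(t_0) < R^{\mathtt{A}}(t_0)$, so the aggregate backlog at $t_0$ is at least $\delta > 0$. Hence the last bit arrived by $t_0$ has not yet departed, the worst-case aggregate delay between $\mathtt{A}$ and $\mathtt{C}$ is strictly positive, and the desired contradiction follows.

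The main obstacle I anticipate is the formal verification that the greedy $Z'$ qualifies as a legitimate causal, lossless, FIFO system in the packetized model used in the paper, rather than merely as a fluid abstraction. For the leaky-bucket $\sigma_f$ that appear in Theorem~\ref{thm:fifo-sff:no-explain:ir-has-no-explaination}, one can packetize the greedy arrivals at small granularity and, if required, replace the identity $S$ by a minimal store-and-forward element. The strict inequality $\delta > 0$ is preserved under any such discretization, so the contradiction survives packetization and the lemma follows.
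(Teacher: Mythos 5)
Your proposal is correct and follows essentially the same route as the paper's own proof: both take $S$ to be the zero-delay system, feed the greedy trajectory $R_f^{\mathtt{A}} = \sigma_f$, and pick a minimal-service realization of $Z'$ (the paper sets $R^{\mathtt{C}} = \min(R^{\mathtt{B}}, \beta^{\boldsymbol{\sigma}})$, you set $R^{\mathtt{C}} = R^{\mathtt{B}} \otimes \beta^{\boldsymbol{\sigma}}$; both satisfy $R^{\mathtt{C}}(t_0) \le \beta^{\boldsymbol{\sigma}}(t_0)$ and force the same contradiction with the zero worst-case delay that Definition~\ref{def:fifo-sff:no-explain:curve-explains-sff} demands). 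Your closing concern about packetization is unnecessary here: the definition and the lemma operate directly on cumulative functions, and the paper's own proof likewise constructs $Z'$ as a fluid abstraction with no discretization step.
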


By reusing Spring (from the proof of Theorem~\ref{thm:instab}), we then exhibit a trajectory that shows that a function larger than $\sum_{f\in\mathcal{F}}\sigma_f$ cannot be a fluid service curve of the \ac{IR}.
The formal proof of Theorem~\ref{thm:fifo-sff:no-explain:ir-has-no-explaination} is in \proofIrNoScExplain{}.

\section{Service Curves of The Interleaved Regulator}
\label{sec:ir-service-curves}

In the previous section, we use the sensitivity of the \acf{IR} to the \ac{FIFO} assumption in Theorem~\ref{thm:ir-for-free} to prove that the \ac{IR} has no fluid service curve that can explain its shaping-for-free property.

In this section, we show that the \ac{IR} still offers a family of non-trivial context-agnostic service curves. 
In particular, we exhibit a non-bounded strict service curve for the \ac{IR} (Theorem~\ref{thm:ir-service-curves:strict-sc}).
This strict-service-curve model is of interest for understanding the behavior of the \ac{IR} in situations that differ from the shaping-for-free property.
It can be used to model the \ac{IR} in service-curve-oriented analysis like \ac{SFA} and \ac{PMOO}.
%To the best of our knowledge, it is also the first non-zero service curve obtained for the \ac{IR}.

However, any service curves of the \ac{IR} are also fluid service curves of the \ac{IR}, and we know from Theorem~\ref{thm:fifo-sff:no-explain:ir-has-no-explaination} that they must be weak because they cannot explain its shaping-for-free property.
Indeed, their long-term rate is upper bounded:
For an \ac{IR} that processes more than four flows, the long-term rate of any of its service curves is upper bounded by three times the rate enforced for a single flow, as we show in Theorem~\ref{thm:ir-service-curves:limit-sc}.

For an \ac{IR} that processes at least four flows, we also prove that any individual service curve $\beta_g$ offered to a single flow $g$ is upper bounded by its minimum packet size $L_{g}^{\min}$ (Theorem~\ref{thm:ir-service-curves:limit-ir-residual}).
% This means that any service-curve model of the service offered by the \ac{IR} to a single flow $f\in\mathcal{F}$ can only guarantee that one single packet of $\smash{f}$ will ever cross the \ac{IR} over the entire lifetime of the network.

The section is organized as follows. 
First, we obtain a strict service curve of the \ac{IR} (Theorem~\ref{thm:ir-service-curves:strict-sc}) by using the input/output models of \cite{leboudecTheoryTrafficRegulators2018,boyerEquivalenceTheoreticalModel2022}.
Then, we use the Spring trajectory from Section~\ref{sec:fifo-sff} to obtain upper bounds on the individual service curve of the \ac{IR} for a single flow (Theorem~\ref{thm:ir-service-curves:limit-ir-residual}).
Last, we use Theorem~\ref{thm:ir-service-curves:limit-ir-residual} to upper-bound the long-term rate of any context-agnostic service curve of the \ac{IR} for the aggregate (Theorem~\ref{thm:ir-service-curves:limit-sc}).

\subsection{A Strict Service Curve for the Aggregate}
\label{sec:ir-service-curves:strict-sc}

Even though Spring's Trajectory from Section~\ref{sec:fifo-sff:spring} generates a constant build-up of delay in the \ac{IR}, we can observe in Figure~\ref{fig:fifo-sff:spring:n1} that the \ac{IR} continuously outputs two packets every $I$ seconds.
In fact, we can find a minimum output rate whenever the \ac{IR} is non-empty.
This shows that the \ac{IR} offers a strict service curve as defined in Section~\ref{sec:background:nc-sc}:

% In Spring's Trajectory 1 depicted in figure~\ref{fig:fifo-sff:spring:n1}, we can observe that the delay of the packets through the \ac{IR} is not bounded, but the \ac{IR} still continuously output 

% As highlighted in Figure~\ref{fig:fifo-sff:spring:n1}, even when the delay is unbounded, the \ac{IR} continues to output packets when non-empty, at a rate of two packets every $I$ seconds in the Spring trajectory.
% In fact, we can obtain a strict service curve of the \ac{IR};

\begin{figure}
    \centering
        \resizebox{0.95\linewidth}{!}{\begin{tikzpicture}
    \tikzstyle{ma} = [
        shorten >=-3pt,
        shorten <=-3pt
    ]

    \tikzstyle{beta1} = [blue, dashed]
    \tikzstyle{beta2} = [red]
    \tikzstyle{beta3} = [OliveGreen, dashdotted, line width=1pt]

    \begin{axis}[
        axis x line=center,
        axis y line=center,
        grid=both,
        xmin=0,
        ymin=0,
        xmax=4.9,
        ymax=5,
        ytick={1,2,3,4},
        yticklabels={$L^{\min}$,$2L^{\min}$,$3L^{\min}$,$4L^{\min}$},
        xtick={1,2,3,4},
        xticklabels={$I^{\max}$,$2I^{\max}$,$3I^{\max}$,$4I^{\max}$},
        axis on top,
        xlabel={time interval},
        ylabel={data}
    ]
        
    \path[fill=black!10] (axis cs:1,0) -- (axis cs:1,1) -- (axis cs:2,1) -- (axis cs:2,2) -- (axis cs:3,2) -- (axis cs:3,3) -- (axis cs:4,3) -- (axis cs:4,4) -- (axis cs:5,4) -- (axis cs:5,0) -- cycle;
    
    \draw[line width=1.25pt, beta2, -{Circle[length=6pt,width=6pt]}, ma] (axis cs:0,0) -- (axis cs:1,0);
    \draw[line width=1.25pt, beta2, {Circle[open, length=6pt,width=6pt]}-{Circle[length=6pt,width=6pt]}, ma] (axis cs:1,1) -- (axis cs:2,1);
    \draw[line width=1.25pt, beta2, {Circle[open, length=6pt,width=6pt]}-{Circle[length=6pt,width=6pt]}, ma] (axis cs:2,2) -- (axis cs:3,2);
    \draw[line width=1.25pt, beta2, {Circle[open, length=6pt,width=6pt]}-{Circle[length=6pt,width=6pt]}, ma] (axis cs:3,3) -- (axis cs:4,3) node[pos=0.5,above,fill=white,draw,solid,line width=0.5pt]{$\beta_{\text{sc}}$};
    \draw[line width=1.25pt, beta2, {Circle[open, length=6pt,width=6pt]}-{Circle[length=6pt,width=6pt]}, ma] (axis cs:4,4) -- (axis cs:5,4);
    \draw[line width=1.25pt, beta1, -{Circle[length=6pt,width=6pt]}, ma] (axis cs:0,0) -- (axis cs:1,0);
    \draw[line width=1.25pt, beta1, {Circle[open,length=6pt,width=6pt]}-, ma] (axis cs:1,1) -- (axis cs:5,1)  node[pos=0.7,above,fill=white,draw,solid,line width=0.5pt]{$\beta_0$};
    \draw[line width=1.25pt, beta3] (axis cs:0,0) -- (axis cs:1,0) -- (axis cs:5,4) node[pos=0.6,sloped,below,fill=white,draw,solid,line width=0.5pt]{$\beta_{\frac{L^{\min}}{I^{\max}},I^{\max}}$};

    \end{axis}
\end{tikzpicture}}
        \caption{\label{fig:ir-service-curves:strict-sc:result-figure} Three different strict service curves of the \ac{IR} (Theorem~\ref{thm:ir-service-curves:strict-sc}).
        The step dashed blue function $\beta_0$ is the first curve obtained in the theorem's proof.
        Its supper-additive closure, the solid red function $\beta_\text{sc}$ is then also a strict service curve, as well as the dash-dotted green rate-latency curve $\beta_{\frac{L^{\min}}{I^{\max}},I^{\max}}$.}
    \end{figure}
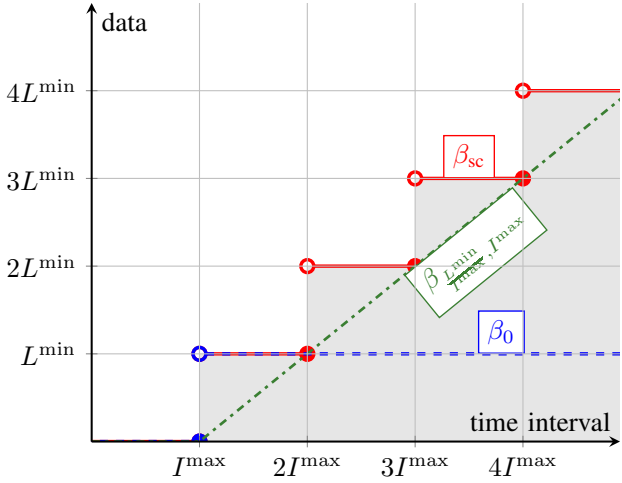

\begin{theorem}[IR strict service curve]\label{thm:ir-service-curves:strict-sc}
    Consider an \ac{IR} that processes an aggregate of flows $\mathcal{F}$ with leaky-bucket shaping curves: $\forall f\in\mathcal{F}, \sigma_f=\gamma_{r_f,b_f}$ with $r_f> 0$ and $b_f\ge L_f^{\max}$, where $L_f^{\min}$ [\resp $L_f^{\max}$] is the minimum [\resp maximum] packet size of $f$.
    Define
    \begin{equation}
        L^{\min} = \min_{f\in\mathcal{F}}L_f^{\min} \quad I^{\max}=\max_{f\in\mathcal{F}}\frac{L_f^{\max}}{r_f}
    \end{equation}
    Then the staircase curve $\beta_{\text{sc}}:t\mapsto \left\lfloor t/I^{\max}\right\rfloor\cdot L^{\min}$ (with $\lfloor\cdot\rfloor$ the floor function) and the rate-latency curve $\beta_{R,T}$ and $T=I^{\max}$ and $R=L^{\min}/I^{\max}$ (Figure~\ref{eq:ir-service-curves:strict-sc:supper-additive-closure}) are context-agnostic strict service curves of the \ac{IR}.
\end{theorem}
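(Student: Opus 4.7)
My plan is to bound the output rate of the \ac{IR} during any backlogged period by establishing a uniform maximum head-of-line (HOL) delay, and then to derive the three claimed curves in layers: a single-step curve $\beta_0$ (the dashed blue function in Figure~\ref{fig:ir-service-curves:strict-sc:result-figure}), its super-additive closure $\beta_{\text{sc}}$, and finally the rate-latency curve $\beta_{R,T}$ which lies pointwise below $\beta_{\text{sc}}$.

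The crucial step will be a HOL-delay bound derived from the input-output characterizations of the \ac{IR} in \cite{leboudecTheoryTrafficRegulators2018,boyerEquivalenceTheoreticalModel2022}. Specifically, for any packet $p$ of flow $f$ with size $L_p$ that becomes HOL at time $u_0$, I would show that the release time $T_p$ satisfies $T_p \le u_0 + L_p/r_f \le u_0 + I^{\max}$. The idea is that if $E_p$ denotes the earliest time at which releasing $p$ is compatible with $\gamma_{r_f,b_f}$ on the flow-$f$ output, then $T_p = \max(u_0,E_p)$ by the \ac{IR}'s operating rule, and the worst-case eligibility satisfies $E_p \le T_{p'} + L_p/r_f$, where $p'$ is the previous flow-$f$ packet. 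This uses the hypothesis $b_f \ge L_f^{\max}$, which guarantees that the token bucket can always store enough tokens for a full packet, so the worst-case refill time between consecutive same-flow releases is exactly $L_p/r_f$. Since the \ac{IR} is \ac{FIFO} for the aggregate, $T_{p'} \le u_0$, and combining these inequalities yields the bound.

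With the HOL bound in hand, fix any backlogged period $(s,t]$: at $s^+$ some packet is HOL and is released by $s + I^{\max}$; the next HOL packet is released by $s + 2I^{\max}$; and so on. Hence by time $s + kI^{\max}$ at least $k$ packets, each of size at least $L^{\min}$, have been released. This yields directly $R^{\mathtt{D}}(t) - R^{\mathtt{D}}(s) \ge \lfloor (t-s)/I^{\max}\rfloor \cdot L^{\min} = \beta_{\text{sc}}(t-s)$, proving that $\beta_{\text{sc}}$ is a strict service curve. The single-step bound $\beta_0$ is the restriction to $k = 1$, and $\beta_{\text{sc}}$ is precisely its super-additive closure, which inherits the strict service curve property because every sub-interval of a backlogged period is itself backlogged. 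Finally, a direct pointwise check gives $\beta_{R,T}(t) \le \beta_{\text{sc}}(t)$ for all $t \ge 0$ when $R = L^{\min}/I^{\max}$ and $T = I^{\max}$, so $\beta_{R,T}$ is also a strict service curve.

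I expect the main obstacle to be the HOL-delay bound itself: pinning down the exact dependency on the token-bucket state, particularly when the previous same-flow release occurred strictly before the backlogged period started (so tokens may have re-accumulated), and showing that $b_f \ge L_f^{\max}$ is precisely the hypothesis that keeps the worst-case eligibility gap bounded by $L_p/r_f$. The remaining counting argument, super-additive closure step, and rate-latency comparison are essentially bookkeeping.
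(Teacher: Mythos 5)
Your proof follows essentially the same route as the paper's: you derive the same HOL-delay bound $D_n \le u_0 + L_n/r_f \le u_0 + I^{\max}$ from Boyer's token-bucket model (using $\Lambda_{f,p'} \ge 0$ and $b_f \ge L_f^{\max}$), build the single-step curve $\beta_0$, pass to $\beta_{\text{sc}}$ via super-additivity, and then dominate $\beta_{R,T}$ by $\beta_{\text{sc}}$. Your direct ``at least $k$ packets leave within $kI^{\max}$'' count is just an unrolled form of the paper's invocation of the super-additive-closure lemma \cite[Prop.~5.6.2]{bouillardDeterministicNetworkCalculus2018}, and the only technicality you gloss over (and the paper handles with an $\epsilon$-limit in its proof, taking $\epsilon \to 0$ in $\delta_{I^{\max}+\epsilon}$) is the boundary behaviour of the staircase at multiples of $I^{\max}$, which both treatments inherit from the left-continuity convention on cumulative functions.
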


To prove this result, we consider a non-empty \ac{IR}.
The output time of the head-of-line packet is given by the input/output characterizations of \cite{leboudecTheoryTrafficRegulators2018,boyerEquivalenceTheoreticalModel2022}.
It can be upper-bounded.
From this we obtain that the dashed-blue curve $\beta_0$ in Figure~\ref{fig:ir-service-curves:strict-sc:result-figure} is a strict service curve of the \ac{IR}, and so is its supper-additive closure \cite[Prop 5.6]{bouillardDeterministicNetworkCalculus2018}, defined as the function
\begin{equation}\label{eq:ir-service-curves:strict-sc:supper-additive-closure}
    \beta_0 \vee \left(\beta_0 \overline{\otimes} \beta_0\right) \vee \left(\left(\beta_0 \overline{\otimes} \beta_0\right) \overline{\otimes} \beta_0\right) \vee \dots 
\end{equation}
where $\vee$ is the maximum and $\overline{\otimes}$ is the max-plus convolution\footnote{Defined by $(\mathfrak{f} \overline{\otimes} \mathfrak{g}):t\mapsto\sup_{s\le t}\mathfrak{f}(s)+\mathfrak{g}(t-s)$}. 
% As $\beta_{\text{sc}}$ is a strict service curve of the \ac{IR}, we have essentially obtained an envelope of strict service curves for the \ac{IR}, shown as a shaded area in Figure~\ref{fig:ir-service-curves:strict-sc:result-figure}.
The computation of \eqref{eq:ir-service-curves:strict-sc:supper-additive-closure} gives $\beta_{\text{sc}}$, shown in solid red in Figure~\ref{fig:ir-service-curves:strict-sc:result-figure}.
Any wide-sense increasing curve that remains below the $\beta_{\text{sc}}$ service curve is also a strict service curve of the \ac{IR} \cite[Prop 5.6]{bouillardDeterministicNetworkCalculus2018}.
This is the case for the rate-latency service curve $\beta_{L^{\min}/I^{\max},I^{\max}}$ shown with a dash-dotted green line in Figure~\ref{fig:ir-service-curves:strict-sc:result-figure}.
The formal proof of Theorem~\ref{thm:ir-service-curves:strict-sc} is in \proofIrStrictSc.

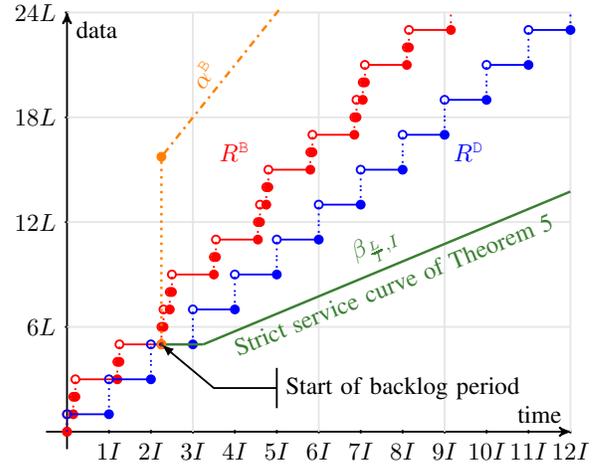
\begin{figure}
    \centering
        \resizebox{0.9\linewidth}{!}{\begin{tikzpicture}[xscale=0.3,yscale=0.25]

    \tikzstyle{ma} = [
        shorten >=-2pt,
        shorten <=-2pt
    ]

    \tikzstyle{min} =  [red]
    \tikzstyle{mout} = [blue]
    \tikzstyle{f1} = [dotted]
    \tikzstyle{f2} = [dotted]
    \tikzstyle{f3} = [dotted]
    \tikzstyle{sc} = [line width=1pt, OliveGreen]
    \tikzstyle{ac} = [line width=1pt, orange]
    \tikzstyle{connect} = [ma, {Circle[open,length=4pt,width=4pt]}-{Circle[length=4pt,width=4pt]}]
    % \tikzstyle{inconnect} = [dashed]
    % \tikzstyle{outconnect} = [-]

    \tikzmath{
        int \x;
        int \y;
        int \k;
        int \XMAX;
        int \YMAX;
        int \KMAX;
        real \sI;
        real \sD;
        real \sTau;
        real \sEpsilon;
        real \sShift;
        real \xIn;
        real \xOut;
        real \lastXIn;
        real \lastXOut;
        \KMAX = 3;
        \YMAX = (\KMAX+1)*6;
        \XMAX = (\KMAX+1)*3;
        \sI = 2;
        \sD = 1.7;
        \sBigD = 1.72;
        \sEpsilon = 0.1;
        \sTau = 3 * \sI + 3 * \sEpsilon - \sD;%
        \sShift = -2 * \sD;
        real \GRAD;
        \GRAD = 0.1;
        {
            \draw[-stealth'] (-1,0) -- (\XMAX*\sI,0) node[pos=1,anchor=south east]{time};
            \draw[-stealth'] (0,-1) -- (0,\YMAX) node[pos=1,anchor=north west]{data};
        };
        for \x in {1,..,\XMAX}{
            {
                \draw (\x*\sI,\GRAD) -- (\x*\sI,-\GRAD) node[pos=1,below]{$\x I$};
            }; 
        };
        for \x in {3,6,..,\XMAX}{
            {
                \draw[color=black!10] (\x*\sI,0) -- (\x*\sI,\YMAX);
            }; 
        };
        for \y in {6,12,..,\YMAX}{
            {
                \draw (\GRAD,\y) -- (-\GRAD,\y) node[pos=1,left]{$\y L$};
                \draw[color=black!10] (0,\y) -- (\XMAX*\sI,\y);
            }; 
        };
        \y = 0;
        \lastXIn = 0;
        \lastXOut = 0;
        \xBacklog = -1;
        \yBacklog = -1;
        for \k in {0,..,\KMAX}{
            \xIn = \sShift + 2*\sD + \k*\sTau;
            {
                \draw[f1,min] (\xIn,\y) -- ++(0,1);
                \draw[min, connect] (\lastXIn,\y) -- (\xIn,\y);
            };
            \xOut = 3*\k*\sI;
            {
                \draw[f1,mout] (\xOut,\y) -- ++(0,1);
                %\draw[mout, connect] (\lastXOut,\y) -- (\xOut,\y);
            };
            \y = \y + 1;
            \lastXIn = \xIn;
            \lastXOut = \xOut;
            \xIn = \sShift + \sI + \sD + \k*\sTau;
            {
                \draw[f1,min] (\xIn,\y) -- ++(0,1);
                \draw[min, connect] (\lastXIn,\y) -- (\xIn,\y);
            };
            \xOut = 3*\k*\sI + \sI; 
            {
                \draw[f1,mout] (\xOut,\y) -- ++(0,1);
                \draw[mout, connect] (\lastXOut,\y) -- (\xOut,\y);
            };
            if (\xIn < \lastXOut) then {
                if (\xBacklog < 0) then {
                    \xBacklog = \lastXIn-\sEpsilon;
                    \yBacklog = \y-2;
                };
            };
            \y = \y + 1;
            \lastXIn = \xIn;
            \lastXOut = \xOut;
            \xIn = \sShift + \sI + \sEpsilon + \sD + \k*\sTau;
            {
                \draw[f2,min] (\xIn,\y) -- ++(0,1);
                \draw[min, connect] (\lastXIn,\y) -- (\xIn,\y);
            };
            \xOut = 3*\k*\sI + \sI; 
            {
                \draw[f2,mout] (\xOut,\y) -- ++(0,1);
                %\draw[mout, connect] (\lastXOut,\y) -- (\xOut,\y);
            };
            \y = \y + 1;
            \lastXIn = \xIn;
            \lastXOut = \xOut;
            \xIn = \sShift + 2*\sI + \sEpsilon + \sD + \k*\sTau;
            {
                \draw[f2,min] (\xIn,\y) -- ++(0,1);
                \draw[min, connect] (\lastXIn,\y) -- (\xIn,\y);
            };
            \xOut = 3*\k*\sI + 2*\sI; 
            {
                \draw[f2,mout] (\xOut,\y) -- ++(0,1);
                \draw[mout, connect] (\lastXOut,\y) -- (\xOut,\y);
            };
            if (\xIn < \lastXOut) then {
                if (\xBacklog < 0) then {
                    \xBacklog = \lastXIn-\sEpsilon;
                    \yBacklog = \y-2;
                };
            };
            \y = \y + 1;
            \lastXIn = \xIn;
            \lastXOut = \xOut;
            \xIn = \sShift + 2*\sI + 2*\sEpsilon + \sD + \k*\sTau;
            {
                \draw[f3,min] (\xIn,\y) -- ++(0,1);
                \draw[min, connect] (\lastXIn,\y) -- (\xIn,\y);
            };
            \xOut = 3*\k*\sI + 2*\sI; 
            {
                \draw[f3,mout] (\xOut,\y) -- ++(0,1);
                %\draw[mout, connect] (\lastXOut,\y) -- (\xOut,\y);
            };
            \y = \y + 1;
            \lastXIn = \xIn;
            \lastXOut = \xOut;
            \xIn = \sShift + 3*\sI + 2*\sEpsilon + \sD + \k*\sTau;
            {
                \draw[f3,min] (\xIn,\y) -- ++(0,1);
                \draw[min, connect] (\lastXIn,\y) -- (\xIn,\y);
            };
            \xOut = 3*\k*\sI + 3*\sI; 
            {
                \draw[f3,mout] (\xOut,\y) -- ++(0,1);
                \draw[mout, connect] (\lastXOut,\y) -- (\xOut,\y);
            };
            if (\xIn < \lastXOut) then {
                if (\xBacklog < 0) then {
                    \xBacklog = \lastXIn-\sEpsilon;
                    \yBacklog = \y-2;
                };
            };
            \y = \y + 1;
            \lastXIn = \xIn;
            \lastXOut = \xOut;
        };
        % \xBacklog = \sShift + 3*\sI + 2*\sEpsilon + \sD;
        % \yBacklog = 5;
        \scR = 1/\sI;
        \scT = \sI;
        \acR = 3*(1/\sI);
        \acB = 3 + 3*\acR*\sBigD;
        \scY = (\XMAX*\sI - (\xBacklog + \scT)) * \scR + \yBacklog;
        \acX = ((\YMAX - \yBacklog - \acB) / \acR + \xBacklog);
        {
            \draw[sc] (\xBacklog,\yBacklog) -- (\xBacklog + \scT, \yBacklog);
            \draw[sc] (\xBacklog + \scT, \yBacklog) -- (\XMAX*\sI,\scY) node[pos=0.5,above, sloped]{$\beta_{\frac{L}{I},I}$} node[pos=0.5,below,rotate=22]{Strict service curve of Theorem~\ref{thm:ir-service-curves:strict-sc}};    
            \draw[ac, dotted, ma,{Circle[open,length=4pt,width=4pt]}-] (\xBacklog,\yBacklog) -- (\xBacklog, \yBacklog + \acB);
            \draw[ac, dashdotted, ma, {Circle[length=4pt,width=4pt]}-] (\xBacklog, \yBacklog + \acB) -- (\acX,\YMAX) node[pos=0.5, sloped, above] {$\alpha^{\texttt{B}}$};
            \node[anchor=west] at (5*\sI,2.5) (sob) {Start of backlog period};
            \draw (sob.north west) -- (sob.south west);
            \draw[-latex] (sob.west) -- ++(-3,0) -- (\xBacklog,\yBacklog);
            \node[anchor=west, blue] at (9*\sI, 16) {$R^{\texttt{D}}$};
            \node[red] at (4*\sI, 16) {$R^{\texttt{B}}$};
        };
    }   

\end{tikzpicture}}
        \caption{\label{fig:ir-service-curves:strict-sc:application} Application of Theorem~\ref{thm:ir-service-curves:strict-sc} to the Spring trajectory of Section~\ref{sec:fifo-sff:spring}. The dotted red [\resp blue] curve is the cumulative function at the input [\resp output] of the \ac{IR}. The dash-dotted orange leaky-bucket curve is an arrival curve of the aggregate at the input of the \ac{IR}. The solid green rate-latency curve is a strict service curve of the \ac{IR}, as proved by Theorem~\ref{thm:ir-service-curves:strict-sc}.}
    \end{figure}

\emph{Application to the Situation of Section~\ref{sec:fifo-sff:spring}.}
Figure~\ref{fig:ir-service-curves:strict-sc:application} shows the cumulative arrival function $R^{\texttt{B}}$ at the input of the \ac{IR} as well as the cumulative departure function $R^{\texttt{D}}$ at the output of the \ac{IR}, in Spring's trajectory\footnote{With the notations of \eqref{eq:spring-constants}, the parameters used in Figure~\ref{fig:ir-service-curves:strict-sc:application} are: $D=0.86I, d=0.85I$, $\epsilon=0.05I$.} described in Section~\ref{sec:fifo-sff:spring} and Figure~\ref{fig:fifo-sff:spring:n1}.
We also provide the arrival curve $\alpha^{\texttt{B}}$ of the aggregate at the input of the \ac{IR}.

In Section~\ref{sec:fifo-sff:spring}, all packets of the aggregate have the same size $L$ and all three flows have the same leaky-bucket shaping curve $\sigma_{f_1}=\sigma_{f_2}=\sigma_{f_3}=\gamma_{r,b}$ with $b=L$.
The application of Theorem~\ref{thm:ir-service-curves:strict-sc} gives that $\beta_{\frac{L}{I},{I}}=\beta_{r,\frac{b}{r}}$ is a context-agnostic strict service curve of the \ac{IR} for the aggregate.
In Figure~\ref{thm:ir-service-curves:strict-sc}, we place this curve in green at the beginning of the backlog period.
We confirm that when the \ac{IR} is non-empty, the cumulative output $R^{\mathtt{B}}$ is larger than the strict service curve.
Also, it is clear that the horizontal deviation between the arrival curve $\alpha^{\mathtt{B}}$ and the service curve $\beta_{\frac{L}{I},{I}}$ is not bounded.
This means that the network-calculus theory cannot provide a latency bound with this service-curve model, which is consistent with Theorem~\ref{thm:instab}.% that the delay inside the \ac{IR} is not bounded for this situation.

% The \ac{IR} of Section~\ref{sec:fifo-sff:spring} (Figure~\ref{fig:fifo-sff:spring:n1}) processes three flows $\mathcal{F}=f_1,f_2,f_3$ and is configured with the same shaping curve for each flow: $\forall f\in\mathcal{F}, \sigma_f=\gamma_{r,b}$. 
% In addition, all packets in Section~\ref{sec:fifo-sff:spring} have the same size, equal to $b$.
% By applying Theorem~\ref{thm:ir-service-curves:strict-sc}, the \ac{IR} in Figure~\ref{fig:fifo-sff:spring:n1} offers to the aggregate $\mathcal{F}$ the rate-latency curve $\beta_{r,b/r}$ as a strict service curve.

% In the situation used for Spring, three flows $\mathcal{F}=\{f_1,f_2,f_3\}$ are processed by the \ac{IR} with the same contract: $\sigma_{f_1}=\sigma_{f_2}=\sigma_{f_3}=\gamma_{r,b}$. 
% Furthermore, all packets have the same size, equal to $b$.
% From Theorem~\ref{thm:ir-service-curves:strict-sc}, we obtain that the \ac{IR} in Figure~\ref{fig:fifo-sff:spring:n1} offers to $\mathcal{F}$ the staircase service curve $t\mapsto\left\lfloor \frac{tr}{b} \right\rfloor \cdot b$ as well as the rate-latency service curve $\beta_{R,T}$ with a rate $R=r$ and a latency $T=I$.

% We confirm that after the second packet is received, the \ac{IR} is never empty in Figure~\ref{fig:fifo-sff:spring:n1}, and the output is larger than the staircase and rate-latency curves.
% In fact, the output rate at $\mathtt{D}$ in Figure~\ref{fig:fifo-sff:spring:n1} (two packets every $I$ seconds, \ie $2r$) is larger than the rate guaranteed by the strict service curves ($r$).

In Figure~\ref{fig:ir-service-curves:strict-sc:application}, the rate $\frac{L}{I}=r$ of the green service curve $\beta_{\frac{L}{I},{I}}$  does not follow the long-term rate $\frac{2L}{I}=2r$ of the output cumulative function $R^{\mathtt{D}}$.
However, no better rate for the rate-latency context-agnostic strict service curve can be achieved:

\begin{proposition}[Upper Bound on the Strict Service Curve\label{prop:ir-service-curves:strict-sc:limit}]
    Consider an \ac{IR} that processes an aggregate $\mathcal{F}$ with leaky-bucket shaping curves $\{\sigma_f\}_{f\in\mathcal{F}}=\{\gamma_{r_f,b_f}\}_{f\in\mathcal{F}}$.   
    Consider a curve $\beta^{\text{strict}}$ and assume that $\beta^{\text{strict}}$ is a context-agnostic strict service curve of the \ac{IR}. Then
    \begin{equation}
        \forall t \ge 0, \quad \beta^{\text{strict}}(t) \le \min_{f\in\mathcal{F}}\sigma_f(t)
    \end{equation}
    In particular, 
    %if the shaping curves are leaky-bucket shaping curves $\forall f\in\mathcal{F},\ \sigma_f=\gamma_{r_f,b_f}$ and 
    if $\beta^{\text{strict}}=\beta_{R,T}$ is a rate-latency curve, then $R \le \min_{f\in\mathcal{F}}r_f$.
\end{proposition}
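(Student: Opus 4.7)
My approach is to exhibit, for each flow $f\in\mathcal{F}$ separately, a context-agnostic adversarial trajectory in which the \ac{IR} is kept backlogged by packets of $f$ alone for an arbitrarily long interval, and then to sandwich the strict service curve between a lower bound (from its own definition) and an upper bound (from the per-flow shaping enforced by the \ac{IR}).

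First, I would fix $f\in\mathcal{F}$ and a desired duration $\tau>0$, and exploit the absence of any input assumption to inject, into an initially empty \ac{IR}, a burst of $N$ packets of flow $f$ during an arbitrarily short window around an instant $s$, with $N$ large enough to keep the \ac{IR} backlogged throughout $]s,s+\tau]$. Since the \ac{IR}'s release rule shapes the output of $f$ to $\sigma_f=\gamma_{r_f,b_f}$, the cumulative output by time $s+\tau$ is at most $r_f\tau+b_f$, so it suffices to choose the total injected volume to exceed this value. Because only $f$ is active in this trajectory, $R^{\mathtt{D}}=R^{\mathtt{D}}_f$ throughout.

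Next I would chain two inequalities on the backlog period $]s,s+\tau]$. The strict service curve property \eqref{eq:background:strict-sc-def} gives
\begin{equation*}
R^{\mathtt{D}}(s+u)-R^{\mathtt{D}}(s)\ge\beta^{\text{strict}}(u),
\end{equation*}
while the definition of the \ac{IR}, which enforces $\sigma_f$ as an arrival curve on the output of $f$, gives
\begin{equation*}
R^{\mathtt{D}}_f(s+u)-R^{\mathtt{D}}_f(s)\le\sigma_f(u),
\end{equation*}
for all $0\le u\le\tau$. Combining the two and letting $\tau\to\infty$ (via $N\to\infty$) yields $\beta^{\text{strict}}(u)\le\sigma_f(u)$ for every $u\ge 0$. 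Because $\beta^{\text{strict}}$ is context-agnostic, i.e.\ a single curve independent of the trajectory, the same bound must hold for every $f\in\mathcal{F}$, and the minimum over $f$ gives the first claim.

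For the rate-latency consequence, substituting $\beta^{\text{strict}}=\beta_{R,T}$ into the pointwise bound gives $R(u-T)\le r_f u+b_f$ for all $u>T$; dividing by $u$ and letting $u\to\infty$ yields $R\le r_f$, and minimizing over $f$ concludes. I anticipate no serious obstacle: the only mild delicacy is verifying that the single-flow burst genuinely creates an uninterrupted backlog period in the technical sense of Section~\ref{sec:background:nc-sc}, but this is immediate from the \ac{IR}'s \ac{FIFO} queue together with the leaky-bucket drain rate $r_f$.
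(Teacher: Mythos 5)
Your proposal is correct and takes essentially the same approach as the paper's proof: fix a single flow $f$, construct a trajectory in which only $f$ sends packets so that the \ac{IR} output equals $f$'s output and is $\sigma_f$-constrained, create an arbitrarily long backlogged period, and sandwich $\beta^{\text{strict}}$ between the strict-service-curve lower bound and the $\sigma_f$ upper bound before minimizing over $f$. The only cosmetic difference is the adversarial trajectory: you use a one-shot burst of $N$ packets (with $N\to\infty$), whereas the paper sends packets of size $b_f$ at twice the permitted rate so that a single trajectory yields an ever-growing backlog; both constructions serve the same purpose.
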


To prove this result, we consider a set of trajectories $\{x_f\}_{f\in\mathcal{F}}$.
For each flow $f\in\mathcal{F}$, the trajectory $x_f$ is obtained by having only flow $f$ send packets of size $b_f$ to the \ac{IR} at twice the frequency allowed by its shaping curve.
Then the backlog of the \ac{IR} quickly becomes non-empty, but the cumulative output of the \ac{IR}, $R^{x_f,\mathtt{D}}=R_f^{x_f,\mathtt{D}}$ is constrained by the shaping curve $\sigma_f$: $\forall 0\le s \le t, R^{x_f,\mathtt{D}}(t)-R^{x_f,\mathtt{D}}(s) \le \sigma_f(t-s)$.
Combined with the definition of a strict service curve \eqref{eq:background:strict-sc-def}, this gives $\beta^{\text{strict}} \le \sigma_f$.
This is valid for all trajectories $\{x_f\}_{f\in\mathcal{F}}$, hence the result.
The formal proof is in \proofPropLimitIrStrictSc{}.

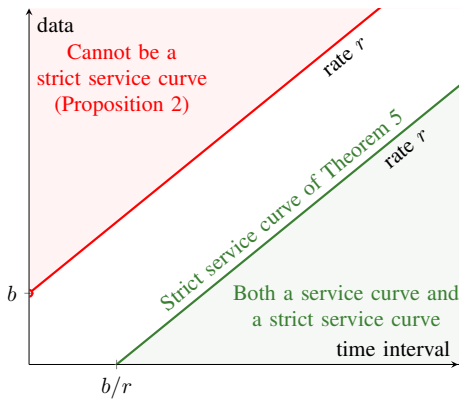
\begin{figure}
\centering
    \resizebox{0.7\linewidth}{!}{\begin{tikzpicture}
    \tikzstyle{ma} = [
        shorten >=-2pt,
        shorten <=-2pt
    ]

    \tikzstyle{alpha} =[line width=1pt, ma, {Circle[open, length=4pt,width=4pt]}-, red, mark=none]
    \tikzstyle{beta} = [line width=1pt, ma, OliveGreen, mark=none]

    \begin{axis}[
        axis x line=center,
        axis y line=center,
        xmin=0,
        ymin=0,
        xmax=4.9,
        ymax=5,
        xlabel={time interval},
        ylabel={data},
        ytick={1},
        yticklabels={$b$},
        xtick={1},
        xticklabels={$b/r$},
    ]
        
    \addplot[domain=1:5, beta, name path=b] {x-1} node[pos=0.5,above,sloped] {Strict service curve of Theorem~\ref{thm:ir-service-curves:strict-sc}} node[pos=0.8,below,black,sloped]{rate $r$};
    \addplot[domain=0:5, alpha, name path=a] {x+1} node[pos=0.7,below,sloped,black]{rate $r$};
    \addplot[domain=1:5, name path=axis] {0};
    \addplot[domain=0:5, draw=none, name path=mtop] {5};

    \addplot [
        fill=OliveGreen, 
        fill opacity=0.05
    ]
    fill between[
        of=axis and b,
        soft clip={domain=1:5},
    ];

    \addplot [
        fill=red, 
        fill opacity=0.05
    ]
    fill between[
        of=a and mtop,
        soft clip={domain=0:5},
    ];

    \node[anchor=west, red] at (0,4) {\makecell{Cannot be a \\ strict service curve\\ (Proposition~\ref{prop:ir-service-curves:strict-sc:limit})}};
    \node[anchor=east, OliveGreen] at (5,0.8) {\makecell{Both a service curve and \\ a strict service curve}};

    \end{axis}
\end{tikzpicture}}
    \caption{\label{fig:ir-service-curves:strict-sc:curve-area} Areas of proven, possible, and impossible strict service curves for an \ac{IR} that processes the flows with the same shaping curve $\gamma_{r,b}$ and if all packets have size $b$.}
\end{figure}

%If the \ac{IR} processes the flows with the same leaky-bucket shaping curve $\gamma_{r,b}$ and if all the packets have the size $b$, then 
Figure~\ref{fig:ir-service-curves:strict-sc:curve-area} shows the areas of proven, possible, and impossible strict service curves of an \ac{IR} that processes the flows with the same leaky-bucket shaping curve $\gamma_{r,b}$, assuming that all the packets have the size $b$.
Any wide-sense increasing function that remains in the green area is a strict service curve of the \ac{IR}, as proven by Theorem~\ref{thm:ir-service-curves:strict-sc}.
In contrast, any function that enters the red area cannot be a strict service curve of the \ac{IR}, as proven by Proposition~\ref{prop:ir-service-curves:strict-sc:limit}.

Incidentally, any wide-sense increasing function that remains within the green area is also a (simple) service curve of the \ac{IR}.
So far, as opposed to the strict-service-curve property, we have not obtained any limit for the simple service curve \eqref{eq:background:sc-def}.
To obtain this limit, we first need to consider the \emph{individual} service curve offered to any flow by the \ac{IR}, which we analyze in the following subsection.

\subsection{Upper-Bound on the Individual Service Curve}
\label{sec:ir-service-curves:indiv-sc}

Consider an \ac{IR} that processes an aggregate $\mathcal{F}$ and take a flow $g\in\mathcal{F}$.
In this section, we are interested in the service the \ac{IR} guarantees to $g$, \ie in an \emph{individual} service curve of the \ac{IR} for $g$.
% As stated in Section~\ref{sec:background:indiv-sc}, such a property requires the knowledge of the arrival curves for the other flows $f\in\mathcal{F}\backslash\{g\}$.

If each flow $f\in\mathcal{F}\backslash\{g\}$ enters the \ac{IR} with an arrival curve $\alpha_{f}^{\mathtt{B}}$ that is equal or smaller than its shaping curve $\sigma_f$, then none of the packets of the flows $\mathcal{F}\backslash\{g\}$ is ever delayed by the \ac{IR}.
% (Equations~\eqref{eq:sysmod:pi-gamma-regulator}, \eqref{eq:sysmod:ir-model}).
In this case, the \ac{IR} acts as a \ac{PFR} for $g$ and provides an individual fluid service curve $\sigma_g$ to $g$.

In a more likely setting, though, the \ac{IR} reshapes flows that exhibit an input arrival curve (strictly) larger than their configured shaping curve. 
In such a case, and if the \ac{IR} processes more than four flows, no useful individual service curve for $g$ can be obtained:

\begin{theorem}[Upper-Bound on the Individual Service curve\label{thm:ir-service-curves:limit-ir-residual}]
    Consider an \ac{IR} that processes an aggregate $\mathcal{F}$ of at least four flows and with the same leaky-bucket shaping curve for at least three of them: $\forall f_i \in \{f_1,f_2,f_3\}, \sigma_{f_i}=\gamma_{r,b}$.
    Consider a flow $g\in \mathcal{F}\backslash\{f_1,f_2,f_3\}$ and assume that for each $f_i \in \{f_1,f_2,f_3\}, \gamma_{r,b_i}$ is an arrival curve for $f_i$ at the input $\mathtt{B}$ of the \ac{IR} (Figure~\ref{fig:sysmodel:traffic-regulators:simple-ir}), with $b_1>b$, $b_2\ge b$, $b_3\ge b$ (permutating the indexes if required).
    Last, consider a curve $\beta_g\in\mathfrak{F}_0$ that can depend on $\{\sigma_f\}_{f\in\mathcal{F}}$ and on $\{\alpha_h^{\mathtt{B}}\}_{h\in\mathcal{F}\backslash\{g\}}$. 
    %It also relies on the knowledge that the input of the \ac{IR} is packetized.

    If $\beta_g$ is an individual service curve of the \ac{IR} for the flow $g$, then $\beta_g$ is upper-bounded by $g$'s minimum packet size $L_g^{\min}$.
\end{theorem}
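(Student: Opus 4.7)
The plan is to adapt the Spring adversary from the proof of Theorem~\ref{thm:instab} so as to trap a single packet of $g$ behind an unbounded backlog inside the IR, and then to read off the claimed bound from the individual service curve inequality. Fix an arbitrary $\tau > 0$; the goal is to show $\beta_g(\tau) \leq L_g^{\min}$.

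First, I would construct a trajectory $x$ in which $f_1, f_2, f_3$ inject at the IR input $\mathtt{B}$ the six-packet-per-period pattern $\mathscr{B}^1$ used by Spring in Section~\ref{sec:fifo-sff:spring}, repeated for $K$ consecutive periods (with $K$ to be chosen later). Spring's internal parameter $d$ is picked small enough that $rd \le b_1 - b$, which is possible because $b_1 > b$ strictly; this guarantees that the resulting arrival pattern at $\mathtt{B}$ respects the per-flow arrival curves $\gamma_{r,b_1}$, $\gamma_{r,b_2}$, $\gamma_{r,b_3}$ from the hypothesis. The flows in $\mathcal{F}\setminus\{f_1,f_2,f_3,g\}$ send no data. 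Flow $g$ sends a single packet of size $L_g^{\min}$ at a time $t_0$, chosen so that this packet joins the IR's FIFO queue at the tail of the backlog accumulated after the $K$ Spring periods. By choosing $K$ large enough (depending on $\tau$), I would argue that the FIFO delay of $g$'s packet exceeds $\tau$: by Theorem~\ref{thm:instab}, each Spring period increases the IR backlog by a fixed positive amount of data, so after $K$ periods the number of packets ahead of $g$'s packet grows linearly in $K$; combined with an upper bound on the IR's output rate derived from the $\gamma_{r,b}$ shaping constraints on $f_1,f_2,f_3$, this forces the waiting time of $g$'s packet to exceed any prescribed $\tau$.

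Second, I would apply the individual service curve definition \eqref{eq:background:indiv-sc:def} at time $t_0+\tau$. Since $g$ sends a single packet of size $L_g^{\min}$ at $t_0$, its cumulative input is the step $R_g^{\mathtt{B}}(s)=L_g^{\min}\cdot\mathbf{1}[s>t_0]$ and a direct evaluation of the min-plus convolution yields $(R_g^{\mathtt{B}}\otimes\beta_g)(t_0+\tau)=\min(\beta_g(\tau),L_g^{\min})$, where the infimum is attained either at $s\to t_0^-$ (giving the term $\beta_g(\tau)$) or at $s=t_0+\tau$ (giving the term $L_g^{\min}$). Combined with $R_g^{\mathtt{D}}(t_0+\tau)=0$ from the previous step, the service curve inequality yields $0\ge\min(\beta_g(\tau),L_g^{\min})$; since $L_g^{\min}>0$, this minimum cannot equal $L_g^{\min}$, forcing $\beta_g(\tau)\le L_g^{\min}$. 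As $\tau>0$ is arbitrary and $\beta_g(0)=0\le L_g^{\min}$, the uniform bound on $\beta_g$ follows.

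The main obstacle is the quantitative analysis in the first step: rigorously tracking the growth of the IR backlog as a function of $K$ and converting it into a lower bound on the waiting time of the inserted $g$-packet, using the input-output characterization of the IR from \cite{leboudecTheoryTrafficRegulators2018,boyerEquivalenceTheoreticalModel2022}. A secondary technical point is verifying that the composite input pattern at $\mathtt{B}$ (Spring's pattern together with $g$'s single packet) still respects the per-flow arrival curves $\{\alpha_h^{\mathtt{B}}\}_{h\neq g}$ from the hypothesis, which is straightforward since $g$'s packet is not included in these constraints and $g$'s own arrival curve is left unconstrained in the theorem.
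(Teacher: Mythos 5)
Your proposal is correct and follows essentially the same route as the paper: re-use the Spring adversary of Theorem~\ref{thm:instab} to accumulate an unbounded backlog inside the IR, insert one packet of $g$ behind it, and contradict the individual-service-curve inequality. Where the paper factors this through Lemma~\ref{lemma:proof:thm:ir-service-curves:limit-ir-residual:upper-bound-first-packet} (if $\beta_g(u)>L_g^{\min}$, every first-$g$-packet delay is $\le u$) and Lemma~\ref{lemma:proof:thm:ir-service-curves:limit-ir-residual:unbounded-first-delay} (Spring, with $d$ chosen so $d<(b_1-b)/r$, makes that delay arbitrarily large), combined by contraposition, you compute $(R_g^{\mathtt{B}}\otimes\beta_g)(t_0+\tau)=\min\bigl(\beta_g(\tau),L_g^{\min}\bigr)$ directly for the single-packet trajectory — an equivalent formulation, and your hand-waved backlog-growth step is exactly what Lemma~\ref{lemma:proof:thm:ir-service-curves:limit-ir-residual:unbounded-first-delay} makes precise via its Trajectory~3 and the explicit choice of $k$. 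One remark on your last inference: from $0\ge\min\bigl(\beta_g(\tau),L_g^{\min}\bigr)$ with $\beta_g\ge 0$ and $L_g^{\min}>0$ you actually get $\beta_g(\tau)=0$, strictly stronger than the $\le L_g^{\min}$ you conclude; the theorem states the weaker bound because it is all that Theorem~\ref{thm:ir-service-curves:limit-sc} needs, but the packetization observation the paper already invokes inside Lemma~\ref{lemma:proof:thm:ir-service-curves:limit-ir-residual:upper-bound-first-packet} (a packetized $R_g^{\mathtt{D}}$ whose first packet has size $L_g^{\min}$ is $\ge L_g^{\min}$ as soon as it is positive) would lower that lemma's threshold from $v>L_g^{\min}$ to $v>0$ and yield $\beta_g\equiv 0$ as well.
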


Theorem~\ref{thm:ir-service-curves:limit-ir-residual} shows that any individual-service-curve model of the \ac{IR} for a flow $g$ can only guarantee that one single packet of $g$ will ever cross the \ac{IR} over the entire network's lifetime.
Hence, no useful context-agnostic service curve exists to model the service offered to a single flow by the \ac{IR} (each flow is likely to send many packets).
%This shows that no useful service-curve model of the \ac{IR} exists when focusing on a single flow.
% Hence, the 
% As flows are likely to send many more packets over the network's lifetime, Theorem~\ref{thm:ir-service-curves:limit-ir-residual} shows that the \ac{IR} cannot be modeled by any useful contex-agnostic

% Hence individual service-curve models are not helpful for the \ac{IR}: flows are likely to send many more packets over the network's lifetime.
 
To prove Theorem~\ref{thm:ir-service-curves:limit-ir-residual}, we reuse the Spring trajectory of Theorem~\ref{thm:instab} for the three flows $f_1,f_2,f_3$.
This trajectory creates a constant build-up of delay and backlog inside the \ac{IR}.
% In particular, for $i\in\{1,2,3\}$, the delay of $f_i$ through the \ac{IR} is unbounded.
% However, if $\beta_{f_i}$ is an individual service curve of the \ac{IR} for the flow $f_i$, then the above statement only implies that the horizontal deviation between the arrival curve $\alpha_{f_i}^{\mathtt{B}}=\gamma_{r,b_i}$ of $f_i$ at $\mathtt{B}$ and $\beta_{f_i}$ is $+\infty$.
% This can be true for various curves, \eg a rate-latency curve $\beta_{R,T}$ with $R<r$.
%To obtain the stronger result of Theorem~\ref{thm:ir-service-curves:limit-ir-residual} (the individual service curve is upper-bounded by a constant), the intuition is to 
We then consider a fourth flow, $g$, and its first packet of size $L$.
On one hand, if $\beta_g$ is an individual service curve of the \ac{IR} for the flow $g$, then the delay of the first packet of $g$ through the \ac{IR} can be upper bounded by $\inf\{t\in\mathbb{R}^+|\beta_g(t) \ge L\}$.
On the other hand, the first packet of $g$ can suffer a delay as large as desired within the Spring trajectory:
We send it when the accumulated delay in the \ac{IR} is large enough.
The combination of the two above observations provides the result.
The formal proof is in \proofThmLimitIndiv{}.

\subsection{Limits on the Aggregate Service Curve}
\label{sec:ir-sc:limit-sc}

Let us go back to the analysis of the service offered to the aggregate, by focusing on simple (non strict) service-curve models.
One consequence of Theorem~\ref{thm:ir-service-curves:limit-ir-residual} is that the long-term rate of the service curve for the aggregate is upper-bounded by three times the rate of a single contract.

\begin{theorem}[Maximum long-term rate of any service curve\label{thm:ir-service-curves:limit-sc}]
    Consider an \ac{IR} that processes an aggregate $\mathcal{F}$ of at least four flows and with the same leaky-bucket shaping curve for at least three of them: $\forall f_i\in\{f_1,f_2,f_3\}, \sigma_{f_i} = \gamma_{r,b}$.
    Consider a curve $\beta\in\mathfrak{F}_0$ that can depend on $\{\sigma_f\}_{f\in\mathcal{F}}$.
    If the \ac{IR} offers $\beta$ as a context-agnostic service curve, then the long-term rate of $\beta$ \cite[Def. 12.1]{bouillardDeterministicNetworkCalculus2018} is upper-bounded by three times the rate of a single contract, \ie
    \begin{equation}
        \liminf_{t\rightarrow +\infty} \frac{\beta(t)}{t} \le 3r
    \end{equation}
\end{theorem}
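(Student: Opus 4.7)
The plan is to reduce the bound on any aggregate service curve $\beta$ to the individual-service-curve upper bound already established in Theorem~\ref{thm:ir-service-curves:limit-ir-residual}. The vehicle for this reduction is the classical FIFO residual service-curve construction \cite{leboudecNetworkCalculus2001,bouillardDeterministicNetworkCalculus2018}: for a FIFO system that offers an aggregate service curve $\beta$ and whose non-$g$ flows admit arrival curves $\{\alpha_h^{\mathtt{B}}\}_{h\ne g}$ at the input, the map
\begin{equation*}
\beta_g^\theta : t \mapsto \Bigl[\beta(t) - \sum_{h\ne g} \alpha_h^{\mathtt{B}}(t-\theta)\Bigr]^+ \mathbf{1}_{\{t>\theta\}}
\end{equation*}
is an individual service curve for $g$, for every $\theta \ge 0$. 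Since the \ac{IR} is \ac{FIFO}, this construction applies to it.

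First, I would fix a flow $g \in \mathcal{F}\setminus\{f_1,f_2,f_3\}$, which exists because $|\mathcal{F}|\ge 4$. Next, I would fix the arrival curves at the \ac{IR} input: take $\alpha_{f_i}^{\mathtt{B}} = \gamma_{r,b_i}$ for $i=1,2,3$, choosing $b_1>b$ and $b_2=b_3=b$ so as to satisfy the hypotheses of Theorem~\ref{thm:ir-service-curves:limit-ir-residual}; for any remaining $h \in \mathcal{F}\setminus\{g,f_1,f_2,f_3\}$, take $\alpha_h^{\mathtt{B}}\equiv 0$, i.e., no data from $h$, which is a valid context since $\beta$ is context-agnostic. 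Applying the FIFO residual formula then yields an individual service curve $\beta_g^\theta$ for $g$ that depends only on $\beta$ (hence on $\{\sigma_f\}_{f\in\mathcal{F}}$) and on the chosen $\{\alpha_h^{\mathtt{B}}\}_{h\ne g}$, as required by Theorem~\ref{thm:ir-service-curves:limit-ir-residual}.

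Then I would invoke Theorem~\ref{thm:ir-service-curves:limit-ir-residual}, whose hypotheses are all met by construction: it yields $\beta_g^\theta(t) \le L_g^{\min}$ for every $t\ge 0$. Expanding the definition of $\beta_g^\theta$ and using $\sum_{i=1}^3\gamma_{r,b_i}(t-\theta)=3r(t-\theta)+(b_1+b_2+b_3)$ on $t>\theta$ (the contributions from the other flows vanish), I obtain
\begin{equation*}
\beta(t) \le 3r(t-\theta) + b_1+b_2+b_3 + L_g^{\min} \quad \text{for all } t>\theta.
\end{equation*}
Dividing by $t$ and taking $\liminf_{t\to\infty}$ gives the desired bound $\liminf_{t\to\infty}\beta(t)/t \le 3r$.

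The main obstacle I anticipate is the careful invocation of the FIFO residual service-curve theorem in the context-agnostic setting: I need to confirm that, once the input arrival curves $\{\alpha_h^{\mathtt{B}}\}_{h\ne g}$ are fixed, the resulting $\beta_g^\theta$ is indeed an individual service curve of the \ac{IR} for $g$ in the precise sense required by Theorem~\ref{thm:ir-service-curves:limit-ir-residual} (in particular, that the allowed dependency on $\{\sigma_f\}$ and $\{\alpha_h^{\mathtt{B}}\}$ is respected). Once this is settled, the remaining arithmetic in the last step is routine.
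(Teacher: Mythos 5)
Your high-level strategy is identical to the paper's: pick $g\in\mathcal{F}\setminus\{f_1,f_2,f_3\}$, set up the FIFO residual service curve $\beta_g^\theta$, apply Theorem~\ref{thm:ir-service-curves:limit-ir-residual}, and unwind the arithmetic to bound $\liminf_{t\to\infty}\beta(t)/t$. However, there is a real gap in your middle step, and it is not the one you flagged. The paper's Definition of ``individual service curve'' (Section~\ref{sec:background:indiv-sc}) explicitly requires the curve to be \emph{wide-sense increasing}. The FIFO residual construction $\beta_g^\theta(t)=\bigl[\beta(t)-\sum_{h\ne g}\alpha_h^{\mathtt{B}}(t-\theta)\bigr]^+\mathbbm{1}_{\{t>\theta\}}$ only guarantees that $\beta_g^\theta$ satisfies the service-curve inequality~\eqref{eq:background:indiv-sc:def}; it does \emph{not} guarantee monotonicity, since $\beta$ and $\sum\alpha_h^{\mathtt{B}}$ may have different growth rates. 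Consequently $\beta_g^\theta$ is not (necessarily) an individual service curve in the sense needed to invoke Theorem~\ref{thm:ir-service-curves:limit-ir-residual}, and your intermediate conclusion ``$\beta_g^\theta(t)\le L_g^{\min}$ for every $t\ge 0$, hence $\beta(t)\le 3r(t-\theta)+b_1+b_2+b_3+L_g^{\min}$ for all $t>\theta$'' is too strong and does not actually follow.

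The fix, which the paper uses, is to replace $\beta_g^\theta$ by its wide-sense-increasing lower envelope $\overline{\beta}_\theta:t\mapsto\inf_{s\ge t}\beta_g^\theta(s)$ (the max-plus deconvolution by the zero function). Since $\overline{\beta}_\theta\le\beta_g^\theta$ and $\overline{\beta}_\theta$ is wide-sense increasing, it is a genuine individual service curve, and Theorem~\ref{thm:ir-service-curves:limit-ir-residual} gives $\overline{\beta}_\theta(t)\le L_g^{\min}$ for all $t$. But this only yields $\inf_{s\ge t}\bigl[\beta(s)-\alpha_{\cancel{g}}(s-\theta)\bigr]^+\le L_g^{\min}$, i.e.\ a bound on an \emph{infimum}, not a pointwise bound on $\beta$. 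The final $\liminf$ conclusion still follows, because the infimum guarantees that $\beta(s)/s$ drops below $3r+o(1)$ along some sequence $s\to\infty$; but the argument has to be phrased in terms of that infimum rather than the pointwise inequality you wrote. Once you make this repair, the rest of your arithmetic is the same as the paper's.
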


% The proof of Theorem~\ref{thm:ir-service-curves:limit-sc} combines Theorem~\ref{thm:ir-service-curves:limit-ir-residual} with the following property of \ac{FIFO} systems.

% \begin{proposition}[Individual Service Curve for \acs{FIFO} Systems {\cite[Prop. 6.4.1]{leboudecNetworkCalculus2001}}\label{prop:background:fifo-residual}]
%     Consider a \ac{FIFO} system $S$ that offers to an aggregate $\mathcal{F}$ the service curve $\beta$. 
%     Consider a flow $g\in\mathcal{F}$. 
%     For each flow $f\in\mathcal{F}\backslash\{g\}$, assume that $\alpha_f$ is an arrival curve of $f$ at the input of the system $S$.
%     For $\theta \in \mathbb{R}^+$, define
%     \begin{equation}\label{eq:background:indiv-sc:fifo-residual}
%         \beta_g^{\theta}: t\mapsto \left|\beta(t)-\sum_{f\in\mathcal{F}\backslash\{g\}
%         }\alpha_f(t-\theta)\right|^+ \cdot\mathbbm{1}_{\{t>\theta\}}
%     \end{equation}
%     where 
%     %$|\cdot|^+=\max(\cdot,0)$ and 
%     $\mathbbm{1}_{t>\theta}$ equals 1 when $t>\theta$, 0 otherwise.
%     For each $\theta\ge0$, $\beta_g^{\theta}$ verifies Equation~\eqref{eq:background:indiv-sc:def}.
%     If, in addition, $\beta_g^{\theta}$ is wide-sense increasing, then $S$ offers to $g$ the \emph{individual} service curve $\beta_g^{\theta}$.
% \end{proposition}

To prove Theorem~\ref{thm:ir-service-curves:limit-sc}, we pick one flow $g\in\mathcal{F}\backslash\{f_1,f_2,f_3\}$ and a traffic arrival for $h\in\mathcal{F}\backslash\{g\}$ that meets the requirements of Theorem~\ref{thm:ir-service-curves:limit-ir-residual}.
The \ac{IR} is a \ac{FIFO} system.
Hence, if $\beta$ is a service curve of the \ac{IR}, then for any $\theta\in\mathbb{R}^+$, the curve $\beta_g^{\theta}$ defined by
\begin{equation}\label{eq:background:indiv-sc:fifo-residual}
    \beta_g^{\theta}: t\mapsto \left|\beta(t)-\sum_{f\in\mathcal{F}\backslash\{g\}
    }\alpha_f(t-\theta)\right|^+ \cdot\mathbbm{1}_{\{t>\theta\}}
\end{equation}
verifies Equation~\eqref{eq:background:indiv-sc:def} \cite[Prop. 6.4.1]{leboudecNetworkCalculus2001}.
Note that $\beta_g^{\theta}$ may not be wide-sense increasing, thus not an individual arrival curve for $g$.
Inspired by \cite[\S 5.2.1]{bouillardDeterministicNetworkCalculus2018}, we resolve this by considering the curve $\beta_g^{\theta}\overline{\oslash}\boldsymbol{0}:t\mapsto \inf_{s\ge t}\beta_g^{\theta}(s)$, where $\overline{\oslash}$ is the max-plus deconvolution\footnote{Defined as $\mathfrak{f}\overline{\oslash}\mathfrak{g}:t\mapsto \inf_{s\ge t}\mathfrak{f}(s)- \mathfrak{g}(s-t)$} and $\boldsymbol{0}$ is the zero function $t\mapsto 0$.
The function $\beta_g^{\theta}\overline{\oslash}\boldsymbol{0}$ is wide-sense increasing, smaller than $\beta_g^{\theta}$, thus an individual service curve for $g$.
From Theorem~\ref{thm:ir-service-curves:limit-ir-residual}, we obtain $\forall\theta\ge0, \forall t\ge 0$, $(\beta_g^{\theta}\overline{\oslash}\boldsymbol{0})(t)\le L^{\min}_g$.
The left-hand side of this inequation is an infimum, but the result is valid for any $\theta\ge0,t\ge 0$.
Hence, we can derive a bound on the long-term rate of $\beta$.
The formal proof is in \proofThmLimitAggreg{}.

With Theorem~\ref{thm:ir-service-curves:limit-sc}, we can conclude that for an \ac{IR} that processes more than four flows, no useful context-agnostic service curve exists to model the \ac{IR}.
Indeed, the aggregate of the four flows can exhibit a sustained rate of four times the rate of a single-flow contract, whereas a context-agnostic service curve can only guarantee a long-term service rate of three times this value.
Hence, Theorem~\ref{thm:ir-service-curves:limit-sc} concludes our search of context-agnostic service-curve models for the \ac{IR} and the results of Section~\ref{sec:ir-service-curves} are summarized in Table~\ref{tab:conclusion}.
% Theorem~\ref{thm:ir-service-curves:limit-sc} only provides information on the long-term rate of any context-agnostic service curve of the \ac{IR}.
% But when combined with Lemma~\ref{lem:fifo-sff:no-explain:minimum-sc-sff}, it shows that any service curve of the \ac{IR} that is not limited to an assumption on the context of the \ac{IR} does not provide a sufficient long-term rate to prove its shaping-for-free property. 

%The results of Section~\ref{sec:ir-service-curves} are summarized in Table~\ref{tab:conclusion}.

\begin{table}
    \caption{\label{tab:conclusion}Context-Agnostic Service Curves for an \ac{IR} that Processes at least Four Flows with the Same Leaky-Bucket Shaping Curve $\gamma_{r,b}$, assuming All Packets Have Size $b$.}
    \resizebox*{\linewidth}{!}{\begin{tabular}{c|c|c}
    \makecell{Service-curve\\type} & \makecell{Curve exhibited\\in this paper} & \makecell{Limit exhibited\\in this paper}  \\
    \hline
    \makecell{Service curve\\$\beta$} & $\beta_{r,\frac{b}{r}}$ & $\liminf_{t\rightarrow+\infty} \frac{\beta(t)}{t} \le 3r$ \\
    \hline
    \makecell{Strict service\\curve $\beta^{\text{strict}}$} & $\beta_{r,\frac{b}{r}}$ & $\forall t\ge0,\ \beta^{\text{strict}}(t)\le\gamma_{r,b}(t)$ \\
    \hline
    \makecell{Individual service\\curve $\beta_g,\ \forall g\in\mathcal{F}$} & \makecell{None\\($t\mapsto0$)} & $\forall t\ge0,\ \beta_g(t) \le L^{\min}_g$\\
\end{tabular}}
\end{table}

\section{Conclusion}
\label{sec:conclusion}

% \begin{table}
%     \caption{Context-Agnostic Service Curves for an \ac{IR} that Processes at least Four Flows with the Same Leaky-Bucket Shaping Curve $\gamma_{r,b}$\label{tab:conclusion}, assuming All Packets Have Size $b$.}
%     \resizebox*{\linewidth}{!}{\input{./figures/2023-05-tab-summary-sc.tex}}
% \end{table}

%The validation of time-sensitive networks in the context of \ac{IEEE} \ac{TSN} requires the computation of worst-case performance bounds.
Network calculus is a framework for obtaining worst-case performance bounds of time-sensitive networks, as required for their validation.
Most of the mechanisms standardized by the \acf{TSN} task group of the \ac{IEEE} enjoy a network-calculus service-curve model published in the literature.
%
% Service-curve-oriented analysis tools require such models to compute end-to-end performance bounds of communication networks and to ass
% This flexible model is required for computing end-to-end performance bounds  for coervice-curve-oriented analysis tools require such models to compute end-to-end performance bounds of communication networks.
% %Without such a model, service-curve-oriented analysis tools cannot compute end-to-end performance bounds of communication networks that contain \acp{IR}, and the latter's behavior is unclear when used outside of the shaping-for-free requirements.
%
The \acf{IR}, standardized as \emph{\acl{ATS}} (\acs{ATS}) in \ac{TSN}, is an exception.
Its shaping-for-free property is instrumental in designing and analyzing time-sensitive networks but was proved without network-calculus service curves.
The existence of a service-curve model that explains the \ac{IR}'s behavior and its shaping-for-free property remained an open question.
If such a model existed, network engineers could use the \ac{IR} outside of the shaping-for-free requirements and still compute end-to-end performance bounds with service-curve-oriented tools.

In this paper, we settled the question:
Network-calculus service curves cannot explain the behavior of the \ac{IR}.
We show that the \ac{IR} still offers non-trivial functions as (strict) service curves, but (a) none of them can explain the shaping-for-free property of the \ac{IR} and (b) these curves are too weak to be helpful and cannot offer any delay guarantee in most cases.
%This is particularly true when focusing on the individual service offered to a single flow.
Consequently, performance bounds cannot be obtained with service-curve-oriented approaches when the \ac{IR} is used in a context that differs from the shaping-for-free requirements, \eg after a non-\ac{FIFO} system.
We prove that these bounds do not even exist: 
The \ac{IR} can yield unbounded latencies after a non-\ac{FIFO} system.

\bibliography{./biblio/IEEEabrv,./biblio/zotero}

% Generated by IEEEtran.bst, version: 1.12 (2007/01/11)
\begin{thebibliography}{10}
\providecommand{\url}[1]{#1}
\csname url@samestyle\endcsname
\providecommand{\newblock}{\relax}
\providecommand{\bibinfo}[2]{#2}
\providecommand{\BIBentrySTDinterwordspacing}{\spaceskip=0pt\relax}
\providecommand{\BIBentryALTinterwordstretchfactor}{4}
\providecommand{\BIBentryALTinterwordspacing}{\spaceskip=\fontdimen2\font plus
\BIBentryALTinterwordstretchfactor\fontdimen3\font minus
  \fontdimen4\font\relax}
\providecommand{\BIBforeignlanguage}[2]{{%
\expandafter\ifx\csname l@#1\endcsname\relax
\typeout{** WARNING: IEEEtran.bst: No hyphenation pattern has been}%
\typeout{** loaded for the language `#1'. Using the pattern for}%
\typeout{** the default language instead.}%
\else
\language=\csname l@#1\endcsname
\fi
#2}}
\providecommand{\BIBdecl}{\relax}
\BIBdecl

\bibitem{farkasTSNBasicConcepts2018}
\BIBentryALTinterwordspacing
J.~Farkas, ``{{TSN Basic Concepts}},'' p.~20, Nov. 2018. [Online]. Available:
  \url{https://www.ieee802.org/1/files/public/docs2018/detnet-tsn-farkas-tsn-basic-concepts-1118-v01.pdf}
\BIBentrySTDinterwordspacing

\bibitem{ieee8021Qcr}
``{{IEEE Standard}} for {{Local}} and {{Metropolitan Area
  Networks}}\textendash{{Bridges}} and {{Bridged Networks}} - {{Amendment}}
  34:{{Asynchronous Traffic Shaping}},'' \emph{IEEE Std 802.1Qcr-2020}, pp.
  1--151, Nov. 2020.

\bibitem{mohammadpourLatencyBacklogBounds2018}
E.~Mohammadpour, E.~Stai, M.~Mohiuddin, and J.-Y. Le~Boudec, ``Latency and
  {{Backlog Bounds}} in {{Time-Sensitive Networking}} with {{Credit Based
  Shapers}} and {{Asynchronous Traffic Shaping}},'' in \emph{2018 30th
  {{International Teletraffic Congress}} ({{ITC}} 30)}, vol.~02, Sep. 2018, pp.
  1--6.

\bibitem{zhaoQuantitativePerformanceComparison2022}
L.~Zhao, P.~Pop, and S.~Steinhorst, ``Quantitative {{Performance Comparison}}
  of {{Various Traffic Shapers}} in {{Time-Sensitive Networking}},'' \emph{IEEE
  Transactions on Network and Service Management}, vol.~19, no.~3, pp.
  2899--2928, Sep. 2022.

\bibitem{leboudecNetworkCalculus2001}
\BIBentryALTinterwordspacing
J.-Y. Le~Boudec and P.~Thiran, \emph{Network {{Calculus}}}, ser. Lecture
  {{Notes}} in {{Computer Science}}, G.~Goos, J.~Hartmanis, and J.~{van
  Leeuwen}, Eds.\hskip 1em plus 0.5em minus 0.4em\relax {Berlin, Heidelberg}:
  {Springer}, 2001, vol. 2050. [Online]. Available:
  \url{http://link.springer.com/10.1007/3-540-45318-0}
\BIBentrySTDinterwordspacing

\bibitem{Chang2000}
\BIBentryALTinterwordspacing
C.-S. Chang, \emph{Performance Guarantees in Communication Networks}.\hskip 1em
  plus 0.5em minus 0.4em\relax {Springer London}, 2000. [Online]. Available:
  \url{https://doi.org/10.1007/978-1-4471-0459-9}
\BIBentrySTDinterwordspacing

\bibitem{bouillardDeterministicNetworkCalculus2018}
\BIBentryALTinterwordspacing
A.~Bouillard, M.~Boyer, and E.~Le~Corronc, \emph{Deterministic Network
  Calculus}.\hskip 1em plus 0.5em minus 0.4em\relax {John Wiley \& Sons, Inc.},
  Oct. 2018. [Online]. Available: \url{https://doi.org/10.1002/9781119440284}
\BIBentrySTDinterwordspacing

\bibitem{maileNetworkCalculusResults2020}
L.~Maile, K.-S. Hielscher, and R.~German, ``Network {{Calculus Results}} for
  {{TSN}}: {{An Introduction}},'' in \emph{2020 {{Information Communication
  Technologies Conference}} ({{ICTC}})}, May 2020, pp. 131--140.

\bibitem{zhaoLatencyAnalysisMultiple2020}
\BIBentryALTinterwordspacing
L.~Zhao, P.~Pop, Z.~Zheng, H.~Daigmorte, and M.~Boyer, ``Latency {{Analysis}}
  of {{Multiple Classes}} of {{AVB Traffic}} in {{TSN}} with {{Standard Credit
  Behavior}} using {{Network Calculus}},'' May 2020. [Online]. Available:
  \url{http://arxiv.org/abs/2005.08256}
\BIBentrySTDinterwordspacing

\bibitem{spechtUrgencyBasedSchedulerTimeSensitive2016}
J.~Specht and S.~Samii, ``Urgency-{{Based Scheduler}} for {{Time-Sensitive
  Switched Ethernet Networks}},'' in \emph{2016 28th {{Euromicro Conference}}
  on {{Real-Time Systems}} ({{ECRTS}})}, Jul. 2016, pp. 75--85.

\bibitem{thomasCyclicDependenciesRegulators2019}
L.~Thomas, J.-Y. Le~Boudec, and A.~Mifdaoui, ``On {{Cyclic Dependencies}} and
  {{Regulators}} in {{Time-Sensitive Networks}},'' in \emph{2019 {{IEEE
  Real-Time Systems Symposium}} ({{RTSS}})}, Dec. 2019, pp. 299--311.

\bibitem{leboudecTheoryTrafficRegulators2018}
J.-Y. Le~Boudec, ``A {{Theory}} of {{Traffic Regulators}} for {{Deterministic
  Networks With Application}} to {{Interleaved Regulators}},'' \emph{IEEE/ACM
  Transactions on Networking}, vol.~26, no.~6, pp. 2721--2733, Dec. 2018.

\bibitem{schmittDISCONetworkCalculator2006}
\BIBentryALTinterwordspacing
J.~B. Schmitt and F.~A. Zdarsky, ``The {{DISCO}} network calculator: A toolbox
  for worst case analysis,'' in \emph{Proceedings of the 1st International
  Conference on {{Performance}} Evaluation Methodolgies and Tools}, ser.
  Valuetools '06.\hskip 1em plus 0.5em minus 0.4em\relax {New York, NY, USA}:
  {Association for Computing Machinery}, Oct. 2006, pp. 8--es. [Online].
  Available: \url{https://doi.org/10.1145/1190095.1190105}
\BIBentrySTDinterwordspacing

\bibitem{schmittImprovingPerformanceBounds2008}
J.~B. Schmitt, F.~A. Zdarsky, and I.~Martinovic, ``Improving {{Performance
  Bounds}} in {{Feed-Forward Networks}} by {{Paying Multiplexing Only Once}},''
  in \emph{14th {{GI}}/{{ITG Conference}} - {{Measurement}}, {{Modelling}} and
  {{Evalutation}} of {{Computer}} and {{Communication Systems}}}, Mar. 2008,
  pp. 1--15.

\bibitem{bondorfBetterBoundsWorse2017}
S.~Bondorf, ``Better bounds by worse assumptions \textemdash{} {{Improving}}
  network calculus accuracy by adding pessimism to the network model,'' in
  \emph{2017 {{IEEE International Conference}} on {{Communications}}
  ({{ICC}})}, May 2017, pp. 1--7.

\bibitem{bondorfQualityCostDeterministic2017}
\BIBentryALTinterwordspacing
S.~Bondorf, P.~Nikolaus, and J.~B. Schmitt, ``Quality and {{Cost}} of
  {{Deterministic Network Calculus}}: {{Design}} and {{Evaluation}} of an
  {{Accurate}} and {{Fast Analysis}},'' \emph{Proceedings of the ACM on
  Measurement and Analysis of Computing Systems}, vol.~1, no.~1, pp.
  16:1--16:34, Jun. 2017. [Online]. Available:
  \url{https://doi.org/10.1145/3084453}
\BIBentrySTDinterwordspacing

\bibitem{geyerNetworkSynthesisDelay2022}
F.~Geyer and S.~Bondorf, ``Network {{Synthesis}} under {{Delay Constraints}}:
  {{The Power}} of {{Network Calculus Differentiability}},'' in \emph{{{IEEE
  INFOCOM}} 2022 - {{IEEE Conference}} on {{Computer Communications}}}, May
  2022, pp. 1539--1548.

\bibitem{thomasWorstCaseDelayBounds2022}
L.~Thomas, A.~Mifdaoui, and J.-Y. Le~Boudec, ``Worst-{{Case Delay Bounds}} in
  {{Time-Sensitive Networks With Packet Replication}} and {{Elimination}},''
  \emph{IEEE/ACM Transactions on Networking}, pp. 1--15, 2022.

\bibitem{boyerEquivalenceTheoreticalModel2022}
\BIBentryALTinterwordspacing
M.~Boyer, ``Equivalence between the theoretical model and the standard
  algorithm of {{Asynchronous Traffic Shaping}},'' Sep. 2022. [Online].
  Available: \url{https://hal.science/hal-03788302}
\BIBentrySTDinterwordspacing

\bibitem{hamscherUsingMathematicalProgramming2022}
\BIBentryALTinterwordspacing
A.~Hamscher, ``Using {{Mathematical Programming}} to {{Harden Conjectures}} on
  {{Service Curves}},'' {Lausanne, Switzerland}, Sep. 2022. [Online].
  Available:
  \url{https://www.youtube.com/watch?v=eJjwOrMIL5Q&list=PLGrWRLGd9yS_nezfKdxK1x-e3yNt1krwj&index=19}
\BIBentrySTDinterwordspacing

\end{thebibliography}

\appendices
%The appendix contains the formal proofs of the results.

\section{Formal Notations for the Proofs}
\label{sec:appendix:notations-proofs}

This appendix completes the system model of Section~\ref{sec:sysmodel} and formally defines the notations used in the formal proofs in Appendix~\ref{appendix:proofs}.
Table~\ref{tab:extended-notations} lists the extended formal notations and completes Table~\ref{tab:motations}.

\begin{table}
    \caption{\label{tab:extended-notations} Extended Formal Notations for the Formal Proofs}
    \resizebox{\linewidth}{!}{\begin{tabular}{r|ll}
    \multicolumn{3}{c}{Common Operators} \\
    $\mathfrak{f}^{\downarrow}$ & $w\mapsto \sup\{s\ge 0 | \mathfrak{f}(s) \ge w\}$ & Pseudo-inverse \\
    % $a \vee b $ & $=\max(a,b)$ & Maximum of $a$ and $b$.\\
    % $a \wedge b $ & $=\min(a,b)$ & Minimum of $a$ and $b$.\\
    % $|c|^+$ & $=\max(0,c)$ & \\
    % $\lfloor x\rfloor$ & $=\max\{n\in\mathbb{N}|n\le x\}$ & Floor function.\\ 
    % $\lceil x\rceil$ & $=\min\{n\in\mathbb{N}|n\ge x\}$ & Ceil function.\\ 
    % $\mathfrak{f} \otimes \mathfrak{g}$ & $t\mapsto\inf_{s\le t}\mathfrak{f}(s)+\mathfrak{g}(t-s)$ & Min-plus convolution\\
    % $\mathfrak{f}\ \overline{\otimes}\ \mathfrak{g}$ & $t\mapsto\sup_{s\le t}\mathfrak{f}(s)+\mathfrak{g}(t-s)$ & Max-plus convolution\\
    % $\mathfrak{f}\ \overline{\oslash}\ \mathfrak{g}$ & $t\mapsto\inf_{u\ge0}\mathfrak{f}(t+u)+\mathfrak{g}(u)$ & Max-plus deconvolution\\
    \hline
    \multicolumn{3}{c}{Common Curves} \\
    % $\gamma_{r,b}$  & $t\mapsto rt+b \quad \text{for } t>0$        &   Leaky-bucket curve.\\
    % $\beta_{R,T}$   & $t\mapsto R|t-T|^+$ & Rate-latency curve. \\
    $\delta_D$      & $t\mapsto \left\lbrace \begin{aligned} 0 &\quad \text{ if } t \le D\\ +\infty &\quad \text{ otherwise}\end{aligned}\right.$ & Bounded-delay curve.\\ 
    \hline
    \multicolumn{3}{c}{Notation with Marked-Point Processes} \\
    $\mathtt{M}$ & \multicolumn{2}{l}{An observation point} \\
    $M^x$ & \multicolumn{2}{l}{Sequence of packet arrival dates\dots} \\
    $L^{x,\mathtt{M}}$ & \multicolumn{2}{l}{Sequence of packet sizes\dots} \\
    $F^{x,\mathtt{M}}$ & \multicolumn{2}{l}{Sequence of flow indexes\dots} \\
    $\mathscr{M}^x$ & $\triangleq(M^x,L^{x,\mathtt{M}},F^{x,\mathtt{M}})$ & Packet sequence\dots\\
    &  \multicolumn{2}{l}{\hfill \dots at the packetized observation point $\mathtt{M}$ in trajectory $x$} \\
    \hline
    \multicolumn{3}{c}{Notation with Cumulative Functions} \\
    $R^{x,\mathtt{M}}$ & \multicolumn{2}{l}{Cumulative function of the aggregate\dots} \\
    {[\resp $R^{x,\mathtt{M}}_f$]} & \multicolumn{2}{l}{\dots[\resp of $f$] at $\mathtt{M}$ in Trajectory $x$.} \\
    %$R^{x,\mathtt{M}} \sim \alpha$ & \multicolumn{2}{l}{$R^{x,\mathtt{M}}$ is constrained by $\alpha$ (Equation~\eqref{eq:background:strict-ac-def})} \\
    \hline
    \multicolumn{3}{c}{Mapping of the Notations} \\
    $\mathscr{R}$ & $\mathscr{M}^x\mapsto R^{x,\mathtt{M}}$ & Maps the packet sequence  $\mathcal{M}^x$\dots\\
    & \multicolumn{2}{l}{\dots at observation point $\mathtt{M}$ to the cumulative function $R^{x,\mathtt{M}}$}\\
    & \multicolumn{2}{l}{at the same observation point $\mathtt{M}$ in the same trajectory $x$.}\\
    \hline
    \multicolumn{3}{c}{Traffic-Regulator Models} \\
    $\Pi^{\gamma_{r,b}}$ & \multicolumn{2}{l}{Pi-operator \eqref{eq:sysmod:pi-gamma-regulator} associated with shaping curve $\gamma_{r,b}$}\\
    $\Lambda_{f,n}^x$ & \multicolumn{2}{l}{Number of tokens in the bucket of flow $f$ just after packet $n$.}\\
    $n\ominus1$ & \multicolumn{2}{l}{Same-flow previous packet's index, 0 if $n$ is the first of its flow.}
    % \hline
    % $\{\sigma_f\}_{f\in\mathcal{F}}$ &  \multicolumn{2}{l}{A set of shaping curves for the flows $f\in\mathcal{F}$} \\
    % $L_f^{\min}$ &  \multicolumn{2}{l}{Minimum packet size of flow $f$} \\
    % $L_f^{\max}$ &  \multicolumn{2}{l}{maximum packet size of flow $f$} \\

    %$\text{IR}(\sigma)$ & \multicolumn{2}{l}{The interleaved regulator configured with the set $\sigma$} \\
\end{tabular}}
\end{table}

\subsection{Notation with Marked-Point Processes}
\label{sec:sysmodel:marked-point-notation}

For a trajectory $x$ and for a packetized observation point $\mathtt{M}$ in the network $\mathcal{N}$, we define the packet sequence $\mathscr{M}^x=(M^x,L^{x,\mathtt{M}},F^{x,\mathtt{M}})$ as in \cite[\S II.A]{leboudecTheoryTrafficRegulators2018}:
$M^x=(M^x_1,M^x_2,\dots)$ is the sequence of packet arrival time instants at $\mathtt{M}$, in chronological order. 
$L^{x,\mathtt{M}}=(L^{x,\mathtt{M}}_1, L^{x,\mathtt{M}}_2, \dots)$ is the sequence of packet lengths, in chronological order.
$F^{x,\mathtt{M}}=(F^{x,\mathtt{M}}_1, F^{x,\mathtt{M}}_2, \dots)$ is the sequence of flow indexes: 
%for $f\in\mathcal{F}$, 
$F^{x,\mathtt{M}}_n=f$ if and only if the $n$-th packet that crosses $\mathtt{M}$ in trajectory $x$ belongs to $f$.
For another observation point $\mathtt{A}$, we note $\mathscr{A}^x=(A^x,L^{x,\mathtt{A}},F^{x,\mathtt{A}})$ the packet sequence at $\mathtt{A}$.

For two infinite sequence of real numbers $A=(A_n)_{n\in\mathbb{N}^*}$, $B=(B_n)_{n\in\mathbb{N}^*}$, we note $B\ge A$ if $\forall n\in\mathbb{N}^*$, $B_n\ge A_n$.

% If the network $\mathcal{N}$ is \ac{FIFO}, \ie if all systems $S_1,\dots,S_{\text{last}}$ are \ac{FIFO}, then for any two observation points $\mathtt{M},\mathtt{N}$, we have $L^{x,\mathtt{M}}=L^{x,\mathtt{N}}$ and $F^{x,\mathtt{M}}=F^{x,\mathtt{N}}$. 
% Therefore, for a \ac{FIFO} network $\mathcal{N}$ and for any trajectory $x$, we drop the supper-script and the packet sequence at $\mathtt{M}$ is $\mathscr{M}^x=(M^x,L^{x},F^{x})$.
%In addition, we drop the supper-script $x$ in all sections where the trajectory $x$ is non-ambiguous.

\subsection{Notation with Cumulative Arrival Functions}
\label{sec:sysmodel:cumulative}

For a trajectory $x$ and for an observation point $\mathtt{M}$ (that can be fluid or packetized), we denote by $R^{x,\mathtt{M}}$ [\resp $R_f^{x,\mathtt{M}}$] the left-continuous cumulative arrival function of the aggregate $\mathcal{F}$ [\resp of the individual flow $f$] at 
%observation point 
$\mathtt{M}$ in the trajectory $x$. 
%We drop the $x$ superscript in all sections where the trajectory $x$ is non-ambiguous.

% For an observation point $\mathtt{M}$, we say that a non-decreasing function $\alpha_i$ is an arrival curve for the flow $f_i$ [\resp $\alpha$ is an arrival curve for the aggregate $\mathcal{F}$] at $\mathtt{M}$, which we note $R_i^{x,\mathtt{M}} \sim \alpha_i$ [\resp $R^{x,\mathtt{M}} \sim \alpha$], if and only if $\forall 0\le s \le t$
% \begin{equation}
%     R_i^{x,\mathtt{M}}(t)-R_i^{x,\mathtt{M}}(s) \le \alpha_i(t-s) \text{ [\resp } R^{x,\mathtt{M}}(t)-R^{x,\mathtt{M}}(s) \le \alpha(t-s) \text{ ]}
% \end{equation}

For a packetized observation point $\mathtt{M}$ and its packet sequence $\mathscr{M}^x$ defined in Section \ref{sec:sysmodel:marked-point-notation}, we denote by $\mathscr{R}(\mathscr{M}^x)$ 
%[\resp $\mathscr{R}_i(\mathscr{M}^x)$] 
the cumulative arrival function:
\begin{align}
    \mathscr{R}(\mathscr{M}^x):t&\mapsto \sum_{n\in\mathbb{N}^*}L^{x,\mathtt{M}}_n\mathbbm{1}_{\{M^x_n < t\}}
    %\\
    %\mathscr{R}_f(\mathscr{M}^x):t&\mapsto \sum_{n\in\mathbb{N}^*}L^{x,\mathtt{M}}_n\mathbbm{1}_{\{M^x_n < t\}}\mathbbm{1}_{\{F^x_n = f\}}
\end{align}
where $\mathbbm{1}_{\{\texttt{cond}\}}$ equals 1 when \texttt{cond} is true, 0 otherwise.
From the definition of the packet sequence, we have $\forall x, \forall \mathtt{M}$, $\mathscr{R}(\mathscr{M}^x)=R^{x,\mathtt{M}}$ is the cumulative arrival function of the aggregate
% and $\mathscr{R}_f(\mathscr{M}^x)=R_f^{x,M}$ is the cumulative arrival function of $f$ 
at $\mathtt{M}$ in trajectory $x$.
%Note that the operator $\mathscr{R}:(M^x,L^{x,M},F^{x,M})\mapsto R^{x,M}$ does not depend on the flow sequence $F^{x,\mathtt{M}}$.

\subsection{Formal Model For Traffic Regulators}
\label{sec:sysmodel:traffic-regulators}

% In this paper, we consider either a \acf{PFR} that processes a single flow $\mathcal{F}=\{f\}$ or an \acf{IR} that processes an aggregate $\mathcal{F}$ with leaky-bucket shaping curves $\{\sigma_f\}_{f\in\mathcal{F}}=\{\gamma_{r_f,b_f}\}_{f\in\mathcal{F}}$.
We briefly recall the input-output characterization of traffic regulators from Le Boudec \cite{leboudecTheoryTrafficRegulators2018}, that we restrain to leaky-bucket shaping curves.

The \ac{PFR} for the single flow $f$, configured with a leaky-bucket shaping curve $\sigma_f=\gamma_{r_f,b_f}$ is a causal, lossless and \ac{FIFO} system (Figure~\ref{fig:sysmodel:traffic-regulators:simple-ir}) that transforms, for any acceptable trajectory $x$, the input packet sequence $\mathscr{B}^x=(B^x,L^{x,\mathtt{B}},F^{x,\mathtt{B}})$ with $F^{x,\mathtt{B}}=\{f,f\dots\}$ into the output packet sequence $\mathscr{D}^x=(D^x,L^{x,\mathtt{D}},F^{x,\mathtt{D}})$ with $L^{x,\mathtt{D}}=L^{x,\mathtt{B}}$, $F^{x,\mathtt{D}}=F^{x,\mathtt{B}}$ and $\forall n\in\mathbb{N}^*$
\begin{equation}\label{eq:sysmod:pfr-model}
    D_n^x = \max\left(B_n^x,D_{n-1}^x,\Pi^{\gamma_{r_f,b_f}}(D^{x},L^{x,\mathtt{D}})_n\right)
\end{equation}
By convention, $\forall x, D_0^x=0$.
The $\Pi$ operator $\Pi^{\gamma_{r_f,b_f}}$ associated with the shaping curve $\gamma_{r_f,b_f}$ is defined in \cite[\S III.A]{leboudecTheoryTrafficRegulators2018} for two sequences $D,L$ by, $\forall n\in\mathbb{N}^*$
\begin{equation}\label{eq:sysmod:pi-gamma-regulator}
    \left(\Pi^{\gamma_{r_f,b_f}}(D,L)\right)_n = \max_{1\le m \le n-1} \left\lbrace D_m + \gamma_{r_f,b_f}^{\downarrow}\left(\sum_{j=m}^n L_j\right)\right\rbrace
\end{equation}
$\gamma_{r_f,b_f}^{\downarrow}:w\mapsto |w-b|^+\cdot\frac{1}{r}$ is the pseudo-inverse\footnote{Defined by $\mathfrak{f}^{\downarrow}:w\mapsto \sup\{s\ge 0 | \mathfrak{f}(s) \ge w\}$} of $\gamma_{r_f,b_f}$.
% When $\sigma_f=\gamma_{r,b}$ is a leaky-bucket shaping curve and if all packets in the sequence $L$ have the same size $b$, then the $\Pi$-operator can be simplified: $\forall n\ge 1$, $\Pi^{\sigma_f}(D,L)_n=D_{n-1}+\frac{b}{r}$.

The \ac{IR} for the aggregate $\mathcal{F}$, configured with the leaky-bucket shaping curves $\{\gamma_{r_f,b_f}\}_{f\in\mathcal{F}}$ is a causal, lossless and \ac{FIFO} system (Figure~\ref{fig:sysmodel:traffic-regulators:simple-ir}) that transforms the input packet sequence $\mathscr{B}^x=(B^x,L^{x,\mathtt{B}},F^{x,\mathtt{B}})$ into the output sequence $\mathscr{D}^x=(D^x,L^{x,\mathtt{D}},F^{x,\mathtt{D}})$ with $L^{x,\mathtt{D}}=L^{x,\mathtt{B}}$, $F^{x,\mathtt{D}}=F^{x,\mathtt{B}}$ and $\forall n\in\mathbb{N}^*$
\begin{equation}\label{eq:sysmod:ir-model}
    D_n^x = \max\left(B_n^x,D_{n-1}^x,\Pi^{\gamma_{r_f,b_f}}([D^{x}]^f,[L^{x,\mathtt{D}}]^f)_{\text{index}(n,f)}\right)
\end{equation}
where $f=F^{x,\mathtt{D}}_n$ is the flow $f\in\mathcal{F}$ that owns the $n$-th packet that crosses the \ac{IR} and $\text{index}(n,f)$ is the index of the $n$-th packet in the sequence that contains only the packets of $f$.  By convention $\forall x, D_0^x=0$.
Equation~\eqref{eq:sysmod:ir-model} is similar to \eqref{eq:sysmod:pfr-model}, except that the flow $f\in\mathcal{F}$ that defines the applied shaping curve $\gamma_{r_f,b_f}$ changes at every new $n$ and the $\Pi$ operator is applied only to the subsequences $[D^{x}]^f$, $[L^{x,\mathtt{D}}]^f$, obtained from $D^{x}$, $L^{x,\mathtt{D}}$ by keeping only the packets that belong to $f$, \ie to the same flow as the current packet $n$.

Boyer proposes a second model \cite{boyerEquivalenceTheoreticalModel2022} for an \ac{IR} with leaky-bucket shaping curves.
It relies on the content of token buckets, one per flow.
For a flow $f\in\mathcal{F}$, $\Lambda^x_{f,n}$ is defined as the number of tokens inside the bucket for flow $f$ just after packet $n$ is released from the \ac{IR} in Trajectory $x$.
The head-of-line packet is released as soon as the bucket for the associated flow contains at least as many tokens as the packet's size.
With this equivalent model, the release time of packet $n$ (Equation~\eqref{eq:sysmod:ir-model}) is \cite[\S 3.3]{boyerEquivalenceTheoreticalModel2022}
\begin{equation}
    D_n^x = \max\left( B_n^x, D_{n-1}^x, \frac{L_n^{x,\mathtt{B}}-\Lambda^x_{f,{n\ominus 1}}}{r_f}+D_{n\ominus 1} \right)
\end{equation}
where $f=F^{x,\mathtt{B}}_n$ is the flow that owns packet $n$ and $n\ominus 1$ is the index of the last packet of $f$ just before $n$.
By convention, for any trajectory $x$ and any flow $f$, $n\ominus 1=0$ if $n$ is the first packet of $f$, and $\Lambda^x_{f,{0}}$ equals $b_f$, the shaping-curve burst configured for flow $f$.
Last, $D_0^x=0$.

%Boyer proved in \cite{boyerEquivalenceTheoreticalModel2022} that the two models are equivalent.
Boyer's \cite{boyerEquivalenceTheoreticalModel2022} and Le Boudec's \cite{leboudecTheoryTrafficRegulators2018} models are equivalent for leaky-bucket shaping curves. 
We use either model in the formal proofs in Appendix~\ref{appendix:proofs}.

\section{Formal Proofs}
\label{appendix:proofs}
%\section{Proofs of Section~\ref{sec:fifo-sff}}
\subsection{Proof of Theorem~\ref{thm:instab}}

\begin{proof}[Proof of Theorem~\ref{thm:instab}]
\label{proof:spring}
\label{proof:thm:instab}
    Consider an \ac{IR} that processes at least three flows with the same leaky-bucket shaping curve for three of them: $\forall f_i\in\{f_1,f_2,f_3\},\ \sigma_{f_i}=\gamma_{r,b}$ with $r>0$ and $b$ greater than the minimum packet size of $f_1,f_2,f_3$.

    Let the adversary Spring define the constants $I,d,\epsilon$ and $\tau$ as in Equation~\eqref{eq:spring-constants}.
    We consider the network $\mathcal{N}_1$ of Figure~\ref{fig:fifo-sff:spring:n1}, obtained by concatenating the Spring-controlled source $\phi$, the Spring-controlled system $S_1$ and the \ac{IR} (not controlled by Spring).
    We denote by $x=1$ the trajectory on the network $\mathcal{N}_1$ that results from the choices of Spring, and we formally describe it using the packet sequences of the form $\mathscr{M}^1=(M^1,L^{1,\mathtt{M}},F^{1,\mathtt{M}})$ for an observation point $\mathtt{M}$.
    
    \begin{table*}
        \caption{\label{tab:spring:sequences} Packet Sequences Used by Spring}
        \resizebox{\linewidth}{!}{\begin{tikzpicture}
    \node at (0,0) (before) {$\begin{NiceArray}{rrccccccccl}
        n: & \dots, & 6(k-1)+6 & 6k+1 & 6k+2 & 6k+3 & 6k+4 & 6k+5 & 6k+6 & 6(k+1)+1 & \dots\\
        \hline
        A^1=A^2=&(\dots, &3I+2\epsilon+(k-1)\tau, &d+k\tau, &I+\epsilon+k\tau, &I+d+k\tau, &2I+\epsilon + k\tau, &2I+2\epsilon+k\tau, &3I+2\epsilon+k\tau, &d+(k+1)\tau,&\dots) \\
        %L^{1,A}=&\dots, &b, &b, &b, &b, &b, &b, &b, &b, &\dots \\
        F^{1,\mathtt{A}}=F^{2,\mathtt{A}}= &(\dots, &{\color{orange}\boldsymbol{f_3}}, &{\color{blue}\boldsymbol{f_1}}, &{\color{red}\boldsymbol{f_2}}, &{\color{blue}\boldsymbol{f_1}}, &{\color{red}\boldsymbol{f_2}}, &{\color{orange}\boldsymbol{f_3}}, &{\color{orange}\boldsymbol{f_3}}, &{\color{blue}\boldsymbol{f_1}}, &\dots) \\
        \hline
        B^1=B^2=B^3=&(\dots, &3I+2\epsilon+(k-1)\tau, &2d+k\tau, &I+d+k\tau, &I+\epsilon+d+k\tau,  &2I+\epsilon+d+k\tau, &2I+2\epsilon+d+k\tau, &3I+2\epsilon+d+k\tau, &2d+(k+1)\tau,&\dots) \\
        %L^{1,A}=&\dots, &b, &b, &b, &b, &b, &b, &b, &b, &\dots \\
        F^{1,\mathtt{B}}=F^{3,\mathtt{B}}= &(\dots, &{\color{orange}\boldsymbol{f_3}}, &{\color{blue}\boldsymbol{f_1}}, &{\color{blue}\boldsymbol{f_1}}, &{\color{red}\boldsymbol{f_2}}, &{\color{red}\boldsymbol{f_2}}, &{\color{orange}\boldsymbol{f_3}}, &{\color{orange}\boldsymbol{f_3}}, &{\color{blue}\boldsymbol{f_1}}, &\dots)\\
        F^{2,\mathtt{B}}= &(\dots, &{\color{orange}\boldsymbol{f_3}}, &{\color{blue}\boldsymbol{f_1}}, &{\color{red}\boldsymbol{f_2}}, &{\color{blue}\boldsymbol{f_1}}, &{\color{red}\boldsymbol{f_2}}, &{\color{orange}\boldsymbol{f_3}}, &{\color{orange}\boldsymbol{f_3}}, &{\color{blue}\boldsymbol{f_1}}, &\dots)\\
        \hline
        D^1=D^3\ge&(\dots, &B^1_0+3(k-1)I+3I, &B^1_0+3kI, &B^1_0+3kI+I, &B^1_0+3kI+I,  &B^1_0+3kI+2I, &B^1_0+3kI+2I, &B^1_0+3kI+3I, &B^1_0+3(k+1)I,&\dots) \\
    \end{NiceArray}$};
    %\draw[decorate,decoration={brace}] ([yshift=-0.1cm] before.west) -- ([yshift=-0.5cm] before.north west) node[pos=0.5,left]{$\mathscr{A}^1=\mathscr{A}^2$};
    %\draw[decorate,decoration={brace}] ([yshift=0.2cm] before.south west) -- ([yshift=-0.2cm] before.west) node[pos=0.5,left]{$\mathscr{B}^1$};
    \draw[|-latex] ($(before.north)+(-5.2,0)$) -- ($(before.north west)+(2,0)$) node[pos=0.5,above]{$(k-1)^{\text{th}}$ period}; 
    \draw[|-latex] ($(before.north)+(8.4,0)$) -- ($(before.north east)$) node[pos=0.5,above]{$(k+1)^{\text{th}}$ period}; 
    \draw[latex-latex] ($(before.north)+(-5.2,0)$) -- ($(before.north)+(8.4,0)$) node[pos=0.5,above]{$k^{\text{th}}$ period}; 
    \draw[dashed] ($(before.north)+(-5.2,0)$) -- ($(before.south)+(-5.2,0)$);
    \draw[dashed] ($(before.north)+(8.4,0)$) -- ($(before.south)+(8.4,0)$);
\end{tikzpicture}}
    \end{table*}

    The Spring-controlled source $\phi$ generates at observation point $\mathtt{A}$ in Figure~\ref{fig:fifo-sff:spring:n1} the packet sequence $\mathscr{A}^1=(A^1,L^{1,\mathtt{A}},F^{1,\mathtt{A}})$ defined in Table~\ref{tab:spring:sequences}.
    It comprises a sub-sequence of six packets that repeats every $\tau$ seconds.
    All packets have the same size\footnote{If $b$ is greater than the maximum packet size, then we send two or more packets at the same time and such that their lengths sum up to $b$.} $b$, thus $L^{1,\mathtt{A}}=(b,b,b,\dots)$.

    Using the definitions of $I,d,\epsilon$ and $\tau$, one can verify that the sequence $A^1$ in Table~\ref{tab:spring:sequences} is increasing. 
    For example, $$A^1_{6k+2}-A^1_{6k+1}=I+\epsilon+k\tau-d-k\tau=I+\epsilon-d \ge \epsilon > 0$$ because $d < I$ \eqref{eq:spring-constants}.
    We can also compute the minimum distance between any two packets of the same flow: 
    $$A^1_{6k+3} - A^1_{6k+1} = I = \frac{b}{r}$$ and 
    $$A^1_{6(k+1)+1} - A^1_{6k+3} = \tau - I = 2I+3\epsilon - d \ge I + 3\epsilon \ge \frac{b}{r}$$ because $d < I$. 
    We obtain similar results for $f_2$ and $f_3$.
    This proves that the minimum distance between any two packets of the same flow is $\frac{b}{r}$.
    Hence, each flow $f_i$ is $\gamma_{r,b}$-constrained at its source $\phi$ and Property~\ref{enum:spring:source-constraint} of Theorem~\ref{thm:instab} holds.
    
    The Spring-controlled system $S_1$ outputs the packet sequence $\mathscr{B}^1=(B^1,L^{1,\mathtt{B}},F^{1,\mathtt{B}})$ with $L^{1,\mathtt{B}}=(b,b,b,\dots)$ and $B^1$, $F^{1,\mathtt{B}}$ defined in Table~\ref{tab:spring:sequences}.
    For any $k\ge 0$, the $(6k+2)$th packet in sequence $\mathscr{A}^1$ is now the $(6k+3)$th packet in $\mathscr{B}^1$ and \emph{vice versa}.
    This is reflected by the sequence $F^{1,\mathtt{B}}$ that differs from $F^{1,\mathtt{A}}$.
    %This exchange of packet is illustrated in Figure~\ref{fig:spring:n1}: the first dashdotted red packet of $f_1$ and the second solid blue packet of $f_2$ have exchange their order.
    
    By comparing $F^{1,\mathtt{A}}$ and $F^{1,\mathtt{B}}$, we observe that the system $S_1$ is causal, lossless and \ac{FIFO}-per-flow, \ie Property~\ref{enum:spring:s-fifo-per-flow-lossless} of Theorem~\ref{thm:instab} holds.
    One can also verify from Table~\ref{tab:spring:sequences} that the delay of every packet through $S_1$ is lower-bounded by $0$ and upper-bounded by $d$. 
    This is clear for most packets.
    We do it for those that exchange their order: $B^1_{6k+3}-A^1_{6k+2}=d$ and $B^1_{6k+2}-A^1_{6k+3}=0$.
    Hence, Property~\ref{enum:spring:delay-through-s} holds.
    
    The sequence of packets $\mathscr{B}^1$ is now the input of the \ac{IR} in Figure~\ref{fig:fifo-sff:spring:n1}.
    The output $\mathscr{D}^1$ is obtained from $\mathscr{B}^1$ by applying the model of the \ac{IR} from section~\ref{sec:sysmodel:traffic-regulators}.
    Spring does not control the \ac{IR}.
    Let $\mathscr{D}^1=(D^1,L^{1,\mathtt{D}},F^{1,\mathtt{D}})$  be the packet sequence at the output of the \ac{IR}.
    As the \ac{IR} is causal, lossless and \ac{FIFO}, we have $L^{1,\mathtt{D}}=L^{1,\mathtt{B}}$ and $F^{1,\mathtt{D}}=F^{1,\mathtt{B}}$.
    Furthermore,
    
    \begin{lemma}\label{lemma:proof:spring:d-min}
        For any $h\in\mathbb{N}$, $D^1_{2h+1}\ge B^1_1+hI$, and\\$D^1_{2h+2}\ge B^1_1+hI+I$
    \end{lemma}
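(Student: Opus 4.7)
My plan is to prove the lemma by induction on $h$, exploiting two structural facts: (i) in the input sequence $F^{1,\mathtt{B}}$ from Table~\ref{tab:spring:sequences}, the packets at the input of the \ac{IR} are organized into consecutive same-flow pairs $(2h+1, 2h+2)$, and (ii) because all packets have size $b$ and every flow is shaped by $\gamma_{r,b}$, the \ac{IR} enforces a minimum spacing of $I=b/r$ between the output times of two consecutive same-flow packets.

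First, I would verify the combinatorial pairing. Reading $F^{1,\mathtt{B}}$ from Table~\ref{tab:spring:sequences} period by period, the flow pattern inside one period is $(f_1, f_1, f_2, f_2, f_3, f_3)$. Writing $h = 3k + r$ with $r \in \{0, 1, 2\}$ then shows that packets $2h+1$ and $2h+2$ always share the same flow (respectively $f_1$, $f_2$, or $f_3$ depending on $r$). Next, I would derive the $I$-spacing from the \ac{IR}'s input/output characterization: in Equation~\eqref{eq:sysmod:pi-gamma-regulator}, the term $m = n-1$ applied to two consecutive same-flow packets of size $b$ yields $\gamma_{r,b}^{\downarrow}(2b) = (2b-b)/r = I$, hence through Equation~\eqref{eq:sysmod:ir-model}, $D^1_n \ge D^1_{n-1} + I$ whenever packets $n-1$ and $n$ belong to the same flow.

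The base case $h = 0$ is then immediate: causality of the \ac{IR} gives $D^1_1 \ge B^1_1$, and the $I$-spacing applied to the same-flow pair $(1, 2)$ gives $D^1_2 \ge D^1_1 + I \ge B^1_1 + I$. For the inductive step, assuming $D^1_{2h+1} \ge B^1_1 + hI$ and $D^1_{2h+2} \ge B^1_1 + (h+1)I$, the \ac{FIFO} behavior (\ie the $D^1_{n-1}$ term in Equation~\eqref{eq:sysmod:ir-model}) yields $D^1_{2h+3} \ge D^1_{2h+2} \ge B^1_1 + (h+1)I$, and the $I$-spacing applied to the same-flow pair $(2h+3, 2h+4)$ gives $D^1_{2h+4} \ge D^1_{2h+3} + I \ge B^1_1 + (h+2)I$, closing the induction.

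The proof is largely mechanical once the pairing structure is identified. The only subtlety worth highlighting is that the $I$-spacing comes out cleanly because packet sizes equal the bucket capacity $b$, so $\gamma_{r,b}^{\downarrow}(2b) = I$ is exact; with smaller or mixed packet sizes the spacing would become flow- and history-dependent, and the invariant would need to track cumulative token usage instead of merely counting same-flow pairs.
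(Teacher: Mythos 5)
Your proposal is correct and follows essentially the same route as the paper's proof: both proceed by induction on $h$, using causality for the base case, the same-flow pairing of indices $(2h+1,2h+2)$ in $F^{1,\mathtt{B}}$, the $I$-spacing from $\gamma_{r,b}^{\downarrow}(2b)=I$ via the $\Pi$ operator for the second element of each pair, and the \ac{FIFO} term $D^1_n\ge D^1_{n-1}$ to bridge across pair boundaries. The only cosmetic difference is that you make the pairing explicit via $h=3k+r$, whereas the paper simply reads it off Table~\ref{tab:spring:sequences}.
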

    \begin{proof}[Proof of Lemma~\ref{lemma:proof:spring:d-min}\label{proof:lemma:proof:spring:d-min}]
        We prove this by induction.
        \paragraph{Base case $h=0$}
        We have $F^{1,\mathtt{B}}_1=F^{1,\mathtt{B}}_2=f_1$. From Equation~\eqref{eq:sysmod:ir-model}, 
        \begin{align}
            D^1_1 &= \max\left(B_1^1, D^1_{0}, \Pi^{\gamma_{r,b}}([D^{1}]^{f_1},[L^{1,\mathtt{D}}]^{f_1})_{\text{index}(1,{f_1})} \right)\nonumber\\
            &\ge B^1_1  \nonumber\\
            D^1_1 &\ge B^1_1 + 0\cdot I \label{eq:proof:lemma:proof:spring:d-min:d1}
        \end{align}
        In addition, 
        \begin{align}
            D^1_2 &= \max\left(B_2^1, D^1_{1}, \Pi^{\gamma_{r,b}}([D^{1}]^{f_1},[L^{1,\mathtt{D}}]^{f_1})_{\text{index}(2,f_1)} \right)\nonumber \\ 
            &\ge \Pi^{\gamma_{r,b}}([D^{1}]^{f_1},[L^{1,\mathtt{D}}]^{f_1})_{\text{index}(2,{f_1})} \nonumber \\
            &\ge D_1^1 + \gamma_{r,b}^{\downarrow}(L_1^{1,\mathtt{D}}+L_2^{1,\mathtt{D}})\nonumber \\
            &\ge D_1^1 + |2b-b |^+ \cdot \frac{1}{r} \nonumber \\ 
            &\ge D_1^1 + I\nonumber \\
            &\ge B_1^1 + I &&\triangleright\ \eqref{eq:proof:lemma:proof:spring:d-min:d1}\nonumber\\
            D^1_2  &\ge B_1^1 + 0\cdot I + I \label{eq:proof:lemma:proof:spring:d-min:d2}
        \end{align}
        \paragraph{Induction step}
        Consider $h\in\mathbb{N}$, assume that 
        \begin{align}
            & D^1_{2h+1}\ge B^1_1+hI \label{eq:proof:lemma:proof:spring:d-min:2hplus1}\\
            \text{and}\quad &D^1_{2h+2}\ge B^1_1+hI+I \label{eq:proof:lemma:proof:spring:d-min:2hplus2}
        \end{align}
        We have
        \begin{align}
            D_{2(h+1)+1}^1 &= D_{2h+3}^1 && \nonumber \\
            & \ge D_{2h+2}^1 && \triangleright\ \eqref{eq:sysmod:ir-model} \nonumber \\
            & \ge B^1_1+(h+1) I && \triangleright\ \eqref{eq:proof:lemma:proof:spring:d-min:2hplus2} \label{eq:proof:lemma:proof:spring:d-min:2hplus3}
        \end{align}
        
        Now denote $f\triangleq F^{1,\mathtt{B}}_{2h+4}$.
        The index $2h+4$ is even, hence from the description of the Spring trajectory (Table~\ref{tab:spring:sequences}), we have that $F^{1,\mathtt{B}}_{2h+3}$ also equals $f$.
        Hence,
        \begin{align}
            D_{2h+4}^1 &\ge \Pi^{\gamma_{r,b}}\left( [D^1]^f, [L^{1,\mathtt{D}}]^f \right)_{\text{index}(2h+4,f)} &&\triangleright\ \eqref{eq:sysmod:ir-model}\nonumber \\
            &\ge D_{2h+3}^1 + \gamma_{r,b}^{\downarrow}(L_{2h+3}^{1,\mathtt{D}}+L_{2h+4}^{1,\mathtt{D}}) &&\triangleright\ \eqref{eq:sysmod:pi-gamma-regulator}\nonumber \\
            &\ge  D_{2h+3}^1 + |2b-b|^+ \cdot \frac{1}{r} \nonumber \\
            D_{2(h+1)+2}^1&\ge  B^1_1+(h+1) I + I &&\triangleright\ \eqref{eq:proof:lemma:proof:spring:d-min:2hplus3} \label{eq:proof:lemma:proof:spring:d-min:2hplus4}
        \end{align}
        Equation~\eqref{eq:proof:lemma:proof:spring:d-min:2hplus3} and \eqref{eq:proof:lemma:proof:spring:d-min:2hplus4} conclude the induction step of the proof of Lemma~\ref{lemma:proof:spring:d-min}.
    \end{proof}
    
    %We can now compute the delay of the $n$-th packet inside the \ac{IR}. 
    Now, for $k\in\mathbb{N}$,
    \begin{align*}
        D^1_{6k+1}-B^1_{6k+1} &\ge B_1^1 + 3kI-2d-k\tau \\
        &\qquad\qquad\qquad\qquad\triangleright\ \text{Lemma~\ref{lemma:proof:spring:d-min}}\\
        &\ge B_1^1 + 3kI-2d-3kI-3k\epsilon+kd \\
        &\qquad\qquad\qquad\qquad\triangleright\ \eqref{eq:spring-constants}\\
        &\ge k(d-3\epsilon) + B_1^1-2d
    \end{align*}
    As $\epsilon < \frac{d}{3}$, we obtain $\sup_{k\in\mathbb{N}}  D^1_{6k+1}-B^1_{6k+1} =+\infty$. 
    We can do the same for the other indices $6k+2,6k+3,\dots,6k+6$.
    Finally, this gives $\sup_{n\in\mathbb{N}^*}  D^1_{n}-B^1_{n} =+\infty$ and Property \ref{enum:spring:instable} of Theorem~\ref{thm:instab} holds.
    
    % the following constants
    % \begin{equation}
    %     I \triangleq \frac{b}{r}
    % \end{equation}
    
    % \begin{itemize}
    %     \item $\epsilon$ such that \begin{equation}0<\epsilon<\frac{D}{q_{\min}}\end{equation}. This is possible because $D>0$.
    %     \item $x_1$, a starting instant 
    % \end{itemize}
    
    % The packet sequence $\mathscr{A}^1=(A^1,L^{1,A},F^{1,A})$ at the output $\mathtt{A}$ of the sources is periodic with period $\tau$ and has the following profile
    
    % Source $\phi_1$ sends a packet of size 
    
    \end{proof}
    
\subsection{Proof of Proposition~\ref{prop:fifo-sff:no-explain:pfr-sff-explain}}

\begin{proof}[\label{proof:fifo:pfr-sff-explain} Proof of Proposition~\ref{prop:fifo-sff:no-explain:pfr-sff-explain}]
    We first prove that the \ac{PFR} offers the fluid service curve $\beta^{\boldsymbol{\sigma}}\triangleq\sigma_f$. 
    A shaping curve $\sigma_f$ that meets the conditions of Proposition~\ref{prop:fifo-sff:no-explain:pfr-sff-explain} verifies \cite[Eq. (1.14)]{leboudecNetworkCalculus2001} (see the discussion in \cite[\S1.7.3]{leboudecNetworkCalculus2001}).
    From \cite[Thm. 1.7.3]{leboudecNetworkCalculus2001}, a \ac{PFR} configured with such a service curve can be realized as the concatenation of a greedy shaper $G$ with shaping curve $\sigma_f$, followed by a packetizer.
    %The greedy shaper $G$ is similar to the \ac{PFR} at the bit level: it outputs the bit as soon as doing so does not violate the $\sigma_f$ constraint. 
    From \cite[Thm 1.5.1]{leboudecNetworkCalculus2001}, $G$ offers $\sigma_f$ as a service curve.
    Hence, the \ac{PFR} offers the fluid service curve $\sigma_f$ as defined in Definition~\ref{def:fluid-service-curve}.

    We now prove that $\sigma_f$ explains the shaping-for-free property. 
    Consider two causal, lossless and \ac{FIFO} systems $S$ and $Z'$ such that $Z'$ offers the service curve $\beta^{\boldsymbol{\sigma}}=\sigma_f$ (Figure~\ref{fig:fifo-sff:no-explain:curve-explains-sff}). 
    Consider now the network made of $S$ followed by $Z'$ and the subset $X$ of the trajectories on this network such that $\forall x\in X, R_f^{\mathtt{A},x} \sim \sigma_f$.
    Denote by $D_f^{\mathtt{A}\rightarrow\mathtt{B}}$ the worst-case delay over $X$ of the flow $f$ between $\mathtt{A}$ and $\mathtt{B}$ and $D_f^{\mathtt{A}\rightarrow\mathtt{C}}$ the worst-case delay over $X$ of the flow $f$ between $\mathtt{A}$ and $\mathtt{C}$.
    
    If $D_f^{\mathtt{A}\rightarrow\mathtt{B}}=+\infty$, then by causality $D_f^{\mathtt{A}\rightarrow\mathtt{C}}=+\infty$ and the result holds.

    Assume that $D_f^{\mathtt{A}\rightarrow\mathtt{B}}$ is finite.
    By causality, 
    \begin{equation}\label{eq:proof:pfr-sff-explain:lower-bound-wc}
        D_f^{\mathtt{A}\rightarrow\mathtt{C}}\ge D_f^{\mathtt{A}\rightarrow\mathtt{B}}
    \end{equation}
    Within the subset of trajectories $X$, $S$ offers to $f$ the individual service curve $\beta_{S,f} = \delta_{D_f^{\mathtt{A}\rightarrow\mathtt{B}}}$.
    Hence, the concatenation of $S$ and $Z'$ offers to $f$ the individual service curve $\beta_{S+Z,f} = \beta_{S,f} \otimes \beta_{z,f} = \delta_{D_f^{\mathtt{A}\rightarrow\mathtt{B}}}\otimes \sigma_f$ \cite[Thm. 1.4.6]{leboudecNetworkCalculus2001}.
    Within $X$, $\sigma_f$ is an arrival curve for $f$ at $\mathtt{A}$.
    Hence
    \begin{align*}
        D_{f}^{\mathtt{A}\rightarrow\mathtt{C}} &\le h(\sigma_f, \delta_{D_f^{\mathtt{A}\rightarrow\mathtt{B}}}\otimes \sigma_f) \quad &&  \triangleright \text{\cite[Thm. 1.4.2]{leboudecNetworkCalculus2001}} \\
        & \le h(\sigma_f, \sigma_f \otimes \delta_{D_f^{\mathtt{A}\rightarrow\mathtt{B}}}) &&  \triangleright \otimes \text{ commutative}\\
        & \le h(\sigma_f, \delta_{D_f^{\mathtt{A}\rightarrow\mathtt{B}}}) &&  \triangleright \text{\cite[Lem. 1.5.2]{leboudecNetworkCalculus2001}}\\
        D_{f}^{\mathtt{A}\rightarrow\mathtt{C}} & \le D_f^{\mathtt{A}\rightarrow\mathtt{B}}
    \end{align*}
    Combined with (\ref{eq:proof:pfr-sff-explain:lower-bound-wc}), we obtain the result.

\end{proof}
\subsection{Proof of Lemma~\ref{lem:fifo-sff:no-explain:minimum-sc-sff}}
\begin{proof}[Proof of Theorem~\ref{lem:fifo-sff:no-explain:minimum-sc-sff}\label{proof:lem:fifo-sff:no-explain:minimum-sc-sff}]

        Consider a set of shaping curves $\boldsymbol{\sigma}=\{\sigma_f\}_{f\in\mathcal{F}}$.
        Consider a wide-sense increasing function $\beta^{\boldsymbol{\sigma}}$.
        Assume that $\beta^{\boldsymbol{\sigma}}$ explains the shaping-for-free property (Definition~\ref{def:fifo-sff:no-explain:curve-explains-sff}).

        Consider the system $S$ that has no delay.
        $S$ is causal, lossless and \ac{FIFO}.
        Consider a causal, lossless and \ac{FIFO} system $Z'$ defined by, for all trajectory $x$, 
        \begin{equation}\label{eq:proof:lem:fifo-sff:no-explain:minimum-sc-sff:Z-def}
            R^{x,\mathtt{C}}=\min(R^{x,\mathtt{B}},\beta^{\boldsymbol{\sigma}})
        \end{equation}
        where $R^{x,\mathtt{B}}$ [\resp $R^{x,\mathtt{C}}$] is the cumulative function of the aggregate at the input $\mathtt{B}$ of $Z'$ [\resp at the output $\mathtt{C}$ of $Z'$], see Figure~\ref{fig:fifo-sff:no-explain:curve-explains-sff}.
        By Equation~\eqref{eq:proof:lem:fifo-sff:no-explain:minimum-sc-sff:Z-def}, $Z'$ offers $\beta^{\boldsymbol{\sigma}}$ as a service curve.

        Concatenate now $S$ and $Z'$ as in Figure~\ref{fig:fifo-sff:no-explain:curve-explains-sff} and consider the Trajectory $x_0$ with $\forall f\in\mathcal{F}, R_f^{x_0,\mathtt{A}}=\sigma_f$.
        $x_0$ belongs to the set $X$ defined in Definition~\ref{def:fifo-sff:no-explain:curve-explains-sff}.
        In addition, $S$ has no delay so
        \begin{equation}\label{eq:proof:lem:fifo-sff:no-explain:minimum-sc-sff:RB}
            R^{x_0,\mathtt{B}}=R^{x_0,\mathtt{A}}=\sum_{f\in\mathcal{F}}\sigma_f
        \end{equation}

        By Definition~\ref{def:fifo-sff:no-explain:curve-explains-sff}, the worst-case delay over the set of trajectories $X$ between $\mathtt{A}$ and $\mathtt{C}$ in Figure~\ref{fig:fifo-sff:no-explain:curve-explains-sff} equals the worst-case delay over $X$ between $\mathtt{A}$ and $\mathtt{B}$, \ie zero (by construction of $S$).
        By causality, the worst-case delay between $\mathtt{B}$  and $\mathtt{C}$ also equals 0 over $X$.

        As $x_0$ belongs to $X$, we have $R^{x_0,\mathtt{B}}=R^{x_0,\mathtt{C}}$.
        Combined with Equation~\eqref{eq:proof:lem:fifo-sff:no-explain:minimum-sc-sff:Z-def}, we have $\beta^{\boldsymbol{\sigma}}\ge R^{x_0,\mathtt{B}}$. 
        With Equation~\eqref{eq:proof:lem:fifo-sff:no-explain:minimum-sc-sff:RB}, we finally obtain
        \begin{equation}
            \beta^{\boldsymbol{\sigma}}\ge \sum_{f\in\mathcal{F}}\sigma_f
        \end{equation}

\end{proof}
\subsection{Proof of Theorem~\ref{thm:fifo-sff:no-explain:ir-has-no-explaination}}
\begin{proof}[Proof of Theorem~\ref{thm:fifo-sff:no-explain:ir-has-no-explaination}\label{proof:thm:fifo-sff:no-explain:ir-has-no-explaination}]
    Consider an \ac{IR} that shapes at least three flows $\mathcal{F}$ with the same leaky-bucket shaping curve for three of them $\forall f\in\{f_1,f_2,f_3\}\subset \mathcal{F} $, $\sigma_f = \gamma_{r,b}$ with $b\ge\max_{f\in\{f_1,f_2,f_3\}}{L^{\max}_f}$.
    Assume that there exists a wide-sense increasing curve $\beta^{\boldsymbol{\sigma}}$ such that (a) $\beta^{\boldsymbol{\sigma}}$ is a fluid service-curve of the \ac{IR} and (b) $\beta^{\boldsymbol{\sigma}}$ explains the shaping-for-free.
    Select an arbitrary $D>0$, and consider a causal, lossless, \ac{FIFO} system $S_2$ that offers the service curve $\delta_D$.
    Consider now the network $\mathcal{N}_2$ shown in Figure~\ref{fig:guarantees:sff:network-fifo}.
    Assume that the flows $f_1,f_2,f_3$ are $\gamma_{r,b}$-constrained at $\mathtt{A}$.

    We define the Spring constants $I,d,\epsilon,\tau$ as in (\ref{eq:spring-constants}).
    We define Trajectory 2 with $\mathscr{A}^2\triangleq\mathscr{A}^1$, where the supperscript $1$ denotes Trajectory $1$ described in Table~\ref{tab:spring:sequences}, and $\mathscr{B}^2\triangleq(B^2,L^{2,\mathtt{B}},F^{2,\mathtt{B}})$ with $B^2\triangleq B^1$, $L^{2,\mathtt{B}}\triangleq L^{1,\mathtt{B}}$.
    However, $F^{2,\mathtt{B}}$ differs from $F^{1,\mathtt{B}}$ by exchanging two packets in every period, as shown in Table~\ref{tab:spring:sequences}.
    We define the packet sequence $\mathscr{D}^2$ as the result of the input-output characterization of the \ac{IR} (Section~\ref{sec:sysmodel:traffic-regulators}) when it processes $\mathscr{B}^2$ as an input.

    \begin{figure}
        \resizebox{\linewidth}{!}{\begin{tikzpicture}
    \tikzstyle{n} = [draw]

    \node[n, label=above:{\acs{FIFO}}] at (0,0) (s2) {$S_2$};
    \node[n, anchor=west] at (2,0) (ir2) {\ac{IR}};
    \node[anchor=east] at (-1,0) (n2) {$\mathcal{N}_2$};
    \draw[->] (n2.east) -- (s2.west) node[pos=0.5,anchor=center](ta2){};
    \draw[->] (s2.east) -- (ir2.west) node[pos=0.5](tb2){};
    % \draw[->] (ir2.east) -- ($(ir2.east)+(1,0)$)node[pos=0.5](td2){};

    %\node[n] at (0,0) (s3) {$\delta_D$};
    \node[n, anchor=west] at (2,-1) (ir3) {\ac{IR}};
    \node[anchor=east] at (-1,-1) (n3) {$\mathcal{N}_3$};
    %\draw[->] (n2.east) -- (s2.west) node[pos=0.5,anchor=center](ta2){};
    \draw[->] let \p2=(s2.east) in  (\x2,-1) -- (ir3.west) node[pos=0.5](tb3){};
    % \draw[->] (ir3.east) -- ($(ir3.east)+(1,0)$)node[pos=0.5](td3){};

    \node[n, anchor=west] at (1,-2) (z4) {$Z$};
    \node[n,anchor=west] at ($(z4.east)+(0.5,0)$) (pl4) {Packetizer};
    \node[anchor=east] at (-1,-2) (n4) {$\mathcal{N}_4$};
    \draw[->] (z4.east) -- (pl4.west);
    \draw[->] let \p2=(s2.east) in  (\x2,-2) -- (z4.west) node[pos=0.5](tb4){};
    \draw[->] (pl4.east) -- ($(pl4.east)+(0.5,0)$)node[pos=0.5](td4){}node[pos=1,anchor=center](ttout){};
    
    \draw[->] let \p1=(ir3.east), \p2=(ttout.center) in (\x1,\y1) -- (\x2,\y1) node[pos=0.5](td3){};
    \draw[->] let \p1=(ir2.east), \p2=(ttout.center) in (\x1,\y1) -- (\x2,\y1) node[pos=0.5](td2){};

    % \node[n] at (0,-1) (s3) {$\delta_D$};
    % \node[n, anchor=west] at (1,-1) (z3) {$Z_3$};
    % \path let \p1=(ir2.east) in node[n,anchor=east] at (\x1,-1) (pl3) {$P^L$};
    % \node[anchor=east] at (-1,-1) (n3) {$\mathcal{N}_3$};
    % \draw[->] (n3.east) -- (s3.west) node[pos=0.5,anchor=center](ta3){};
    % \draw[->] (z3.east) -- (pl3.west);
    % \draw[->] (s3.east) -- (z3.west) node[pos=0.5](tb3){};
    % \draw[->] (pl3.east) -- ($(pl3.east)+(1,0)$)node[pos=0.5](td3){};

    \draw[dashed] let \p3=(ta2.center) in (\x3,0.2) -- (\x3,-2.2) node[pos=0,above]{$\mathtt{A}$};
    \draw[dashed] let \p3=(tb4.center) in (\x3,0.2) -- (\x3,-2.2) node[pos=0,above]{$\mathtt{B}$};
    \draw[dashed] let \p3=(td4.center) in (\x3,0.2) -- (\x3,-2.2) node[pos=0,above]{$\mathtt{D}$};
\end{tikzpicture}}
        \caption{\label{fig:guarantees:sff:network-fifo} Several networks and their observation points used in the Proof of Theorem~\ref{thm:fifo-sff:no-explain:ir-has-no-explaination}.}
    \end{figure}
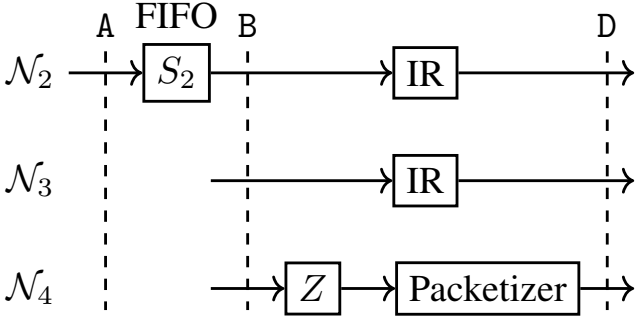

    One can check that Trajectory 2 is acceptable for the given constraints.
    For example, $\mathscr{A}^2=\mathscr{A}^1$, hence we know from Appendix~\ref{proof:thm:instab} that each flow is $\gamma_{r,b}$-constrained at $\mathtt{A}$ in Trajectory 2.
    
    In addition, $F^{2,\mathtt{B}}$ differs only slighty from $F^{1,\mathtt{B}}$, so we only need to recompute the delay through $S_2$ for the two packets with a different order: Packet $6k+2$ has a delay $B^2_{6k+2}-A^2_{6k+2}=d-\epsilon<d$ through $S_2$ and Packet $6k+3$ has a delay $B^2_{6k+3}-A^2_{6k+3}=\epsilon<d$ (Table~\ref{tab:spring:sequences}).
    Consequently, the delay through $S_2$ in Trajectory 2 is upper-bounded by $d$.
    By (\ref{eq:spring-constants}), $d<D$, so Trajectory 2 is acceptable with respect to $S_2$'s service-curve constraint $\delta_D$.

    Last, it is clear from Table~\ref{tab:spring:sequences} that the order of the packets is preserved between $\mathtt{A}$ and $\mathtt{B}$ in Trajectory 2, so Trajectory 2 is acceptable with respect to $S_2$'s \ac{FIFO} property.
    Also, $\mathscr{R}(\mathscr{B}^2)$ is constrained by the arrival curve $3\gamma_{r,b}\oslash\delta_D$ \cite[Thm. 1.4.3]{leboudecNetworkCalculus2001}.

    Now we consider the network $\mathcal{N}_3$ shown in Figure~\ref{fig:guarantees:sff:network-fifo}, and we define Trajectory 3 by $\mathscr{B}^3\triangleq\mathscr{B}^1$. 
    In particular, we have $B^3=B^2$.

    The operator $\mathscr{R}:(M^x,L^{x,\mathtt{M}},F^{x,\mathtt{M}})\mapsto R^{x,\mathtt{M}}$ defined in Appendix~\ref{sec:sysmodel:cumulative}, and that returns the cumulative function $R^{x,\mathtt{M}}$ of the traffic as a function of the packet sequence $(M^x,L^{x,\mathtt{M}},F^{x,\mathtt{M}})$, depends only on $M^x$ and $L^{x,\mathtt{M}}$, thus
    \begin{equation}\label{eq:guarantees:sff:r3r2}
        \mathscr{R}(\mathscr{B}^3)=\mathscr{R}(\mathscr{B}^2)\sim 3\gamma_{r,b}\oslash\delta_D
    \end{equation}
    
    %$\mathscr{R}(\mathscr{B}^3)=\mathscr{R}(\mathscr{B}^2)\sim 3\gamma_{r,b}\oslash\delta_D$.
    % \begin{equation}\label{eq:guarantees:sff:r3r2}
    %     \mathscr{R}(\mathscr{B}^3)=\mathscr{R}(\mathscr{B}^2) \sim \left(\sum_i \alpha_i\right)\oslash\delta_D
    % \end{equation}
    % by application of \cite[Thm. 1.4.3]{leboudecNetworkCalculus2001} on the causal and lossless system $\delta_D$ in $\mathcal{N}_2$.

    We now use the fact that $\beta^{\boldsymbol{\sigma}}$ is a fluid service curve of the \ac{IR}.
    By Definition~\ref{def:fluid-service-curve}, the network $\mathcal{N}_3$ can be replaced by the network $\mathcal{N}_4$ and for the same input packet sequence $\mathscr{B}^4\triangleq \mathscr{B}^3$, we have the same output packet sequence $\mathscr{D}^4=\mathscr{D}^3$. 
    In particular, for any $n\in\mathbb{N}^*$, 
    \begin{equation}
        D^3_n-B^3_n = D^4_n-B^4_n
    \end{equation}
    In $\mathcal{N}_4$, the packetizer does not increase the per-packet delay and $Z$ offers the service-curve $\beta^{\boldsymbol{\sigma}}$.
    In addition, $\mathscr{B}^4=\mathscr{B}^3$ so by Equation~(\ref{eq:guarantees:sff:r3r2}), $3\gamma_{r,b}\oslash\delta_D$ is an arrival curve for the aggregate at $\mathtt{B}$ in $\mathcal{N}_4$.
    By Definition~\ref{def:fluid-service-curve}, $Z$ is also causal, lossless and \ac{FIFO}, hence from \cite[Thm 1.4.2]{leboudecNetworkCalculus2001} we obtain, $\forall n\in\mathbb{N}^*$,
    \begin{align}
        D^3_n-B^3_n &= D^4_n-B^4_n \\
        & \le h\left(3\gamma_{r,b}\oslash\delta_D, \beta^{\boldsymbol{\sigma}}\right) \\
        & \le h\left(3\gamma_{r,b}, \delta_D \otimes \beta^{\boldsymbol{\sigma}}\right) && \triangleright \text{\cite[Thm 3.1.12]{leboudecNetworkCalculus2001}}
    \end{align}
    $\beta^{\boldsymbol{\sigma}}$ explains the shaping-for-free.
    By Lemma~\ref{lem:fifo-sff:no-explain:minimum-sc-sff}, we have $\beta^{\boldsymbol{\sigma}}\ge\sum_{f\in\mathcal{F}}\sigma_f \ge 3\gamma_{r,b}$. 
    Combined with \cite[Lemma 1.5.2]{leboudecNetworkCalculus2001}, we obtain
    \begin{align}
        D^3_n-B^3_n & \le h\left(3\gamma_{r,b}, \delta_D\right) &&\\
        & \le D
    \end{align}
    But the packet sequence $\mathscr{B}^3$ in $\mathcal{N}_3$ is equal to the packet sequence $\mathscr{B}^1$ in the network $\mathcal{N}_1$ of Appendix~\ref{proof:spring} and Table~\ref{tab:spring:sequences}.
    Hence, $\mathscr{D}^1$ is equal to $\mathscr{D}^3$. 
    We have a contradiction because we show in Appendix~\ref{proof:spring} that $\sup_n D^1_n - B^1_n = +\infty$.
\end{proof}

%\section{Proofs of Section~\ref{sec:ir-service-curves}}
\label{sec:appendix:proofs-ir-service-curves}

\subsection{Proof of Theorem~\ref{thm:ir-service-curves:strict-sc}}

\begin{proof}[Proof of Theorem~\ref{thm:ir-service-curves:strict-sc}]\label{proof:thm:ir-service-curves:strict-sc}
    Consider an \ac{IR} that processes an aggregate $\mathcal{F}$ of flows with leaky-bucket shaping curves: $\forall f \in \mathcal{F}$, $\sigma_f=\gamma_{r_f,b_f}$ with $r_f>0$ and $b_f \ge L^{\max}_f$.
    Define $L^{\min}$ and $I^{\max}$ as in Theorem~\ref{thm:ir-service-curves:strict-sc}.
    Denote by $\mathtt{B}$ [\resp $\mathtt{D}$] an observation point located at the input [\resp at the output] of the \ac{IR} (as in Figure~\ref{fig:sysmodel:traffic-regulators:simple-ir}).

    Consider an acceptable trajectory $x$.
    Consider two time instants $s < t$ such that $]t,s]$ is a backlogged period for the \ac{IR} and consider $\epsilon\in\mathbb{R}$ such that $0<\epsilon<t-s$.
    From the definition of the backlogged period, the cumulative functions at the input and the output verify $R^{x,\mathtt{B}}(s+\epsilon)>R^{x,\mathtt{D}}(s+\epsilon)$.

    Both the input and the output of the \ac{IR} are packetized.
    Hence, consider now the packet sequences $\mathscr{B}^x=(B^x,L^{x,\mathtt{B}},F^{x,\mathtt{B}})$ and $\mathscr{D}^x=(D^x,L^{x,\mathtt{D}},F^{x,\mathtt{D}})$ in trajectory $x$, at observation point $\mathtt{B}$ and $\mathtt{D}$ (as defined in Section~\ref{sec:sysmodel:marked-point-notation}).
    The \ac{IR} is causal, lossless and \ac{FIFO}, so we have $L^{x,\mathtt{B}}=L^{x,\mathtt{D}}$ [\resp $F^{x,\mathtt{B}}=F^{x,\mathtt{D}}$]. 
    Denote by $L^x$ [\resp $F^x$] this sequence.

    At $s+\epsilon$, we have $R^{x,\mathtt{B}}(s+\epsilon)>R^{x,\mathtt{D}}(s+\epsilon)$.
    The \ac{IR} is lossless, so at $s+\epsilon$, it is non-empty.
    In addition, its input and output are packetized, so the \ac{IR} contains at least one packet at $t+\epsilon$.
    The \ac{IR} is also \ac{FIFO}, so denote by $n$ the index within the sequences $\mathscr{B}^x$, $\mathscr{D}^x$ of the head-of-line packet inside the \ac{IR} at $s+\epsilon$ and denote $g\triangleq F_n^x$ the flow to which packet $n$ belongs in trajectory $x$.

    We now use the \ac{IR} model from \cite{boyerEquivalenceTheoreticalModel2022} to compute the release time instant of packet $n$ in trajectory $x$:
    \begin{equation}
        D_n^x=\max\left\lbrace B_n^x, D_{n-1}^x, \frac{L_n^x-\Lambda^x_{g,n\ominus 1}}{r_{g}}+D^x_{n\ominus 1}\right\rbrace
    \end{equation}
    where $\Lambda^x_{g,k}$ is the number of tokens available for flow $g$ in its token-bucket filter after packet $k$ left the \ac{IR} (see Appendix~\ref{sec:sysmodel:traffic-regulators}).
    %It is positive.
    %And $n\ominus1$ is the index of the previous packet that also belongs to $g=F_n^x$, with $n\ominus1=0$ if $n$ is the first packet of $g$.

    At time instant $s+\epsilon$, packet $n$ is the head-of-line packet in the \ac{IR}, so $B_n\le s+\epsilon$, $D_{n-1}\le s+\epsilon$ and $D_{n\ominus1}\le s+\epsilon$.
    Hence,
    \begin{align}
        D_n^x&\le\max\left\lbrace s+\epsilon,\frac{L_n^x-\Lambda^x_{g,n\ominus1}}{r_g}+s+\epsilon\right\rbrace\\
        &\le \max\left\lbrace s+\epsilon, \frac{L_n^x}{r_g}+s+\epsilon \right\rbrace&&\triangleright \Lambda^x_{g,n\ominus 1}\ge 0\\
        &\le \frac{L_n^x}{r_g}+s+\epsilon&&\triangleright L_n^x \ge 0 \label{eq:proof:thm:ir-service-curves:strict-sc:d-upper-bound}
    \end{align}
    With the terminology of \cite{boyerEquivalenceTheoreticalModel2022}, Equation (\ref{eq:proof:thm:ir-service-curves:strict-sc:d-upper-bound}) states that packet $n$ stays at the head of the line for a duration that is less than the time required for the associated token bucket to regain as many credits as the length of $n$.
    We obtain 
    \begin{align}
        D_n^x &\le \frac{L_g^{\max}}{r_g} + s+\epsilon\\
        &\le I^{\max} + s + \epsilon
    \end{align}
    
    The size $L_n^x$ of packet $n$ is greater than $L^{\min}$, hence the cumulative function $R^{x,\mathtt{D}}$ at the output of the \ac{IR} in trajectory $x$ verifies:
    \begin{align}
        R^{x,\mathtt{D}}(t)-R^{x,\mathtt{D}}(s) &\ge \left\lbrace
            \begin{aligned}
                0 &&\quad \text{ if } t -s \le I^{\max} + \epsilon \\
                L^{\min} &&\quad \text{ if } t - s > I^{\max} + \epsilon \\
            \end{aligned}\right.\\
            R^{x,\mathtt{D}}(t)-R^{x,\mathtt{D}}(s) &\ge L^{\min} \wedge \left(\delta_{I^{\max}+\epsilon}\right)(t-s) \label{eq:proof:thm:ir-service-curves:strict-sc:stric-epsilon}
    \end{align}
    where $\wedge$ is the minimum and $\delta$ is the bounded-delay curve (Table~\ref{tab:extended-notations}).
    Equation~\eqref{eq:proof:thm:ir-service-curves:strict-sc:stric-epsilon} is valid for any $\epsilon$ such that $0<\epsilon<t-s$, hence
    \begin{equation}
        R^{x,\mathtt{D}}(t)-R^{x,\mathtt{D}}(s) \ge L^{\min} \wedge \left(\delta_{I^{\max}}(t-s)\right) \label{eq:proof:thm:ir-service-curves:strict-sc:stric-no-epsilon}
    \end{equation}
    Equation~\eqref{eq:proof:thm:ir-service-curves:strict-sc:stric-no-epsilon} is valid for any acceptable trajectory $x$ with no assumption on the context of the \ac{IR} and for any backlogged period $]s,t]$.
    In addition, note that the function $\beta_0:t\mapsto L^{\min} \wedge \delta_{I^{\max}}(t)$ is wide-sense increasing.
    By definition, the \ac{IR} offers $\beta_0$ as a context-agnostic strict service curve.
    The curve $\beta_0$ is shown with a dashed-blue line in Figure~\ref{fig:ir-service-curves:strict-sc:result-figure}.

    Let us now define the curve $\beta_{\text{sc}}$ as
    \begin{equation}
        \beta_{\text{sc}}:t\mapsto \left\lfloor\frac{t}{I^{\max}}\right\rfloor\cdot L^{\min}
    \end{equation}
    where $\lfloor \cdot \rfloor$ denotes the floor function.
    We observe that $\beta_{\text{sc}}$ is the supper-additive closure of $\beta_0$ \cite[\S 2.4]{bouillardDeterministicNetworkCalculus2018}.
    Consequently, the \ac{IR} also offers $\beta_{\text{sc}}$ as a strict service curve \cite[Prop. 5.6.2]{bouillardDeterministicNetworkCalculus2018}.

    In addition, any curve $\beta'$ such that $\forall t\ge 0$, $\beta'(t)\le \beta_{\text{sc}}$ is also a strict service curve of the \ac{IR}.
    This is the case, in particular, for the rate-latency service curve $\beta_{R,T}$ with $R=\frac{L^{\min}}{I^{\max}}$ and $T=I^{\max}$. 

\end{proof}
\subsection{Proof of Proposition~\ref{prop:ir-service-curves:strict-sc:limit}}
\begin{proof}[Proof of Proposition~\ref{prop:ir-service-curves:strict-sc:limit}]\label{proof:prop:ir-service-curves:strict-sc:limit}
    Consider an \ac{IR} that processes an aggregate $\mathcal{F}$ with leaky-bucket shaping curves $\{\sigma_f\}_{f\in\mathcal{F}}=\{\gamma_{r_f,b_f}\}_{f\in\mathcal{F}}$.
    Take a curve $\beta^{\text{strict}}\in\mathfrak{F}_0$ and assume that $\beta^{\text{strict}}$ is a context-agnostic strict service curve of the \ac{IR}.

    Take $f\in\mathcal{F}$, and consider trajectory $x_f$.
    Define $I_f\triangleq\frac{b_f}{r_f}$.
    Let the packet sequence $\mathscr{B}^{x_f}$ at the input $\mathtt{B}$ of the \ac{IR} be $B^{x_f}=(0,\frac{I_f}{2},I_f,\frac{3I_f}{2},2I_f,\dots)$, $L^{x_f,\mathtt{B}}=(b_f,b_f,b_f,b_f,b_f,\dots)$, $F^{x_f,\mathtt{B}}=(f,f,f,f,f,\dots)$.
    That is, only flow $f$ sends packets of size $b_f$ at twice the rate allowed by its shaping curve.
    There are no packets for flows $\mathcal{F}\backslash\{f\}$.

    From the input-output characterisation of the \ac{IR} (Appendix~\ref{sec:sysmodel:traffic-regulators}), we directly obtain that the packet sequence $\mathscr{D}^x$  at the output $\mathtt{D}$ of the \ac{IR} is $L^{x_f,\mathtt{D}}=L^{x_f,\mathtt{B}}$, $F^{x_f,\mathtt{D}}=F^{x_f,\mathtt{B}}$ and $D^{x_f}=(0,I_f,2I_f,3I_f,4I_f,\dots)$.

    Hence, $\sigma_f$ is an arrival curve for the output traffic $R^{x_f,\mathtt{D}}$ and the \ac{IR} is non-empty after $\frac{I_f}{2}$.
    By combining \eqref{eq:background:strict-sc-def} and \eqref{eq:background:strict-ac-def}, this gives
    \begin{equation}\label{eq:proof:prop:ir-service-curves:strict-sc:limit:ineq}
        \forall t\ge0, \quad \beta^{\text{strict}}(t) \le \sigma_f(t)
    \end{equation}
    Equation~\eqref{eq:proof:prop:ir-service-curves:strict-sc:limit:ineq} is valid for any $f\in\mathcal{F}$, hence the result.

\end{proof}
\subsection{Proof of Theorem~\ref{thm:ir-service-curves:limit-ir-residual}}

\begin{proof}[Proof of Theorem~\ref{thm:ir-service-curves:limit-ir-residual}]\label{proof:thm:ir-service-curves:limit-ir-residual}
   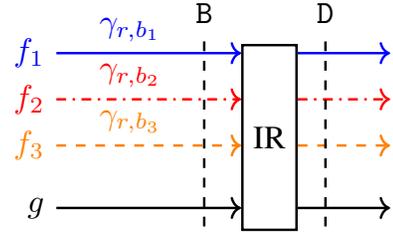
\begin{figure}
   \centering
      \resizebox{0.6\linewidth}{!}{\begin{tikzpicture}
    \node[draw, minimum height=2cm] at (0,0) (ir) {IR};
    \draw[->, blue] ($(ir.north west) + (-2,-0.1)$) -- ($(ir.north west)+(0,-0.1)$)             node[pos=0,left] {$f_1$}        node[pos=0.4,above] {$\gamma_{r,b_1}$} node[pos=0.8,anchor=center](tbup){};
    \draw[->, red, dashdotted] ($(ir.north west) + (-2,-0.6)$) -- ($(ir.north west)+(0,-0.6)$)  node[pos=0,left] {$f_2$}        node[pos=0.4,above] {$\gamma_{r,b_2}$};
    \draw[->, orange, dashed] ($(ir.north west) + (-2,-1.1)$) -- ($(ir.north west)+(0,-1.1)$)   node[pos=0,left] {$f_3$}        node[pos=0.4,above] {$\gamma_{r,b_3}$};
    \draw[->] ($(ir.south west) + (-2,0.25)$) -- ($(ir.south west)+(0,0.25)$)                   node[pos=0,left] {$g$}                                                 node[pos=0.8,anchor=center](tbdown){};

    \draw[->, blue] ($(ir.north east) + (0,-0.1)$) -- ($(ir.north east)+(1,-0.1)$)                  node[pos=0.3,anchor=center](tdup){};
    \draw[->, red, dashdotted] ($(ir.north east) + (0,-0.6)$) -- ($(ir.north east)+(1,-0.6)$);
    \draw[->, orange, dashed] ($(ir.north east) + (0,-1.1)$) -- ($(ir.north east)+(1,-1.1)$);
    \draw[->] ($(ir.south east) + (0,0.25)$) -- ($(ir.south east)+(1,0.25)$)                        node[pos=0.3,anchor=center](tddown){};

    \draw[dashed] ($(tbdown)+(0,-0.2)$) -- ($(tbup)+(0,0.2)$) node[pos=1,above]{$\mathtt{B}$};
    \draw[dashed] ($(tddown)+(0,-0.2)$) -- ($(tdup)+(0,0.2)$) node[pos=1,above]{$\mathtt{D}$};
\end{tikzpicture}}
      \caption{\label{fig:proof:thm:ir-service-curves:limit-ir-residual:situation} Situation for the Proof of Theorem~\ref{thm:ir-service-curves:limit-ir-residual}. The flows $f_1,f_2,f_3$ are processed by the \ac{IR} with the same leaky-bucket shaping curve $\gamma_{r,b}$. They enter the \ac{IR} with leaky-bucket arrival curves $\gamma_{r_1,b_1}$,$\gamma_{r_2,b_2}$ and $\gamma_{r_3,b_3}$. We are interested in an \emph{individual} service curve that the \ac{IR} offers to a fourth flow $g$.}
   \end{figure}
   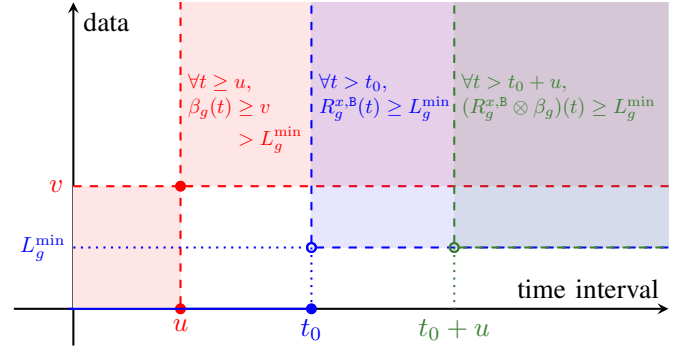
\begin{figure}
      \centering
         \resizebox{\linewidth}{!}{\begin{tikzpicture}
    \tikzstyle{ma} = [
        shorten >=-2pt,
        shorten <=-2pt
    ]
    \begin{axis}[
        axis x line=center,
        axis y line=center,
        xmin=-0.5,
        ymin=-0.6,
        xmax=5,
        ymax=5,
        xlabel={time interval},
        ylabel={data},
        ticks=none,
        width=10cm,
        height=6cm
    ]    
    \def\xBeta{0.9}
    \def\xR{2}
    \def\xRStar{3.2}
    
    \path[fill=red!10] (0,0) -- (\xBeta,0) -- (\xBeta,2) -- (0,2) -- cycle;
    \path[fill=red, opacity=0.1] (\xBeta,2) -- (5,2) -- (5,5) -- (\xBeta,5) -- cycle;

    \draw[red, ma, {Circle[length=4pt,width=4pt]}-, dashed] (\xBeta,2) -- (5,2);
    \draw[red, dashed] (0,2) -- (1,2) node[pos=0,left]{$v$};
    \draw[red, ma, {Circle[length=4pt,width=4pt]}-, dashed] (\xBeta,0) -- (\xBeta,5) node[pos=0,below]{$u$};

    \draw[blue, ma, -{Circle[length=4pt,width=4pt]}] (0,0) -- (\xR,0);
    \draw[blue, ma, {Circle[open,length=4pt,width=4pt]}-, dashed] (\xR,1) -- (5,1);
    \draw[blue, ma, {Circle[open,length=4pt,width=4pt]}-, dashed] (\xR,1) -- (\xR,5);
    \draw[dotted, blue] (\xR,1) -- (\xR,0) node[pos=1,below]{$t_0$};
    \draw[dotted, blue] (0,1) -- (\xR,1) node[pos=0,left,scale=0.8]{$L_g^{\min}$};
    \path[fill=blue, opacity=0.1] (\xR,1) -- (5,1) -- (5,5) -- (\xR,5) -- cycle;

    \draw[OliveGreen, ma, {Circle[open, length=4pt,width=4pt]}-, dashed] (\xRStar,1) -- (5,1);
    \draw[OliveGreen, ma, {Circle[open, length=4pt,width=4pt]}-, dashed] (\xRStar,1) -- (\xRStar,5);
    \draw[dotted, OliveGreen] (\xRStar,1) -- (\xRStar,0) node[pos=1,below] {$t_0+u$};
    \path[fill=OliveGreen, opacity=0.1] (\xRStar,1) -- (5,1) -- (5,5) -- (\xRStar,5) -- cycle;

    \node[scale=0.75, anchor=north west, red] at  (\xBeta,4) {\makecell[l]{$\forall t\ge u,$\\$\begin{aligned}\beta_g(t)&\ge v\\ &>L_g^{\min}\end{aligned}$}};
    \node[scale=0.75, anchor=north west, blue] at (\xR,4) {\makecell[l]{$\forall t>t_0,$\\$R^{x,\mathtt{B}}_g(t) \ge L_g^{\min}$}};
    \node[scale=0.75, anchor=north west, OliveGreen] at (\xRStar,4) {\makecell[l]{$\forall t>t_0+u,$\\$(R^{x,\mathtt{B}}_g\otimes \beta_g)(t) \ge L_g^{\min}$}};

    \end{axis}
\end{tikzpicture}}
         \caption{\label{fig:proof:thm:ir-service-curves:limit-ir-residual:min-plus-conv} 
         If we assume that $\exists u \ge 0, \beta_g(u)=v>L_g^{\min}$ as in Lemma~\ref{lemma:proof:thm:ir-service-curves:limit-ir-residual:upper-bound-first-packet}, then this figure depicts the envelopes of possible values for the \emph{individual} service curve $\beta_g$ for $g$ (red area), for the cumulative function $R_g^{x,\mathtt{B}}$ of $g$ at the input $\mathtt{B}$ of the \ac{IR} with a trajectory $x\in X'$ (in blue), and for the resulting min-plus convolution $R_g^{x,\mathtt{B}}\otimes\beta_g$ (in green).}
   \end{figure}

   Consider an \ac{IR} that processes an aggregate $\mathcal{F}$ of at least four flows with the same leaky-bucket shaping curve for three of them: $\forall f_i\in\{f_1,f_2,f_3\}$, $\sigma_{f_i} = \gamma_{r,b}$.
   Consider a flow $g\in\mathcal{F}\backslash\{f_1,f_2,f_3\}$.
   For $f_i\in\{f_1,f_2,f_3\}$, assume that the flow $f_i$ is constrained at the input $\mathtt{B}$ of the \ac{IR} by a leaky-bucket arrival curve $\gamma_{r,b_i}$ with $b_1 > b$, $b_2\ge b$ and $b_3\ge b$ (Figure~\ref{fig:proof:thm:ir-service-curves:limit-ir-residual:situation}).
   Assume that any other flow $h\in\mathcal{F}\backslash\{f_1,f_2,f_3,g\}$ exhibits an arrival curve $\alpha_h^{\mathtt{B}}$ at the input $\mathtt{B}$ of the interleaved regulator.
   Finally, assume that the \ac{IR} offers to $g$ the individual service curve $\beta_g$. 
   This curve can depend on the shaping curves $\{\sigma_{f}\}_{f\in\mathcal{F}}$ and on the arrival curves of the other flows $\alpha_{f_1}^{\mathtt{B}}=\gamma_{r,b_1}$, $\alpha_{f_2}^{\mathtt{B}}=\gamma_{r,b_2}$, $\alpha_{f_3}^{\mathtt{B}}=\gamma_{r,b_3}$, $\{\alpha_h^{\mathtt{B}}\}_{h\in\mathcal{F}\backslash\{f_1,f_2,f_3,g\}}$.
   Consider $X'$, the subset of trajectories such that for each $x\in X'$, the first packet of $g$ that enters the \ac{IR} is of size $L_g^{\min}$.

   We obtain the proof of Theorem~\ref{thm:ir-service-curves:limit-ir-residual} through two lemmas.
   \begin{lemma}\label{lemma:proof:thm:ir-service-curves:limit-ir-residual:upper-bound-first-packet}
      If there exists $u\in\mathbb{R}^+$ such that $\beta(u)>L_g^{\min}$, then for any trajectory $x\in X'$, the delay of the first packet of $g$ in the \ac{IR} is upper-bounded by $u$.
   \end{lemma}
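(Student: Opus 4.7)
The plan is to apply the individual service-curve inequality for $g$ at time $t_0 + u$, where $t_0$ denotes the arrival date at $\mathtt{B}$ of the first packet of $g$ in the chosen trajectory $x \in X'$, and to verify that both halves of the resulting min-plus convolution are at least $L_g^{\min}$.

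First I would fix $x \in X'$ and set $t_0$ as above. Using the left-continuity convention for cumulative arrival functions and the definition of $X'$ (the first packet of $g$ has size exactly $L_g^{\min}$), one obtains immediately that $R_g^{x,\mathtt{B}}(s) = 0$ for all $s \le t_0$ and $R_g^{x,\mathtt{B}}(s) \ge L_g^{\min}$ for all $s > t_0$. Because $\beta_g$ is an individual service curve for $g$, Equation~\eqref{eq:background:indiv-sc:def} yields
\begin{equation*}
R_g^{x,\mathtt{D}}(t_0 + u) \ge (R_g^{x,\mathtt{B}} \otimes \beta_g)(t_0 + u) = \inf_{0 \le s \le t_0 + u} \left( R_g^{x,\mathtt{B}}(s) + \beta_g(t_0 + u - s) \right).
\end{equation*}

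Second, I would split the infimum at $s = t_0$. On the branch $s \in [0, t_0]$, the first summand vanishes while $t_0 + u - s \ge u$, so $\beta_g(t_0 + u - s) \ge \beta_g(u) > L_g^{\min}$ by the wide-sense-increasing property of $\beta_g$ and by hypothesis. On the branch $s \in (t_0, t_0 + u]$, the first summand is already at least $L_g^{\min}$ and the second is non-negative. Hence $(R_g^{x,\mathtt{B}} \otimes \beta_g)(t_0 + u) \ge L_g^{\min}$, which combined with the inequality above gives $R_g^{x,\mathtt{D}}(t_0 + u) \ge L_g^{\min}$. By left-continuity of $R_g^{x,\mathtt{D}}$ and the \ac{FIFO} behavior of the \ac{IR} (which in particular forces the packets of $g$ to leave in order of arrival), this forces the first packet of $g$ to have departed at some time strictly before $t_0 + u$; its delay through the \ac{IR} is therefore at most $u$, which is the claim.

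The only subtle point, rather than a genuine obstacle, is the treatment of the boundary $s = t_0$ under the left-continuous convention: because $R_g^{x,\mathtt{B}}(t_0) = 0$, the pre-arrival branch of the infimum is closed and evaluates to exactly $\beta_g(u)$, which is why the hypothesis must read $\beta_g(u) > L_g^{\min}$ with a strict inequality rather than just $\ge$. This matches Figure~\ref{fig:proof:thm:ir-service-curves:limit-ir-residual:min-plus-conv}, in which the green envelope of the min-plus convolution jumps to $L_g^{\min}$ precisely at $t_0 + u$.
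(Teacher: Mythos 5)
Your proof is correct and follows the same route as the paper's: split the min-plus convolution so that the pre-arrival branch is controlled by $\beta_g(u)$ via monotonicity and the post-arrival branch is controlled by the arrival of $g$'s first packet at $t_0$, then use left-continuity and the \ac{FIFO} property of the \ac{IR} to convert the output lower bound into a departure-date bound. The only cosmetic difference is that you evaluate the individual-service-curve inequality at the single instant $t_0+u$ (splitting the infimum at $s=t_0$), whereas the paper derives $R_g^{x,\mathtt{D}}(t)\ge L_g^{\min}$ for all $t>t_0+u$ (splitting at $s=u$); both yield the claim.

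One small aside worth correcting: your closing remark that the strict inequality $\beta_g(u)>L_g^{\min}$ is forced by the closed boundary at $s=t_0$ is not actually load-bearing. Even with $\beta_g(u)=L_g^{\min}$, your two-branch estimate still gives $(R_g^{x,\mathtt{B}}\otimes\beta_g)(t_0+u)\ge L_g^{\min}$, hence $R_g^{x,\mathtt{D}}(t_0+u)\ge L_g^{\min}$, and under the left-continuous convention this already forces the first packet of $g$ to depart strictly before $t_0+u$. The strict form of the hypothesis is harmless, and it matches the lemma as stated, but it is not what rescues the boundary case.
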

   \begin{proof}[Proof of Lemma~\ref{lemma:proof:thm:ir-service-curves:limit-ir-residual:upper-bound-first-packet}\label{proof:lemma:proof:thm:ir-service-curves:limit-ir-residual:upper-bound-first-packet}]
      Denote by $v$ the value of $\beta_g(u)$.
      We have $v>L_g^{\min}$.
      $\beta_g$ is an individual service curve of the \ac{IR} for $g$.
      By definition, it is wide-sense increasing.
      Hence, 
      \begin{equation}\label{eq:proof:thm:ir-service-curves:limit-ir-residual:beta-increase}
         \forall t\ge u, \quad \beta_g(t)\ge v> L_g^{\min}
      \end{equation}
      The red-shaded area in Figure~\ref{fig:proof:thm:ir-service-curves:limit-ir-residual:min-plus-conv} represents the possible values of $\beta_g$.

      Recall that $\mathtt{B}$, the observation point at the input of the \ac{IR} (Figure~\ref{fig:proof:thm:ir-service-curves:limit-ir-residual:situation}), is packetized.
      Consider a trajectory $x\in X'$ and denote by $t_0$ the arrival time instant of the first packet of $g$ at observation point $\mathtt{B}$ in trajectory $x$.
      The cumulative arrival function $R_g^{x,\mathtt{B}}$ of $g$ at $\mathtt{B}$ is wide-sense increasing and left-continuous, hence
      \begin{equation}\label{eq:proof:thm:ir-service-curves:limit-ir-residual:r-in-increase}
         \forall t> t_0, R_{g}^{x,\mathtt{B}}(t) \ge L_g^{\min} 
      \end{equation}
      The blue-shaded area in Figure~\ref{fig:proof:thm:ir-service-curves:limit-ir-residual:min-plus-conv} represents the possible value of $R_{g}^{x,\mathtt{B}}$.

      Consider now the min-plus convolution $R_{g}^{x,\mathtt{B}}\otimes\beta_g$. For $t> t_0+u$, we have
      \begin{align*}
         (R_{g}^{x,\mathtt{B}}\otimes\beta_g)(t) &= \inf_{0\le s \le t}\beta_g(s) + R_g^{x,\mathtt{B}}(t-s)\\
         & =\min\left(\begin{aligned}
            \inf_{0\le s \le u} \beta_g(s)+R_g^{x,\mathtt{B}}(t-s),\\
            \inf_{u<s\le t}\beta_g(s)+R_g^{x,\mathtt{B}}(t-s)\end{aligned}\right)
      \end{align*}
      
      For $s\in[0,u]$, $\beta_g(s) \ge 0$ and $t-s \ge t-u > t_0$, so $R_g^{x,\mathtt{B}}(t-s)\ge L_g^{\min}$ \eqref{eq:proof:thm:ir-service-curves:limit-ir-residual:r-in-increase}.
      For $s\in ]u,t]$, $\beta_g(s) \ge L_g^{\min}$ \eqref{eq:proof:thm:ir-service-curves:limit-ir-residual:beta-increase} and $R_g^{x,\mathtt{B}}(t-s)\ge 0$.
      Hence, $\forall t > t_0+u$,
      \begin{equation}
         (R_{g}^{x,\mathtt{B}}\otimes\beta_g)(t) \ge \min(0+L_g^{\min}, L_g^{\min}+0) = L_g^{\min}
      \end{equation}

      By assumption, $\beta_g$ is an individual service curve of the \ac{IR} for $g$, so 
      \begin{equation}\label{eq:proof:thm:ir-service-curves:limit-ir-residual:r-out-min}
         \forall t > t_0+u, \quad R_{g}^{x,\mathtt{D}}(t) \ge (R_{g}^{x,\mathtt{B}}\otimes\beta_g)(t) \ge L_g^{\min}
      \end{equation}

      The \ac{IR} is causal, \ac{FIFO} and lossless, and its output is packetized. 
      Hence Equation~\eqref{eq:proof:thm:ir-service-curves:limit-ir-residual:r-out-min} implies that the first packet of $g$ exits the \ac{IR} before $t_0 +u$, \ie its delay through the \ac{IR} is upper-bounded by $u$.      
   \end{proof}

   Let us now state the second lemma:
   \begin{lemma}\label{lemma:proof:thm:ir-service-curves:limit-ir-residual:unbounded-first-delay}
      For any duration $M\in\mathbb{R}^+$, there exists an acceptable trajectory $x_M\in X'$ such that the delay of the first packet of $g$ within the \ac{IR} is greater than $M$.
   \end{lemma}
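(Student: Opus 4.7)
The plan is to reuse the Spring adversarial sequence $\mathscr{B}^1$ from Appendix~\ref{proof:thm:instab}, which already produces an unbounded build-up of delay inside the \ac{IR}, and to insert a single packet of $g$ of size $L_g^{\min}$ at a well-chosen position inside this sequence. Because the \ac{IR} is \ac{FIFO}, the inserted $g$ packet cannot be released before all the Spring packets that lie ahead of it in the queue, so it inherits the accumulated Spring delay, which can be made arbitrarily large.

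First I would fix $M>0$, pick Spring's constants $I,d,\epsilon,\tau$ as in \eqref{eq:spring-constants} for any arbitrary $D>0$, and define the input sequence $\mathscr{B}^{x_M}$ at $\mathtt{B}$ by keeping Spring's sequence $\mathscr{B}^1$ for $f_1,f_2,f_3$ and inserting at time $t_g \triangleq B^1_{6k+1}+\epsilon'$, with $\epsilon'\in(0,I-d)$ so that $t_g < B^1_{6k+2}$, a single packet of $g$ of length $L_g^{\min}$. The index $k\in\mathbb{N}$ is to be fixed at the end. All flows $h\in\mathcal{F}\backslash\{f_1,f_2,f_3,g\}$ send no packets in $x_M$, which trivially satisfies any arrival-curve assumption $\alpha_h^{\mathtt{B}}$, and $x_M\in X'$ because the first (and only) packet of $g$ has size $L_g^{\min}$.

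Next I would verify acceptability and compute the delay. Since $g$ is disjoint from $f_1,f_2,f_3$, the insertion leaves the cumulative arrival functions of $f_1,f_2,f_3$ at $\mathtt{B}$ unchanged; a direct inspection of Table~\ref{tab:spring:sequences} shows that each $f_i$ at $\mathtt{B}$ is $\gamma_{r,b}$-constrained (the intra-period minimum spacing equals $I=b/r$ and the across-period spacing $\tau-I = 2I+3\epsilon-d$ is strictly larger), hence also $\gamma_{r,b_i}$-constrained for every $b_i\ge b$. For packet indices $n\le 6k+1$ the per-flow sub-sequences used by the \ac{IR} model of Appendix~\ref{sec:sysmodel:traffic-regulators} coincide with those of Spring's Trajectory~1, so the induction of Lemma~\ref{lemma:proof:spring:d-min} carries over verbatim and yields $D^{x_M}_{6k+1}\ge B^1_1+3kI$. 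The \ac{FIFO} property built into the \ac{IR} model then gives $D^{x_M}_{6k+2}\ge D^{x_M}_{6k+1}$, and using $B^1_{6k+1}=2d+k\tau$ with $\tau=3I+3\epsilon-d$ I would deduce
\begin{equation*}
D^{x_M}_{6k+2} - t_g \;\ge\; B^1_1 + 3kI - (2d+k\tau) - \epsilon' \;=\; B^1_1 - 2d - \epsilon' + k(d-3\epsilon).
\end{equation*}
Since $d-3\epsilon>0$ by \eqref{eq:spring-constants}, choosing $k$ large enough makes the right-hand side exceed $M$, which is exactly what is needed.

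The main subtle point is the verification just above: one has to be sure that inserting an extra packet of a fourth flow in the middle of Spring's sequence neither breaks the arrival-curve assumptions on $f_1,f_2,f_3$ nor invalidates Lemma~\ref{lemma:proof:spring:d-min}. Both concerns are resolved by the fact that $g$ lies in a flow disjoint from $\{f_1,f_2,f_3\}$: the per-flow cumulative arrivals of $f_1,f_2,f_3$ at $\mathtt{B}$ and the per-flow sub-sequences used by the \ac{IR} model for these flows are unaltered, and Lemma~\ref{lemma:proof:spring:d-min} relies only on these sub-sequences together with the generic FIFO ordering that the \ac{IR} enforces regardless of the $g$ packet.
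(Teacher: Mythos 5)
Your approach is the same as the paper's (reuse Spring's sequence, insert a single $g$ packet between indices $6k+1$ and $6k+2$, invoke \acs{FIFO} and Lemma~\ref{lemma:proof:spring:d-min}, send $k\to\infty$), and the delay arithmetic is correct. However, there is a genuine gap in the acceptability check, and it is exactly at the point you flag as ``subtle.''

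You assert that ``each $f_i$ at $\mathtt{B}$ is $\gamma_{r,b}$-constrained (the intra-period minimum spacing equals $I=b/r$).'' This is false for $f_1$. Reading $B^1$ and $F^{1,\mathtt{B}}$ from Table~\ref{tab:spring:sequences}, the two $f_1$ packets of period $k$ arrive at $\mathtt{B}$ at $2d+k\tau$ and $I+d+k\tau$, i.e., with spacing $I-d<I$. This is not an oversight of Spring but its entire mechanism: the non-\acs{FIFO} reordering inside $S_1$ compresses $f_1$'s inter-arrival time at $\mathtt{B}$, which is precisely what forces the \acs{IR} to shape and build up backlog. Because $f_1$ is not $\gamma_{r,b}$-constrained at $\mathtt{B}$, you cannot deduce it is $\gamma_{r,b_1}$-constrained merely from $b_1\ge b$; and picking $D>0$ ``arbitrary'' does not save this, since $d<D$ can then be close to $I$ and make the $f_1$ burstiness at $\mathtt{B}$ arbitrarily worse. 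The lemma's trajectory must live in the set of Theorem~\ref{thm:ir-service-curves:limit-ir-residual}, where $\gamma_{r,b_i}$ is required to be an arrival curve for $f_i$ at $\mathtt{B}$, so the constraint is not optional.

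The paper closes this gap by \emph{choosing} $D\triangleq(b_1-b)/r$ (strictly positive only because $b_1>b$) and then verifying that the compressed spacing $I-d>I-D$ together with packet size $b$ yields exactly $\gamma_{r,b_1}$ as an arrival curve for $f_1$ at $\mathtt{B}$. This is the step where the hypothesis $b_1>b$ of the theorem is actually consumed. Your write-up never uses $b_1>b$, which is a strong hint the acceptability argument is wrong: with $b_1=b$ the lemma fails (there is no slack at $\mathtt{B}$ to absorb the Spring reordering), so any correct proof must invoke the strict inequality somewhere. To repair your proof, replace ``for any arbitrary $D>0$'' by $D\triangleq(b_1-b)/r$, and replace the false $\gamma_{r,b}$ claim for $f_1$ at $\mathtt{B}$ by the computation showing $2b<r(I-d)+b_1$, which is the paper's Equations around~\eqref{eq:proof:lemma:proof:thm:ir-service-curves:limit-ir-residual:unbounded-first-delay:D}.
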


   \begin{proof}[Proof of Lemma~\ref{lemma:proof:thm:ir-service-curves:limit-ir-residual:unbounded-first-delay} \label{proof:lemma:proof:thm:ir-service-curves:limit-ir-residual:unbounded-first-delay}]
      For the three flows $f_1,f_2,f_3$, consider the Trajectory 3 of Table~\ref{tab:spring:sequences} with
      \begin{equation}
         D \triangleq  \frac{b_1}{r} - \frac{b}{r}
      \end{equation}
      and the rest of the Spring parameters $I,d,\epsilon,\tau$ as in Equation~\eqref{eq:spring-constants}.
      $D>0$ because $b_1>b$.
      Note that 
      \begin{equation}\label{eq:proof:lemma:proof:thm:ir-service-curves:limit-ir-residual:unbounded-first-delay:D}
         D =  I -\frac{2b}{r} + \frac{b_1}{r}
      \end{equation}

      In Trajectory 3, the size of the packets is $b$ and the duration between any two consecutive packets of $f_2$ [\resp $f_3$] is at least $I=b/r$ at $\mathtt{B}$, so $\gamma_{r,b_2}$ [\resp $\gamma_{r,b_3}$] is an arrival of $f_2$ [\resp $f_3$] at $\mathtt{B}$ in Trajectory 3 (because $b_2\ge b$ and $b_3 \ge b$).
      Similarly, the duration between any two packets of $f_1$ of size $b$ is $I-d$ and
      \begin{align*}
         -D &< -d &&\quad\triangleright \eqref{eq:spring-constants}\\
         -I +\frac{2b-b_1}{r} &< -d &&\quad\triangleright\eqref{eq:proof:lemma:proof:thm:ir-service-curves:limit-ir-residual:unbounded-first-delay:D}\\
         2b &<r(I-d)+b_1 &&
      \end{align*}
      This shows that $\gamma_{r,b_1}$ is an arrival curve for $f_1$ at $\mathtt{B}$ in Trajectory 3.

      From Trajectory 3, we obtain trajectory $x_M$ by inserting a packet of $f_3$, of size $L_g^{\min}$, between the ($6k+1$)th and ($6k+2$)th packets of Trajectory 3, with 
      \begin{equation}\label{eq:proof:lemma:proof:thm:ir-service-curves:limit-ir-residual:unbounded-first-delay:k}
         k \triangleq \left\lceil \frac{M-d+I}{d-3\epsilon}\right\rceil + 1
      \end{equation}
      where $\lceil\cdot\rceil$ is the ceiling function.
      This packet takes the index $6k+2$ in trajectory $x_M$, and we have
      \begin{equation}\label{eq:proof:lemma:proof:thm:ir-service-curves:limit-ir-residual:unbounded-first-delay:input}
         B^3_{6k+1} \le B^{x_M}_{6k+2} \le B^{3}_{6k+2} 
      \end{equation}
      No other flow $h\in\mathcal{F}\backslash\{f_1,f_2,f_3,g\}$ sends any packet in Trajectory $x_M$.
      For the three flows $f_1$, $f_2$, $f_3$, Trajectory $x_M$ describes the same packet sequence in $\mathtt{B}$ as in Trajectory 3.
      Hence, $\gamma_{r_,b_1},\gamma_{r,b_2},\gamma_{r,b_3}$ are also arrival curves of $f_1$, $f_2$, $f_3$ in Trajectory $x_M$.
      We define $\mathscr{D}^{x_M}$ as the result of the \ac{IR}'s input-output model (Section~\ref{sec:appendix:notations-proofs}) on input $\mathscr{B}^{x_M}$.
      Hence, $x_M$ is an acceptable trajectory, and, by definition $x_M \in X'$.

      The packet sequences $\mathscr{B}^{x_M}$ and $\mathscr{B}^{3}$ are identical until index $6k+1$ included and the \ac{IR} is causal.
      Hence, the packet sequences $\mathscr{D}^{x_M}$ and $\mathscr{D}^{3}$ are also identical until index $6k+1$ included.
      As the \ac{IR} is \ac{FIFO}, we obtain
      \begin{equation}\label{eq:proof:lemma:proof:thm:ir-service-curves:limit-ir-residual:unbounded-first-delay:output}
         D^{x_M}_{6k+2} \ge D^{x_M}_{6k+1} = D^{3}_{6k+1}
      \end{equation}
      
      By combining \eqref{eq:proof:lemma:proof:thm:ir-service-curves:limit-ir-residual:unbounded-first-delay:input} and \eqref{eq:proof:lemma:proof:thm:ir-service-curves:limit-ir-residual:unbounded-first-delay:output}, we obtain
      \begin{align}
         D^{x_M}_{6k+2} - B^{x_M}_{6k+2} &\ge D^{3}_{6k+1} - B^{3}_{6k+2} &&\nonumber\\
         & \ge (d-I) + k(3I-\tau) \nonumber\\ 
            &\qquad\qquad\qquad\qquad\qquad\triangleright\text{Table }\ref{tab:spring:sequences}\nonumber\\
         &\ge (d-I) + k(d-3\epsilon) \nonumber\\ 
            &\qquad\qquad\qquad\qquad\qquad\triangleright\eqref{eq:spring-constants}\nonumber\\
         & \ge (d-I) + \frac{M-d+I}{d-3\epsilon} (d-3\epsilon) \nonumber\\
            &\qquad\qquad\qquad\qquad\qquad\triangleright\eqref{eq:proof:lemma:proof:thm:ir-service-curves:limit-ir-residual:unbounded-first-delay:k}\nonumber\\
         D^{x_M}_{6k+2} - B^{x_M}_{6k+2} & \ge M &&
      \end{align}
      This gives the result of the Lemma.
   \end{proof}

   To prove Theorem~\ref{thm:ir-service-curves:limit-ir-residual}, we now simply combine Lemma~\ref{lemma:proof:thm:ir-service-curves:limit-ir-residual:unbounded-first-delay} with the contrapose of Lemma~\ref{lemma:proof:thm:ir-service-curves:limit-ir-residual:upper-bound-first-packet}.

   Lemma~\ref{lemma:proof:thm:ir-service-curves:limit-ir-residual:unbounded-first-delay} states that for any $u\in\mathbb{R}^+$, there exists a trajectory $x\in X'$ such that the delay of the first packet of $g$ is not upper-bounded $u$ (it is strictly larger than $u$).
   This means that the consequent of Lemma~\ref{lemma:proof:thm:ir-service-curves:limit-ir-residual:upper-bound-first-packet} is false.
   By contrapose, the antecedent is also false.
   % we prove that for all $u\in\mathbb{R}^+$, $\beta_g(u)\le L_g^{\min}$.

\end{proof}
\subsection{Proof of Theorem~\ref{thm:ir-service-curves:limit-sc}}

\begin{proof}[Proof of Theorem~\ref{thm:ir-service-curves:limit-sc}]\label{proof:thm:ir-service-curves:limit-sc}

    Consider an \ac{IR} that processes an aggregate $\mathcal{F}$ of at least four flows with leaky-bucket shaping shaping curves $\{\sigma_h\}_{h\in\mathcal{F}}$.
    Assume that at least three flows $f_1,f_2,f_3 \in \mathcal{F}$ share the same leaky-bucket shaping curve: $\forall f_i\in\{f_1,f_2,f_3\}, \sigma_{f_i}=\gamma_{r,b}$.
    Consider now a function $\beta\in\mathfrak{F}_0$. 
    $\beta$ can depend on $\{\sigma_h\}_{h\in\mathcal{F}}$.
    Assume that the \ac{IR} offers $\beta$ as a context-agnostic service curve.
 
    For any acceptable trajectory $x$ with no assumption on the context of the \ac{IR}, we have
    \begin{equation}
         \forall t\ge0,\quad R^{x,\mathtt{D}}(t) \ge (R^{x,\mathtt{B}}\otimes \beta)(t)
    \end{equation}
    where $R^{x,\mathtt{B}}:t\mapsto \sum_{h\in\mathcal{F}}R_h^{x,\mathtt{B}}(t)$ is the cumulative function of the aggregate at $\mathtt{B}$, the \ac{IR}'s input (Figure~\ref{fig:sysmodel:traffic-regulators:simple-ir}), in Trajectory $x$.
    And $R^{x,\mathtt{D}}:t\mapsto \sum_{h\in\mathcal{F}}R_h^{x,\mathtt{D}}(t)$ is the cumulative function at $\mathtt{D}$, the \ac{IR}'s output, in Trajectory $x$.
 
    In $\mathcal{F}\backslash\{f_1,f_2,f_3\}$, choose arbitrarly one flow and call it $g$.
    For a given set of arrival curves $\{\alpha_h^{\mathtt{B}}\}_{h\in\mathcal{F}\backslash\{g\}}$, we define
    \begin{equation}
        \alpha_{\cancel{g}} \triangleq \sum_{h\in\mathcal{F}\backslash\{g\}}\alpha_h^{\mathtt{B}}
    \end{equation}
    where $\alpha_h^{\mathtt{B}}$ is an arrival curve for $h$ at the input $\mathtt{B}$ of the \ac{IR}. 
    For each $\theta\in\mathbb{R}^+$, we also define $\beta_{\theta}$ as:
    \begin{equation}\label{eq:proof:thm:ir-service-curves:limit-sc:beta-theta}
        \beta_{\theta}: t\mapsto \left|\beta(t)-\alpha_{\cancel{g}}(t-\theta)\right|^+ \mathbbm{1}_{\{t>\theta\}}
    \end{equation}
    where $\mathbbm{1}_{t>\theta}$ equals 1 when $t>\theta$, 0 otherwise.
    We finally define $\overline{\beta}_{\theta}$ as:
    \begin{equation}\label{eq:proof:thm:ir-service-curves:limit-sc:beta-theta:overline}
        \overline{\beta}_{\theta} = \beta_{\theta} \overline{\oslash} \boldsymbol{0} : t \mapsto \inf_{s\ge t}\{\beta_{\theta}(s)\}
    \end{equation}
    where $\overline{\oslash}$ is the max-plus deconvolution (Table~\ref{tab:motations}) and $\boldsymbol{0}$ is the zero function: $\forall t\ge 0,\ \boldsymbol{0}(t)=0$.

    \begin{lemma}\label{lemma:proof:thm:ir-service-curves:limit-sc:beta-theta-is-indiv-sc}
        If, for each flow $h\in\mathcal{F}\backslash\{g\}$, $\alpha_h^{\mathtt{B}}$ is an arrival curve of $h$ at the input of the \ac{IR}, then for each $\theta \ge 0$, the \ac{IR} offers $\overline{\beta}_{\theta}$ as an individual service curve for $g$.
    \end{lemma}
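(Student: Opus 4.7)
The plan is to combine two well-known network-calculus facts. First, I will use the FIFO residual service-curve result \cite[Prop.~6.4.1]{leboudecNetworkCalculus2001} (already invoked informally in the text preceding the lemma): when a FIFO system offers the aggregate service curve $\beta$ and every non-target flow $h\in\mathcal{F}\backslash\{g\}$ has arrival curve $\alpha_h^{\mathtt{B}}$ at the input, then for any $\theta\ge 0$ the curve $\beta_{\theta}$ defined in \eqref{eq:proof:thm:ir-service-curves:limit-sc:beta-theta} satisfies the service-curve inequality \eqref{eq:background:indiv-sc:def} for $g$. The IR is FIFO by its definition in Section~\ref{sec:related-work:traffic-reg-sff}, so this applies directly. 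The aggregated curve $\alpha_{\cancel{g}}$ is an arrival curve for the aggregate of all flows other than $g$, which is exactly the quantity that appears in the proof of that proposition.

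The catch is that $\beta_{\theta}$ need not be wide-sense increasing (the term $\alpha_{\cancel{g}}(t-\theta)$ grows with $t$, which can make $\beta_{\theta}$ decrease), so condition (a) of the individual-service-curve definition in Section~\ref{sec:background:indiv-sc} may fail. This is where $\overline{\beta}_{\theta}$ from \eqref{eq:proof:thm:ir-service-curves:limit-sc:beta-theta:overline} comes in. By construction $\overline{\beta}_{\theta}(t)=\inf_{s\ge t}\beta_{\theta}(s)$ is the largest wide-sense increasing function that lies pointwise below $\beta_{\theta}$, following the trick of \cite[\S 5.2.1]{bouillardDeterministicNetworkCalculus2018} already cited in the excerpt. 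I will check (i) $\overline{\beta}_{\theta}\in\mathfrak{F}_0$ (since $\beta_{\theta}\ge 0$ and $\beta_{\theta}(0)=0$, the infimum at $t=0$ is pinched between $0$ and $\beta_{\theta}(0)=0$), and (ii) $\overline{\beta}_{\theta}$ is wide-sense increasing (the infimum is taken over a shrinking set as $t$ grows).

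Then the proof closes by monotonicity of the min-plus convolution: from $\overline{\beta}_{\theta}\le\beta_{\theta}$ pointwise we get
\begin{equation}
(R_g^{x,\mathtt{B}}\otimes\overline{\beta}_{\theta})(t)\le (R_g^{x,\mathtt{B}}\otimes\beta_{\theta})(t)\le R_g^{x,\mathtt{D}}(t),
\end{equation}
the last inequality being the residual-FIFO bound obtained in the first step. Together with the two checks above, this establishes that $\overline{\beta}_{\theta}$ satisfies both conditions (a) and (b) of the individual-service-curve definition of Section~\ref{sec:background:indiv-sc} for the flow $g$.

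The main obstacle I anticipate is not mathematical but bookkeeping: making explicit that the cited Prop.~6.4.1 (stated in \cite{leboudecNetworkCalculus2001} for two flows with one arrival curve) extends to an arbitrary number of flows by aggregating the arrival curves into $\alpha_{\cancel{g}}=\sum_{h\neq g}\alpha_h^{\mathtt{B}}$, which is valid because $\alpha_{\cancel{g}}$ is an arrival curve for $R^{x,\mathtt{B}}-R_g^{x,\mathtt{B}}=\sum_{h\neq g}R_h^{x,\mathtt{B}}$ by sub-additivity of arrival curves. Once this is stated cleanly, the remainder of the argument is a short chain of pointwise inequalities.
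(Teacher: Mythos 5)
Your proof is correct and follows essentially the same route as the paper: apply \cite[Prop.~6.4.1]{leboudecNetworkCalculus2001} to obtain that $\beta_\theta$ satisfies the service-curve inequality for $g$, then pass to the non-decreasing lower hull $\overline{\beta}_\theta=\beta_\theta\overline{\oslash}\boldsymbol{0}$, which still satisfies the inequality because $\overline{\beta}_\theta\le\beta_\theta$ pointwise and min-plus convolution is monotone. You simply spell out a few bookkeeping details (the pinching argument for $\overline{\beta}_\theta(0)=0$, the explicit monotonicity step, and the reduction to a single aggregate arrival curve $\alpha_{\cancel{g}}$) that the paper leaves implicit.
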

    \begin{proof}[Proof of Lemma~\ref{lemma:proof:thm:ir-service-curves:limit-sc:beta-theta-is-indiv-sc}]
        The \ac{IR} is a \acs{FIFO} system that offers the service curve $\beta$ to the aggregate. 
        By \cite[Prop. 6.4.1]{leboudecNetworkCalculus2001}, for each $\theta\in\mathbb{R}^+$, the function $\beta_{\theta}$ verifies Equation~\eqref{eq:background:indiv-sc:def}.
        
        However, for $\theta\in\mathbb{R}^+$, $\beta_{\theta}$ may not be wide-sense increasing, thus not an individual service curve for $g$.
        But $\overline{\beta}_{\theta}$ is wide-sense increasing by construction and $\forall \theta \in\mathbb{R}^+, \forall t\ge 0, \overline{\beta}_{\theta}(t)\le\beta_{\theta}(t)$.
        Hence, for $\theta\in\mathbb{R}^+$,  $\overline{\beta}_{\theta}$ also verifies Equation~\eqref{eq:background:indiv-sc:def}, thus is an individual service curve of $S$ for $g$.
    \end{proof}

    Take now $\epsilon >0$ and denote  by $X_{\epsilon}$ the subset of acceptable trajectories such that:
    \begin{itemize}
        \item $\gamma_{r,b+\epsilon}$ is an arrival curve for $f_1$ at $\mathtt{B}$, and
        \item $\gamma_{r,b}$ is an arrival curve for $f_2$ at $\mathtt{B}$, and
        \item $\gamma_{r,b}$ is an arrival curve for $f_3$ at $\mathtt{B}$, and
        \item $\forall h\in\mathcal{F}\backslash\{f_1,f_2,f_3,g\}$, $\forall t\ge 0$, $R_h^{x,\mathtt{B}}=R_h^{x,\mathtt{D}}=0$
    \end{itemize}

    For all trajectories of $X_{\epsilon}$, we have
    \begin{equation}\label{eq:proof:thm:ir-service-curves:limit-sc:alpha-2}
        \alpha_{\cancel{g}} = \gamma_{3r,3b+\epsilon}
    \end{equation}

    By applying Lemma~\ref{lemma:proof:thm:ir-service-curves:limit-sc:beta-theta-is-indiv-sc}, for each $\theta \in \mathbb{R}^+$, $\overline{\beta}_{\theta}$ is an individual service curve of the \ac{IR} for $g$.
    In addition, $b+\epsilon>b$ so the subset of trajectories $X_{\epsilon}$ meets the assumptions of Theorem~\ref{thm:ir-service-curves:limit-ir-residual}.
    By Theorem~\ref{thm:ir-service-curves:limit-ir-residual}, we have, $\forall\theta\ge0, \forall t\ge 0$,
    \begin{alignat*}{3}
        \overline{\beta}_{\theta}(t) &\le L^{\min}_g & \\
        \inf_{s\ge t}\{\beta_{\theta}(s)\}&\le L_g^{\min} &\quad \triangleright\ \eqref{eq:proof:thm:ir-service-curves:limit-sc:beta-theta:overline}\\
        \inf_{s\ge t}\left\lbrace\left|\beta(s)-\alpha_{\cancel{g}}(s-\theta)\right|^+ \mathbbm{1}_{\{s>\theta\}}\right\rbrace&\le L^{\min}_g &\quad \triangleright\ \eqref{eq:proof:thm:ir-service-curves:limit-sc:beta-theta}\\
    \end{alignat*}
    If we limit to $t>\theta$, we obtain, $\forall\theta\ge 0, \forall t>\theta$,
    \begin{equation}\label{eq:proof:thm:ir-service-curves:limit-sc:inf-un-removed}
        \inf_{s\ge t}\left\lbrace\left|\beta(s)-\alpha_{\cancel{g}}(s-\theta)\right|^+\right\rbrace \le L^{\min}_g
    \end{equation}
    For any $x\in\mathbb{R}$, $x\le |x|^+$, hence $\forall\theta\ge 0, \forall t>\theta$, 
    \begin{equation}\label{eq:proof:thm:ir-service-curves:limit-sc:removing-bars}
        \inf_{s\ge t}\left\lbrace\beta(s)-\alpha_{\cancel{g}}(s-\theta)\right\rbrace \le \inf_{s\ge t}\left\lbrace\left|\beta(s)-\alpha_{\cancel{g}}(s-\theta)\right|^+\right\rbrace
    \end{equation}
    By combining Equations~\eqref{eq:proof:thm:ir-service-curves:limit-sc:inf-un-removed} and \eqref{eq:proof:thm:ir-service-curves:limit-sc:removing-bars}, we obtain $\forall\theta\ge 0, \forall t>\theta$,
    \begin{alignat}{3}
        \inf_{s\ge t}\left\lbrace\beta(s)-\alpha_{\cancel{g}}(s-\theta)^+\right\rbrace &\le L^{\min}_g & \nonumber \\
        \inf_{s\ge t}\left\lbrace\beta(s)-\gamma_{3r,3b+\epsilon}(s-\theta)\right\rbrace &\le L^{\min}_g &\quad \triangleright\ \eqref{eq:proof:thm:ir-service-curves:limit-sc:alpha-2} \nonumber\\
        \inf_{s\ge t}\left\lbrace\beta(s)-3rs\right\rbrace &\le L^{\min}_g -3r\theta + 3b+\epsilon & \label{eq:proof:thm:ir-service-curves:limit-sc:with-epsilon}
    \end{alignat}
    Equation~\eqref{eq:proof:thm:ir-service-curves:limit-sc:with-epsilon} is valid for any $\epsilon > 0$. Hence,
    \begin{alignat}{3}
        \forall \theta\ge0,\forall t>\theta, \quad &\inf_{s\ge t}\left\lbrace\beta(s)-3rs\right\rbrace &&\le L^{\min}_g -3r\theta + 3b \nonumber\\
        \text{In particular,} \qquad & && \nonumber \\
        \forall t > \frac{L^{\min}_g}{3r}+\frac{b}{r}, \quad &\inf_{s\ge t}\left\lbrace\beta(s)-3rs\right\rbrace &&\le 0 \label{eq:proof:thm:ir-service-curves:limit-sc:inf-diff-0}
    \end{alignat}

    Denote $t_1 \triangleq  \frac{L^{\min}_g}{3r}+\frac{b}{r}$. 
    With the definition of the infimum, Equation~\eqref{eq:proof:thm:ir-service-curves:limit-sc:inf-diff-0} gives
    \begin{alignat}{2}
        & \forall t>t_1, \forall \epsilon' >0, \exists s\ge t, \quad &\beta(s)-3rs\le \epsilon'   \nonumber \\
        & \forall t>t_1, \forall \epsilon' >0, \exists s\ge t, \quad &\frac{\beta(s)}{s}\le \frac{\epsilon'}{s} + 3r \nonumber\\
        & \forall t>t_1, &\inf_{s\ge t}\frac{\beta(s)}{s} \le 3r \nonumber\\
        &\text{This shows} &\liminf_{t\rightarrow +\infty}\frac{\beta(t)}{t} \le 3r
    \end{alignat}
\end{proof}

\end{document}